\newcommand{\be}{\begin{equation}}
\newcommand{\ee}{\end{equation}}
\newcommand{\EE}{\mathbb{E}}
\newcommand{\RR}{\mathbb{R}}
\newtheorem{thm}{Theorem}
\newtheorem{defn}[thm]{Definition}
\newtheorem{rmk}[thm]{Remark}
\newtheorem{prop}[thm]{Proposition}
\newtheorem{lem}[thm]{Lemma}
\newtheorem*{lem*}{Lemma}
\newtheorem*{thm*}{Theorem}
\newtheorem{cor}[thm]{Corollary}
\newtheorem{Assumption}[thm]{Assumption}
\newcommand{\argmin}{\operatornamewithlimits{argmin}}
\begin{document}

\title[Vector nonlocal Euclidean median]{Vector Nonlocal Euclidean Median: Fiber Bundle Captures The Nature of Patch Space} 

\author{Chen-Yun Lin}
\address{Chen-Yun Lin\\
Department of Mathematics\\
University of Toronto}
\email{cylin@math.toronto.edu}

\author{Arin Minasian}
\address{Arin Minasian\\
Department of Electrical Engineering\\
University of Toronto}
\email{arin.minasian@mail.utoronto.ca}

\author{Xin Jessica Qi}
\address{Xin Jessica Qi\\
Department of Mathematics\\
University of Toronto}
\email{jxin.qi@alum.utoronto.ca}

\author{Hau-Tieng Wu}
\address{Hau-Tieng Wu\\
Department of Mathematics\\
University of Toronto}
\email{hauwu@math.toronto.edu}

\maketitle

\begin{abstract}
We extensively study the rotational group structure inside the patch space by introducing the fiber bundle structure. The rotational group structure leads to a new image denoising algorithm called the \textit{vector non-local Euclidean median} (VNLEM). The theoretical aspect of VNLEM is studied, which explains why the VNLEM and the traditional non-local mean/non-local Euclidean median (NLEM) algorithm work. The numerical issue of the VNLEM is improved by taking the orientation feature in the commonly applied scale-invariant feature transform (SIFT), and a theoretical analysis of the robustness of the orientation feature in the SIFT is provided. The VNLEM is applied to an image database of 1,361 images and a comparison with the NLEM is provided. Different image quality assessments based on the error-sensitivity or the human visual system are applied to evaluate the performance. The results confirmed the potential of the VNLEM algorithm.
\end{abstract}

\section{Introduction}
\label{intro}
Image denoising is a long-lasting challenge in the image processing field. Much effort has been invested in this problem in the past decades. While there are several approaches to handle this problem, like the variational approach, the wavelet approach, the partial differential equation approach, etc (see \cite{Buades_Coll_Morel:2010,Jain_Tyagi:2016} for an overall survey), we focus on the idea of nonlocal filtering and its associated theoretical analysis in this paper.

Based on the idea that pixels spatially far apart in an image can be similar or even the same, Buades et al. pioneered the nonlocal mean (NLM) filters \cite{buades01} to denoise a noisy image. The motivation for the NLM could be summarized by taking the patch space into account \cite{Lee_Pedersen_Mumford:2003,Carlsson2007,Peyre2008,Peyre2009,singer2009NLM,Chaudhury_Singer:2012,Chaudhury_Singer:2013,Perea_Carlsson:2014, Osher_Shi_Zhu:2016,Yin_Gao_Lu_Dubechies:2016}. For the $i$-th pixel on a given image $I$, we could associate  it with a patch $P_i$ of size $q\times q$, where $q$ is the patch size determined by the user. Mathematically, a patch is the restriction of the image on a subset around $i$, so that the $i$-th pixel of $I$ is the central pixel of $P_i$. Denote the set of all patches as $\mathcal{X}_I$. The main assumption is that $\mathcal{X}_I$ is located on, or could be well approximated by, a low dimensional geometric object, like a manifold. By viewing a $q\times q$ patch as a $q^2$-dim vector, the manfiold is a subset of the Euclidean space $\mathbb{R}^{q^2}$, and we could endow an induced Riemannian metric on the manifold from $\mathbb{R}^{q^2}$. Under this assumption, two patches with a similar intensity, while they might be far apart in the image, are close in the intrinsic geometry of the manifold.

Under this low-dimensional and nonlinear patch space structure, the NLM algorithm is introduced \cite{Buades_Coll_Morel:2010,singer2009NLM}. In brief, in the NLM algorithm, to denoise a given pixel $i$, we find the $m\in \mathbb{N}$ nearest neighboring patches of the patch $P_i$, denoted as $\mathcal{N}_i$, with respect to the Euclidean distance and then denoise the $i$-th pixel by evaluating the mean of all central pixels of those patches in $\mathcal{N}_i$. It has been well known that the NLM algorithm leads to better edge preservation \cite{Jain_Tyagi:2016}, and this improvement is directly related to the diffusion process on the nonlinear geometric structure \cite{singer2009NLM}. The NLM algorithm can be understood as reducing the noise influence on the patch space via the diffusion process \cite{singer2009NLM}.
Several generalizations of NLM follow based on this diffusion idea. By noting that the mean operator is sensitive to the outliers, the authors in \cite{Chaudhury_Singer:2012} considered replacing the mean in the NLM by the median, which leads to the nonlocal Euclidean median (NLEM). In brief, after finding the neighbors of $P_i$, the $i$-th pixel is denoised by evaluating the median of all central pixels of those patches in $\mathcal{N}_i$. It is shown in \cite{Chaudhury_Singer:2012} that the NLEM could tolerate more noises inside the noisy patches. This idea has been applied to the single-channel blind source separation problem to reconstruct the ``wave-shape function'' and decompose the fetal electrocardiogram signal from the maternal abdominal electrocardiogram signal \cite{Su_Wu:2016}. Furthermore, by noticing that the mean operator is equivalent to minimizing a functional based on the $L^2$ norm and the median operator is equivalent to minimizing a functional based on the $L^1$ norm, and by the need of enhancing the sparsity structure, the nonlocal patch regression (NLPR) is considered in \cite{Chaudhury_Singer:2013}, which replaces the $L^1$ norm in the associated functional by the $L^p$ norm, where $0<p<1$. 
We mention that the above model and algorithm have been applied to different fields, like the medical imaging problem \cite{Chan_Fulton_Feng_Meikle:2010} and the inpainting problem  \cite{Gepshtein2013,Osher_Shi_Zhu:2016,Yin_Gao_Lu_Dubechies:2016,Zhou_et_al:2012}.

As successful as the patch space model and those diffusion-based algorithms are, there are structures in the patch space that we can consider to further improve the algorithm and theoretical problems we need to answer.
From the model perspective, there are structures in the patch space not considered in the past, particularly the rotational group structure. Since the central pixel of a patch is fixed after rotation, two patches could be viewed the same, or \textit{rotationally invariant}, if they are the same up to a rotation. Therefore, in the patch space model, we could take the rotational group into account. From the theoretical viewpoint, to the best of our knowledge, a study explaining why neighbors could be well approximated from the noisy patches in NLM/NLEM/NLPR was not available. Also, a discussion and explanation of how the patch size should be chosen is lacking. Furthermore, in the literature, the denoising performance is commonly evaluated by the ``error-based'' measurements, like the signal-to-noise ratio (SNR) or the peak SNR, but it has been well known that those error-based quantities might capture only partial information of the image quality, and more needs to be considered.

In this paper, we aim to advance the progress on these problems. We take the rotational group into account, and model the patch space by a fiber bundle. In this model, the set of rotationally invariant patches is modeled by a fiber that is diffeomorphic to $SO(2)$, and the collection of the set of rotationally invariant patches (or the orbits coming from the $SO(2)$ action), denoted as $\mathcal{X}_I/SO(2)$, is parametrized by a manifold, which is the base manifold of the fiber bundle. We then generalize the NLM/NLEM/NLPR algorithm by taking the fiber bundle structure into account, and call the new algorithm the vector nonlocal Euclidiean median (VNLEM). In the VNLEM, the rotationally invariant distance (RID) associated with the fiber bundle structure is considered so that two rotationally invariant patches will have RID equal to $0$. With the RID, we could determine the neighboring patches, and then evaluate the median value of the central pixels of all neighboring patches. Note that this leads to a dimensional reduction of the patch space, since we work with a 1-dim lower base manifold. Hence, we get more samples for the denoise purpose in the VNLEM, when compared with the NLM/NLEM/NLPR. 
From the theoretical perspective, we study how accurate we could estimate the neighborhood from the noisy patches, and provide a quantification in Theorem \ref{Theorem:RIDNeighborsEstimate}. In brief, we show that with high probability, which depends on the patch size and the noise level, we could accurately determine the neighborhood from the noisy patches under the RID or the Euclidean distance. By noting that the probability we could determine the correct neighbors depends on the patch size, we could explain why the patch space approach leads to a better denoising result compared with the pixel-based NLM or NLEM algorithm. On the other hand, we also discuss that the patch size cannot be too large, or the patch space will be too ``complicated'' so that the diffusion algorithm might fail.
From the algorithmic perspective, we need to handle the numerical problem for the VNLEM. Note that with the RID distance, the base manifold is no longer embedded in the ordinary Euclidean space, and the ordinary fast nearest neighbor search algorithms cannot be applied. As far as we know, there is no fast algorithm available to determine the neighbors under the RID metric. Our solution is via a relaxation step. We consider the commonly applied scale-invariant feature transform (SIFT) features to estimate candidates for the neighbors. Then we run the RID to determine the true neighbors. To guarantee the applicability of this relaxation, in addition to discussing the relationship between the RID distance and the neighbors determined by the SIFT features, we show that the SIFT features are robust to noise.
Finally, in addition to the ordinary error-based measurements, we consider image quality measurements that take the human visual system into account to evaluate the performance of the proposed VNLEM algorithm. The result is reported on a large scale image database consisting of 1,361 images.

The paper is organized as the following. In Section \ref{sec:model}, we introduce a rotationally invariant distance since the patch space allows a canonical rotation action. Then, we propose a principal bundle model for the patch space of an image and provide both continuous and discrete versions of our model. 
In Section \ref{sec:alg}, we give the VNLEM algorithm and discuss how we deal with numerical issues to make the algorithm computationally affordable. One main step is to use orientations in the SIFT algorithm to approximate rotation angles between patches. 
In Section \ref{sec:kNN}, we show that with high probability, we could accurately determine nearest neighbors of a clean patch through finding nearest neighbors of the associated noisy patch in the noisy patch space. In Section \ref{sec:SIFT}, we provide a mathematical definition of the orientation feature in the SIFT and show why such approximations are reliable when patches are noisy. 
The performance evaluation measurements are summarized in Section \ref{sec:img_quality}.
In Section \ref{sec:stats}, we show our numerical results.
In Appendix \ref{sec:DM}, we give a brief review of diffusion geometry which is used for dimension reduction to help find good nearest neighbors for image denoising. In Appendix \ref{sec:manifold_model}, we discuss a possible model under which we could approximate a patch space by a manifold .

\begin{table}[htbp]\caption{Table of commonly used notation throughout the paper}
\begin{center}
\begin{tabular}{r c p{10cm} }
\toprule
$\| \cdot \|$ & & $\ell^2$ norm \\
$\Phi_t^{(m)}$ & & DM with the diffusion time $t>0$ and the first $m$ non-trivial eigenvectors\\
$D_t^{(m)}(\cdot, \cdot)$ & & diffusion distance (DD) \\
$SO(2)$ & & rotation group\\
$O \in SO(2)$ & & rotation matrix\\
$d_{\texttt{RID}}(\cdot,\cdot)$ & & rotationally invariant distance\\
$\iota^{(\texttt{c})} : \mathbb{R}^2 \rightarrow \mathbb{R}$ & & continuous clean image\\
$p^{(\texttt{c})}_{\mathbf{x}} : \mathbb{R}^2 \rightarrow \mathbb{R}$ & & continuous round clean patch centered at $\mathbf{x}\in\mathbb{R}^2$ with radius $r$\\ 
$\iota^{(\texttt{n})} : \mathbb{R}^2 \rightarrow \mathbb{R}$ & & continuous noisy image\\
$p^{(\texttt{n})}_{\mathbf{x}} : \mathbb{R}^2 \rightarrow \mathbb{R}$ & & continuous round noisy patch centered at $\mathbf{x}\in\mathbb{R}^2$ with radius $r$\\ 
$\mathcal{X}^{(\texttt{c})}$ & & the patch space of image $\iota^{(\texttt{c})}$ \\
$I^{(\texttt{c})} \in \mathbb{R}^{N \times N}$ & & discrete clean image of size $N \times N$\\
$\xi (\cdot) $ & & $\xi(\cdot) \sim \mathcal{N}(0,1)$ i.i.d. Gaussian white noise \\$I^{(\texttt{n})} = I^{(\texttt{c})} + \sigma \xi $ & & discrete noisy image of size $N \times N$\\
$P_i^{(\texttt{c})} \in \mathbb{R}^{q\times q}$ & & clean patch of size $q \times q$ centered at $i$-pixel \\
$P_i^{(\texttt{n})} = P_i^{(\texttt{c})} + \sigma \xi$ & & noisy patch of size $q \times q$ centered at $i$-pixel \\
$\mathcal{X}^{(\texttt{c})}_I$ & & the patch space of image $I^{(\texttt{c})}$\\
$\mathcal{X}^{(\texttt{n})}_I$ & & the patch space of image $I^{(\texttt{n})}$\\
$G(\mathbf{s})$ & & Gaussian function of scale $1$ in the SIFT algorithm\\
$L(\mathbf{x}) = L_p(\mathbf{x})$  & & Gaussian smoothed patch of patch $p$\\
$L^{(\texttt{c})}$ & & Gaussian smoothed clean patch\\
$L^{(\texttt{n})}$ & & Gaussian smoothed noisy patch\\
$\theta^{(c)*}$ & & orientation assignment of a clean patch \\
$\theta^{(n)*}$ & & orientation assignment of a noisy patch\\
$C^\infty_c(\mathbb{R}^2)$ & & the space of $C^\infty$ functions with compact supports defined on $\mathbb{R}^2$\\
$\mathcal{D}'(\mathbb{R}^2)$ & & the space of distributions defined on $\mathbb{R}^2$\\
\bottomrule
\end{tabular}
\end{center}
\label{tab:TableOfNotationForMyResearch}
\end{table}

\section{Mathematical Model}
\label{sec:model}

We start from recalling the patch space commonly used in the image processing field \cite{Lee_Pedersen_Mumford:2003,Donoho2005,Carlsson2007,Peyre2008,Peyre2009,singer2009NLM,Chaudhury_Singer:2012,Chaudhury_Singer:2013,Perea_Carlsson:2014, Osher_Shi_Zhu:2016,Yin_Gao_Lu_Dubechies:2016}. Take a grayscale image $I$ of size $N$-pixels wide and $N$-pixels long, which is a real function defined on $\mathbb{Z}_N^2$, where  $\mathbb{Z}_N=\{1,2,\ldots,N\}$.
Usually we represent $I$ as a $N\times N$ matrix with real entries. In general, we could consider an image with different width and length, but to simplify the discussion, we limit our focus on square images in this paper. 
Call a point $(\alpha,\beta)\in \mathbb{Z}_N^2$ the $(\alpha,\beta)$-th pixel of the image, and $I(i,j)$ is the intensity of the grayscale image $I$.
Take an odd natural number $q$. For each pixel $(\alpha,\beta)\in \mathbb{Z}_N^2$, we associate it with a patch $P_{(\alpha,\beta)}\in \mathbb{R}^{q\times q}$, which is defined as
\begin{align}
&P_{(\alpha,\beta)}(k,l)\\
:=&\,\left\{
\begin{array}{ll}
I((\alpha+k-(q-1)/2,\beta+l-(q-1)/2))&\mbox{when }((\alpha+k-(q-1)/2,\beta+l-(q-1)/2))\in \mathbb{Z}_N^2\\
0& \mbox{otherwise},
\end{array}
\right.\nonumber
\end{align}
for $k,l=1,\ldots,q$. Specifically, we use the notation $P_{(\alpha,\beta)}(c)$ to indicate the central point of the $(\alpha,\beta)$-th patch, $P_{(\alpha,\beta)}((q+1)/2,(q+1)/2)$.

To express the notation in a compact format, we stack the columns of the matrix $I$ into a vector $I^\vee\in\mathbb{R}^{N^2}$, where the superscript $\vee$ means the {\it vector form}, that is, the $((\ell-1)N+1)$-th to the $(\ell N)$-th entries in $I^\vee$ is the $\ell$-th column of $I$, where $\ell=1,\ldots,N$. Similarly, denote $P_{(\alpha,\beta)}^\vee\in \mathbb{R}^{q^2}$ to be the vector form of the $(\alpha,\beta)$-th patch.
When there is no danger of confusion, we ignore the superscript $\vee$ and use the notation $I$ to represent the grayscale image in the matrix form and in the column form interchangeably, and denote $P_i:=P_{(\alpha,\beta)}^\vee\in \mathbb{R}^{q^2}$, where $i=(\alpha-1)q+\beta$. 
We have the following definition for the discretized patch space.
For a given grayscale image $I\in\mathbb{R}^{N\times N}$ and the patch size $q$, where $q$ is an odd integer number, the discretized patch space is defined as
\begin{equation}
\mathcal{X}_I:=\{P_i\}_{i=1}^{N^2}\subset \mathbb{R}^{q^2}.
\end{equation}

Besides the general consensus that we could approximate the patch space by the manifold model \cite{Lee_Pedersen_Mumford:2003,Peyre2008,Peyre2009,singer2009NLM,Chaudhury_Singer:2012,Chaudhury_Singer:2013,Osher_Shi_Zhu:2016,Yin_Gao_Lu_Dubechies:2016}, the structure of the patch space is less discussed, except the discussion of the Klein bottle in \cite{Carlsson2007,Perea_Carlsson:2014}. 
Inspired by the fact that two image patches might be the same up to a rotation, in \cite{Qi2015}, the rotation structure naturally existed in the patch space was taken into account. In addition to \cite{Qi2015}, the same orientation idea was considered in  \cite{Zimmer2008,Grewenig2011,Guizard2015,Sreehari2015,Zhang_He_Du:2016}, and has been applied to the neuroimaging analysis \cite{Manjon2012,Guizard2015}.

In this section, we introduce a fiber bundle structure for the patch space. To make clear how to incorporate the $SO(2)$ group into the model, and how to numerically rotate a patch, we start from a continuous setup. We define the continuous patch space as a topological space with the $SO(2)$ group structure. Then we discuss how to obtain the discrete model from a continuous one. Recall the definition of a fiber bundle with the group structure \cite{Gilkey:1980}. 

\begin{defn} [Fiber bundle with group structure $G$]
Let $F$ and $M$ be manifolds. A fiber bundle $E$ with fiber $F$ over $M$ consists of a topological space $E$ together with a map $\pi: E: \rightarrow M$ satisfying the local triviality condition. 
Let $G$ be a Lie group, for example the rotation group $SO(2)$. Let the map $. \, : G \times F \rightarrow F$  be a smooth left action of $F$. That is, the map (the action of $G$) $.\, : (x,g) \mapsto g.x$ from $G \times F$ to $F$
satisfies
\be 
e.g = g \mbox{ for all $g\in G$,}
\ee
where $e$ is the neutral element of $G$ and
\be
\label{compatibility}
(gh).x = g.(h.x)\mbox{ for all $x\in M$ and $g,h \in G$}.
\ee
\end{defn}
This group $G$ is often called the gauge group or the transition group. 
If the fiber is equal to the structure group, then $(\pi, E, M, G)$ is called a principal $G$ bundle. (See, for example, Definition 5.7 in \cite{Gallier2005}). This definition is equivalent to the definition of principal bundle requiring $G$ as a right action on $E$. (See, for example, Propositions 5.5 and 5.6 in \cite{Gallier2005}).

\subsection{Continuous Model}
Let $\iota^{(\texttt{c})}: \RR^2 \rightarrow \mathbb{R}$ be a $L^2$ function that represents an image. 
The round patches of a finite radius from the image $\iota^{(\texttt{c})}$ is defined below.  

\begin{defn}[Continuous patch and continuous patch space]\label{Definition:ContinuousPatchSpace}
Fix a $L^2$ image $\iota^{(\texttt{c})}$ and $r>0$. A patch centered at $\mathbf{x} \in \mathbb{R}^2$, denoted as $p^{(\texttt{c})}_{\mathbf{x}}$, is defined as 
\be
p^{(\texttt{c})}_{\mathbf{x}}(\mathbf{s}) = \iota^{(\texttt{c})}(\mathbf{x+s})\psi(\mathbf{s}), 
\ee
where $\mathbf{s} \in \mathbb{R}^2$ and $\psi\in C_c^\infty$ defined as
\be
\psi(\mathbf{s})=\left\{
\begin{array}{ll}
1&\mbox{when }\|\mathbf{s}\|_{\mathbb{R}^2}\leq 3r/2\\
0&\mbox{when }\|\mathbf{s}\|_{\mathbb{R}^2}> 2r
\end{array}
\right..
\ee
The (continuous) patch space associated with $\iota^{(\texttt{c})}$ is denoted as 
\be
\mathcal{X}^{(\texttt{c})} := \{ p^{(\texttt{c})}_{\mathbf{x}} : \mathbf{x} \in\mathbb{R}^2\} \subset L^2(\mathbb{R}^2).  
\ee
\end{defn}

The superscript ``$(c)$'' in the definition indicates that the image is clean. In the literature, it is common to assume that the patch space $\mathcal{X}^{(\texttt{c})}$ is located on, or could be approximated by, a low dimensional manifold \cite{Lee_Pedersen_Mumford:2003,singer2009NLM,Chaudhury_Singer:2012,Perea_Carlsson:2014, Osher_Shi_Zhu:2016}. We will make this assumption in this paper. 
%
To further capture the structure of the patch space, that is, two image patches might be the same up to a rotation, we could consider a $SO(2)$ action on the patch space.
For $O \in SO(2)$ and $\mathbf{s}\in \mathbb{R}^2$ expressed by a column vector, we define the action on $p_{\mathbf{x}}$ as
\be
       (O. p^{(\texttt{c})}_{\mathbf{x}})(\mathbf{s}) = p^{(\texttt{c})}_{\mathbf{x}}(O^{-1}\mathbf{s}).
\ee
This is a left group action since for any $O_1, O_2 \in SO(2)$,
\begin{align}
O_2.( O_1.p^{(\texttt{c})}_{\mathbf{x}})(\mathbf{s})  = O_1.p^{(\texttt{c})}_{\mathbf{x}}(O_2^{-1}\mathbf{s}) = p^{(\texttt{c})}_{\mathbf{x}}(O_1^{-1}O_2^{-1}\mathbf{s}) = (O_2O_1).p^{(\texttt{c})}_{\mathbf{x}}(\mathbf{s})
\end{align}

Since each fiber, including a patch and its rotated patches, can be identified as $S^1$, and $SO(2)$ is diffeomorphic to $S^1$,  the patch space $\mathcal{X}^{(\texttt{c})}$ could be viewed as a principal $SO(2)$ bundle. 

By identifying patches up to a rotation, we have the quotient space $\mathcal{X}^{(\texttt{c})}/SO(2)$.  
We make the following assumption 
\begin{Assumption}
The patch space $\mathcal{X}^{(\texttt{c})}$ is a subset of the fiber bundle $E=\mathcal{X}^{(\texttt{c})}$ with $\pi: E \rightarrow M$ and a left $SO(2)$ group action so that the quotient space $M:=\mathcal{X}^{(\texttt{c})}/SO(2)$ is a manifold. The $SO(2)$ action preserves the fiber that is diffeomorphic to $SO(2)$. 
\end{Assumption}

We consider the rotationally invariant distance (RID) to measure the similarity between patches. 
\begin{defn}
Let $p^{(\texttt{c})}_{\mathbf{x}_1}, p^{(\texttt{c})}_{\mathbf{x}_2}$ be two patches in $\mathcal{X}^{(\texttt{c})}$. The rotational invariant distance is defined as 
\be
d_{\texttt{RID}}(p^{(\texttt{c})}_{\mathbf{x}_1}, p^{(\texttt{c})}_{\mathbf{x}_2}) = \min_{O \in SO(2)} \left( \int_{\mathbb{R}^2} \left| p^{(\texttt{c})}_{\mathbf{x}_1}(\mathbf{s}) - O.p^{(\texttt{c})}_{\mathbf{x}_2}(s) \right|^2 d\mathbf{s}\right)^{1/2}. 
\ee
\end{defn}

The minimum in the RID could be achieved since $SO(2)$ is compact. Also note that the definition is equivalent to 
\be
d_{\texttt{RID}}(p^{(\texttt{c})}_{\mathbf{x}_1}, p^{(\texttt{c})}_{\mathbf{x}_2}) = \min_{O_1, O_2 \in SO(2)} \left( \int_{\mathbb{R}^2} \left| O_1.p^{(\texttt{c})}_{\mathbf{x}_1}(\mathbf{s}) - O_2.p^{(\texttt{c})}_{\mathbf{x}_2}(\mathbf{s}) \right|^2 d\mathbf{s} \right)^{1/2}, 
\ee
since 
\begin{align}
 \int_{\mathbb{R}^2} | O_1.p^{(\texttt{c})}_{\mathbf{x}_1}(\mathbf{s}) - O_2.p^{(\texttt{c})}_{\mathbf{x}_2}(\mathbf{s}) |^2 d\mathbf{s} 
 =& \, \int_{\mathbb{R}^2} | p^{(\texttt{c})}_{\mathbf{x}_1}(O_1^{-1}\mathbf{s}) - p^{(\texttt{c})}_{\mathbf{x}_2}(O_2^{-1}\mathbf{s}) |^2 d\mathbf{s} \nonumber \\
 =& \, \int_{\mathbb{R}^2} | p^{(\texttt{c})}_{\mathbf{x}_1}(\mathbf{s}) - p^{(\texttt{c})}_{\mathbf{x}_2}(O_2^{-1} O_1\mathbf{s}) |^2 d\mathbf{s}
\end{align}
by a change of variables and the fact that $O_1, O_2 \in SO(2)$.

Consider an isotropic homogeneous generalized Gaussian random field $\Phi$ with a finite variance defined on $\mathbb{R}^2$ to model the noise \cite[Chapter III.5]{Gelfand:1964}.\footnote{Recall the definition of $\Phi$. For any $\phi\in C_c^\infty(\mathbb{R}^2)$, $\Phi(\phi)$ is a random variable with mean $0$ and finite variance. For functions $\phi_1(x),\ldots,\phi_m(x)\in C_c^\infty(\mathbb{R}^2)$, any vector $v\in\mathbb{R}^2$, and any rotation or reflection $O$ of $\mathbb{R}^2$, the $m$-dimensional random variables
\be
(\Phi(\phi_1(x)),\ldots,\Phi(\phi_m(x))),\quad(\Phi(\phi_1(x+v)),\ldots,\Phi(\phi_m(x+v))),
\mbox{ and }(\Phi(\phi_1(O.x)),\ldots,\Phi(\phi_m(O.x)))\nonumber
\ee
are identically distributed.}
The noisy image is defined as
\be
\iota^{(\texttt{n})}=\iota^{(\texttt{c})}+\Phi\in \mathcal{D}'(\mathbb{R}^2)\,,\label{Definition:NoisyImage}
\ee
where we assume that the spectral measure of $\Phi$ \cite[p.264]{Gelfand:1964} is the same as $\sigma^2 d\xi$, $\sigma>0$, and $d\xi$ is the Lebesgue measure on $\mathbb{R}^2$. The superscript ``$(n)$'' in the definition indicates that the image is noisy.
By the same way as that in Definition \ref{Definition:ContinuousPatchSpace}, the noisy patch centered as $\mathbf{x}$ is denoted as 
\be
p^{(\texttt{n})}_{\mathbf{x}}=p^{(\texttt{c})}_{\mathbf{x}}+\psi \Phi, \label{definition:ContinuousNoisyPatch}
\ee
and the noisy patch space is denoted as $\mathcal{X}^{(\texttt{n})}$. Note that since $\Phi$ is a generalized random field and the cut-off function $\psi$ in (\ref{Definition:ContinuousPatchSpace}) is in $C_c^\infty(\mathbb{R}^2)$, the noisy patches are well-defined. However, in general the RID cannot be defined for two noisy patches in the continuous setup, since the noisy patches are distributions.

\subsection{Discrete Model}
We use a mollifier to create discrete patches from the continuous patches $p^{(\texttt{c})}_{\mathbf{x}}$ or $p^{(\texttt{n})}_{\mathbf{x}}$. Note that if an image is regular enough, like continuous, then the discretization could be easily achieved by evaluating the image at the designed grid points. For a $L^2$ or more general image, however, we need a mollifier to achieve this discretization. Note that we could consider a more general model for an image, like a distribution, but to simplify the discussion we focus on the $L^2$ image.

First, consider the following discretization map.
\begin{defn}
Let $\eta$ be a mollifier on $\mathbb{R}^2$, that is, 
\begin{itemize}
\item $\eta\in C_c^\infty$ with the unitary $L^2$ norm;
\item $\lim_{\epsilon \rightarrow 0} \eta^{\epsilon}(\mathbf{y}) = \delta$ in the weak sense, where $\displaystyle \eta^{\epsilon}(\mathbf{y}) := \frac{1}{\epsilon^2}\eta(\frac{\mathbf{y}}{\epsilon})$ and $\delta$ is the Dirac delta measure. 
\end{itemize}
For a fixed $\epsilon>0$ and a set of grid points $\mathcal{G}:=\{\mathbf{x}_i\}_{i=1}^n\subset \mathbb{R}^2$, we consider a discretization map
\begin{eqnarray}
\mathcal{D}^{\epsilon}_{\mathcal{G}} : L^2(\mathbb{R}^2) &\rightarrow& \mathbb{R}^{n} \nonumber\\
	f &\mapsto& 
	\begin{bmatrix} 
	f \star \eta^{\epsilon}_{\mathbf{x}_1}(\mathbf{x}_1)\\ 
	\vdots \\ 
	f \star \eta^{\epsilon}_{\mathbf{x}_{n}}(\mathbf{x}_n) 
	\end{bmatrix}\in \mathbb{R}^{n}, 
\end{eqnarray}
where $\eta^\epsilon_{\mathbf{x}_i}(\mathbf{y}) := \eta^{\epsilon}(\mathbf{x}_i-\mathbf{y})$ and $\star$ means the convolution. 
\end{defn}

In this work, to fulfill the conventional definition of a discrete patch, we consider the discrete patch to be defined on a square that inscribes the $D_r$, which is a disk centered at the origin with radius $r>0$.   
For a fixed odd integer $q$, we consider a $q \times q$ square sampling grid $\mathcal{G}_q:=\{\mathbf{x}_a\}_{a=1}^{q^2}\subset\mathbb{R}^2$, where $\mathbf{x}_a=\left( \frac{r}{\sqrt{2}} + (\alpha -1) \frac{\sqrt{2}r}{q-1},  -\frac{r}{\sqrt{2}} + (\beta -1) \frac{\sqrt{2}r}{q-1} \right)^T\in\mathbb{R}^2$ and $(\alpha, \beta)$ is the associated index such that $a=(\alpha-1)q+\beta$. See Figure \ref{fig:grid} for example. A discrete patch corresponding to $p^{(\texttt{c})}_{\mathbf{x}}$ is defined as
\be
P^{(\texttt{c})}_{\mathbf{x}} := \mathcal{D}^{\epsilon}_{\mathcal{G}_q}p^{(\texttt{c})}_{\mathbf{x}} \in \mathbb{R}^{q^2},
\ee  
where $\epsilon$ is assumed to be much smaller than $\sqrt{2}r/(q-1)$ and the superscript $(c)$ indicates that the image is clean. Note that the distance between two vertically or horizontally consecutive grid points is $\sqrt{2}r/(q-1)$. See Figure \ref{fig:grid} for an example. 
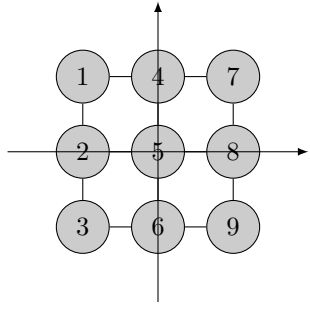
\begin{figure}
\begin{tikzpicture}[darkstyle/.style={circle,draw,fill=gray!40,minimum size=20}]
  \foreach \x in {0,...,2}
    \foreach \y in {0,...,2} 
       {\pgfmathtruncatemacro{\label}{ 3-\y + 3 *\x}
       \node [darkstyle]  (\x\y) at (1*\x,1*\y) {\label};} 

  \foreach \x in {0,...,2}
    \foreach \y [count=\yi] in {0,...,1}  
      \draw (\x\y)--(\x\yi) (\y\x)--(\yi\x) ;
 \draw[thin,-latex] (-1,1) -- (3,1);
 \draw[thin,-latex] (1,-1) -- (1,3);
\end{tikzpicture}
 \caption{Given $q=3$, we consider a $3 \times 3$ grid centered at the origin. Here we denote $\mathbf{x}_a  = a$, where $a = 1, \cdots, 9$.} \label{fig:grid}
\end{figure}
The discrete grayscale image associated with a continuous $L^2$ image $\iota^{(\texttt{c})}$ is 
\be
I^{(\texttt{c})} :=\mathcal{D}^{\epsilon}_{\mathcal{G}_N}\iota^{(\texttt{c})} \in \mathbb{R}^{N^2},
\ee
where $\mathcal{G}_N = \{ \mathbf{x}_i \}_{i=1}^{N^2}$ denotes a uniform sampling grid.

Putting these definitions together, the $i$-th patch associated with the discrete grayscale image are related by
\be
P^{(\texttt{c})}_i:=\mathcal{D}^{\epsilon}_{\mathcal{G}_q}p^{(\texttt{c})}_{\mathbf{x}_i},
\ee
and we denote the discrete patch space associated with $I$ by 
\be
\mathcal{X}_I^{(\texttt{c})}:=\{P^{(\texttt{c})}_{i}\}_{i=1}^{N^2}.
\ee

We would like to define $SO(2)$ actions on a discretized patch. Recall that for a given discretized patch $P^{(\texttt{c})}_{\mathbf{x}}$, the numerical rotation of $P^{(\texttt{c})}_{\mathbf{x}}$ by $O\in SO(2)$ is carried out by
\be
\mathcal{D}^{\epsilon}_{\mathcal{G}_q}(O.\mathcal{I} P^{(\texttt{c})}_{\mathbf{x}})=\mathcal{D}^{\epsilon}_{\mathcal{G}_q}(O.\mathcal{I} \mathcal{D}^{\epsilon}_{\mathcal{G}_q} p^{(\texttt{c})}_{\mathbf{x}}),
\ee
where $\mathcal{I}$ is the selected interpolation operator. Here $\mathcal{I}$ could be viewed as a deconvolution operator trying to recover $p^{(\texttt{c})}_{\mathbf{x}}$ from $\mathcal{D}^{\epsilon}_{\mathcal{G}_q}p^{(\texttt{c})}_{\mathbf{x}}$. Suppose $\mathcal{I}\mathcal{D}^{\epsilon}_{\mathcal{G}_q}$ is the identity operator, then the numerical rotation of  $P^{(\texttt{c})}_{\mathbf{x}}$ by $O\in SO(2)$ becomes $\mathcal{D}^{\epsilon}_{\mathcal{G}_q}(O.p^{(\texttt{c})}_{\mathbf{x}})$. In general, however, this is not true, unless the function $\iota^{(\texttt{c})}$ has a special structure so that we can find $\mathcal{I}$. 
As we utilize interpolation to rotate a discrete patch, the discrepancy between the numerical rotation of $\mathcal{D}_{\mathcal{G}_q}^{\epsilon}p^{(\texttt{c})}_{\mathbf{x}}$ and $\mathcal{D}^{\epsilon}_{\mathcal{G}_q}( O.p^{(\texttt{c})}_{\mathbf{x}})$ depends on the rotational angle, the underlying function, and the selected interpolation algorithm. In other words, the discretization and rotation operations are not interchangeable. 

Since the numerical rotation performance is not the main focus of this work, to simplify the discussion, we further assume that the numerical impact of numerical rotation of a discrete patch is negligible, and hence
$\mathcal{I}\mathcal{D}^{\epsilon}_{q}$ is the identity operator. Thus, we have the following definition.
\begin{defn}
We define the $SO(2)$ group action on a discrete patch $P^{(\texttt{c})}_{\mathbf{x}}$ by
\be
O.P^{(\texttt{c})}_{\mathbf{x}} := \mathcal{D}^{\epsilon}_q (O.p^{(\texttt{c})}_{\mathbf{x}}), \mbox{ for any $O \in SO(2)$}.
\ee
\end{defn}

Next, we discuss the discretization procedure for the noisy patches. 
Indeed, since the mollifier is a function in $C_c^\infty$, the noisy image $\iota^{(\texttt{n})}$ could be discretized as
\begin{eqnarray}
\mathcal{D}^{\epsilon}_{\mathcal{G}_N} : D' &\rightarrow& \mathbb{R}^{N^2} \nonumber\\
	\iota^{(\texttt{n})}&\mapsto& 
	I^{(\texttt{n})}:=\begin{bmatrix}  
	\iota^{(\texttt{n})} \star \eta^{\epsilon}_{\mathbf{x}_1}(\mathbf{x}_1)\\ 
	\vdots \\ 
	\iota^{(\texttt{n})}\star \eta^{\epsilon}_{\mathbf{x}_{N^2}}(\mathbf{x}_{N^2}) 
	\end{bmatrix} =I^{(\texttt{c})}+\begin{bmatrix} 
	\Phi(\eta^{\epsilon}_{\mathbf{x}_1})\\ 
	\vdots \\ 
	\Phi(\eta^{\epsilon}_{\mathbf{x}_{N^2}}) 
	\end{bmatrix} , \label{Definition:DiscretizationNoisyImage}
\end{eqnarray}
where $\begin{bmatrix} 
	\Phi(\eta^{\epsilon}_{\mathbf{x}_1})
	\ldots
	\Phi(\eta^{\epsilon}_{\mathbf{x}_{N^2}}) 
	\end{bmatrix}^T$ 
is a Gaussian random vector by the definition of $\Phi$. Precisely, by the assumption of $\Phi$ in (\ref{Definition:NoisyImage}) and the chosen $\epsilon$ in the discretization operator, $\Phi(\eta^{\epsilon}_{\mathbf{x}_i})$ is a Gaussian random variable, $\mathbb{E}\Phi(\eta^{\epsilon}_{\mathbf{x}_i})=0$ for $i=1,\ldots,N^2$, and $\Phi(\eta^{\epsilon}_{\mathbf{x}_1}),\ldots,\Phi(\eta^{\epsilon}_{\mathbf{x}_{N^2}})$ are uncorrelated and hence independent since we have
\begin{align}
\mathbb{E}[\Phi(\eta^{\epsilon}_{\mathbf{x}_i})\Phi(\eta^{\epsilon}_{\mathbf{x}_j})]=\,&\int |\hat{\eta^{\epsilon}}(\xi)|^2e^{i2\pi (\mathbf{x}_i-\mathbf{x}_j)\cdot \xi}\sigma^2d \xi\\
=\,&\sigma^2\int \eta^{\epsilon}(\mathbf{y})\eta^{\epsilon}(\mathbf{x}_i-\mathbf{x}_j-\mathbf{y})d\mathbf{y}=\sigma^2\delta_{ij}\nonumber,
\end{align}
where $\delta_{ij}$ is the Kronecker delta and  $i,j=1,\ldots,N^2$.
The $i$-th patch associated with the noisy discrete grayscale image is
\be
P^{(\texttt{n})}_i:=\mathcal{D}^{\epsilon}_{\mathcal{G}_q}p^{(\texttt{n})}_{\mathbf{x}_i}=P^{(\texttt{c})}_{i}+\sigma\xi_i,\label{Definition:DiscretizationNoisyImagePatch}
\ee
where $\xi_i$ is a Gaussian random vector, and $\xi_i(a)\sim \mathcal{N}(0,1)$ and $\mathbb{E}(\xi_i(a)\xi_i(b))=\delta_{ab}$ for all $a,b=1,\ldots,q^2$. It is clear that for two non-overlapping patches $P^{(\texttt{n})}_i$ and $P^{(\texttt{n})}_j$, the associated noises $\xi_i$ and $\xi_j$ are independent. The discrete patch space associated with the noisy image $I^{(\texttt{n})}$ is denoted by 
\be
\mathcal{X}_I^{(\texttt{n})}:=\{P^{(\texttt{n})}_{i}\}_{i=1}^{N^2}\subset \mathbb{R}^{q^2}.\label{Definition:DiscretizationNoisyImagePatchSpace}
\ee

Again, we assume that the error incurred by the numerical rotation is negligible, and have the following definition for the noisy patches.
\begin{defn}\label{Definition:DiscreteRotation}
We define the $SO(2)$ group action on a discrete patch $P^{(\texttt{n})}_{i}$ by
\be
O.P^{(\texttt{n})}_{i} := \mathcal{D}^{\epsilon}_{\mathcal{G}_q} (O.p^{(\texttt{n})}_{\mathbf{x}_i}), \mbox{ for any $O \in SO(2)$}.
\ee
\end{defn}

Note that due to the isotropic assumption of the homogeneous random field $\Phi$, the distribution of the noise in a noisy patch is fixed after rotation. The RID in the discrete setup is thus defined as the following.

\begin{defn}
The rotation invariant distance between two patches, $P_{i}$ and $P_{j}$, which could be clean or noisy, is defined as
\be
d_{\texttt{RID}}(P_{i}, P_{j}) := \min_{O\in SO(2)} \|P_{i} - O.P_{j}\|,
\ee
where $\| \cdot \|$ denotes the $\ell^2$ norm. 
\end{defn}

Before closing this section, we show an example to demonstrate the benefit of introducing the frame bundle structure to the patch space. In Fig.~\ref{PatchSpaceModelDemonstration2}, we show the 49 nearest neighbors of the patch $P$ indicated by the white box shown in Fig.~\ref{PatchSpaceModelDemonstration} determined by the RID and $L^2$ distances respectively. Clearly, with the RID, more nearest neighbor patches that are similar to $P$ are identified. In other words, we reduce the dimension of the patch space by wiping out the fiber associated with the rotationally invariant patches.

\begin{figure}[ht]
\includegraphics[width=0.30\textwidth]{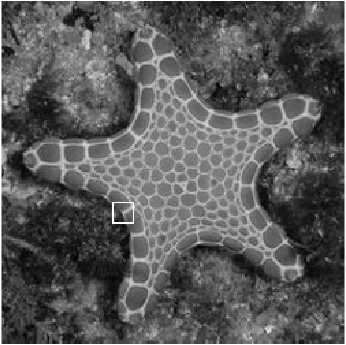}
\caption{\label{PatchSpaceModelDemonstration}Starfish. The selected patch $P$ is indicated by the white box.}
\end{figure}

\begin{figure}[ht]
\subfigure[$L^2$ distance]{
\includegraphics[width=.43\textwidth]{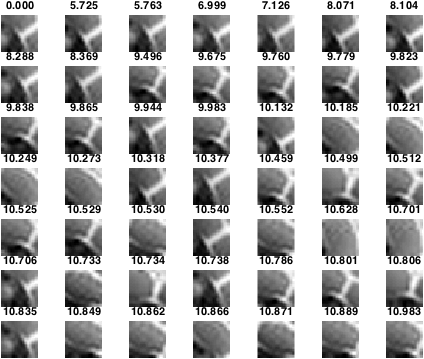}\hspace{30pt}}
\subfigure[RID]{
\includegraphics[width=.43\textwidth]{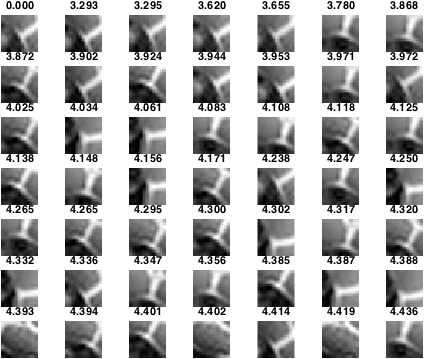}}
\caption{\label{PatchSpaceModelDemonstration2}Left: the first 49 nearest neighbors of the patch $P$ shown in Figure \ref{PatchSpaceModelDemonstration} with respect to the $L^2$ distance, including $P$. The patch $P$ is shown in the left top subfigure, and the $L^2$ distance is shown on the top of each patch. Right: the first 49 nearest neighbors of the patch $P$ with respect to the RID, including $P$. The patch $P$ is shown in the left top subfigure, and the RID is shown on the top of each patch.}
\end{figure}

\section{Vector non-local Euclidean median Algorithm}
\label{sec:alg}

Take a clean grayscale image denoted as $I^{(\texttt{c})}\in \RR^{N \times N}$. 
Assume that the image has been normalized to be of mean $0$ and standard deviation $1$; that is, 
\be
\mu_I:=\frac{1}{N^2}\sum_{i=1}^{N^2}I^{(\texttt{c})}(i)=0\quad\mbox{and}\quad\sigma_I:=\left(\frac{1}{N^2}\sum_{i=1}^{N^2}(I^{(\texttt{c})}(i)-\mu_I)^2\right)^{1/2}=1. \label{Definition:SigmaI}
\ee
Take the associated noisy image $I^{(\texttt{n})}$ defined in (\ref{Definition:DiscretizationNoisyImage}), the noisy patches $P^{(\texttt{n})}_i$ defined in (\ref{Definition:DiscretizationNoisyImagePatch}) with $\sigma>0$, and the noisy patch space defined in (\ref{Definition:DiscretizationNoisyImagePatchSpace}). We now introduce the VNLEM algorithm.

\subsection{VNLEM algorithm}

The goal of the \textit{denoising problem} is finding an algorithm that will recover $I^{(\texttt{c})}$ from $I^{(\texttt{n})}$ as accurately as possible. We will come back to the notion of accuracy in Section \ref{sec:img_quality}.
In this paper, we consider the following \textit{vector nonlocal Euclidean median} (VNLEM) algorithm, which is a generalization of the NLM, the NLEM \cite{Chaudhury_Singer:2012}, and the NLPR \cite{Chaudhury_Singer:2013}.
The basic idea is to combine the fiber bundle structure underlying the patch space in order to improve the performance of the NLM and NLEM algorithms. 
With the RID, define the affinity matrix $W\in \mathbb{R}^{N^2\times N^2}$ by
\begin{align}
W_{ij} = \exp(- d_{\texttt{RID}}^2(P^{(\texttt{n})}_i, P^{(\texttt{n})}_j) / \epsilon),\label{Affinity:True_RID}
\end{align}
where $i,j=1,\ldots,N^2$ and $\epsilon>0$ is the pre-determined bandwidth.
For each patch $P^{(\texttt{n})}_i$, we identify its $N_1\in \mathbb{N}$ nearest neighbors in the sense of the RID,
and we represent this set as $N_{\texttt{RID}}(i)$. The denoised image, denoted as $\tilde{I}^{(\texttt{VNLEM})}\in \mathbb{R}^{N\times N}$, is calculated by 
\begin{align}
\tilde{I}^{(\texttt{VNLEM})}(i) = \tilde{P}^{(\texttt{VNLEM})}_i(c),\quad \mbox{where }%
\tilde{P}^{(\texttt{VNLEM})}_i:=\argmin_{P\in \mathbb{R}^{q\times q}} \sum_{P_j \in N_{\texttt{RID}}(i)} W_{ij} \| P- O_{ij}.P^{(\texttt{n})}_j\| ^{\gamma},\label{Definition:VNLEM}
\end{align}
where $i=1,\ldots,N^2$, $0 < \gamma \leq 1$, and $O_{ij}$ is the rotation that achieves $d_{\texttt{RID}}^2(P^{(\texttt{n})}_i, P^{(\texttt{n})}_j)$. Note that when $\gamma=1$, this is equivalent to taking the median over $\{P^{(\texttt{n})}_k(c)\}_{k\in N_{\texttt{RID}}(i)}$. When $0<\gamma<1$, this is equivalent to the NLPR proposed in \cite{Chaudhury_Singer:2013}. We call the algorithm VNLEM with $0<\gamma\leq1$.

Under the manifold assumption, we could apply the diffusion map (DM) algorithm to further improve the VNLEM algorithm. We summarize the DM and the theory behind it in Appendix \ref{sec:DM}. 
With the affinity matrix $W$, the graph Laplacian and the associated transition matrix $A$ could be established. By taking the top $K$ eigenvalues and eigenvectors of $A$, we then embed each patch into a low-dimensional space and calculate the diffusion distance (DD) to evaluate the true neighbors of each patch. For each patch $P^{(\texttt{n})}_i$, we identify its $N_2\in \mathbb{N}$ nearest neighbors in the sense of DD. 
We represent this set of nearest neighbors of $P^{(\texttt{n})}_i$ as $N_{\text{DD}}(i)$. Based on the robustness property of the DM \cite{ElKaroui:2010a,ElKaroui_Wu:2015b}, this step acts as an additional filtering procedure to dismiss the patches in the initial nearest neighbors set determined by the RID.   
The denoised image, denoted as $\tilde{I}^{(\texttt{VNLEM-DD})}\in \mathbb{R}^{N\times N}$, is calculated by  
\begin{align}
\tilde{I}^{(\texttt{VNLEM-DD})}(i) = \tilde{P}^{(\texttt{VNLEM-DD})}_i(c),\quad  \mbox{where }%
\tilde{P}^{(\texttt{VNLEM-DD})}_i:=\argmin_{P\in \mathbb{R}^{q\times q}} \sum_{P^{(\texttt{n})}_j \in N_{\texttt{DD}}(i)} W_{ij} \| P-O_{ij}. P^{(\texttt{n})}_j\|^\gamma\label{Definition:VNLEM-DD}
\end{align}
and $i=1,\ldots,N^2$.

While it seems a straightforward generalization of the NLM/NLEM/NLPR by replacing the Euclidean distance by the RID, there are numerical issues we have to handle, and we discuss three of them below.

\subsubsection{Numerical techniques to speed up the computation -- search window}

First, note that the established $W$ is a dense matrix, which is not feasible to handle when the image size is large. Furthermore, obtaining the pairwise distances between all pairs of patches is a computationally intense and time-consuming task. One practical solution to this issue is to only consider the nearest neighbors of any given patch when forming the affinity matrix. By finding a pre-assigned number of nearest neighbors, we could simultaneously reduce the computational time and the memory required to save $W$. However, to the best of our knowledge, there is no available efficient nearest neighbor searching algorithm for the RID. 

To handle this numerical issue, we consider the \textit{search window} scheme \cite{Chaudhury_Singer:2012} by limiting our algorithm to consider only patches that are within a given search window centered around the reference patch. Precisely, for a patch $P^{(\texttt{n})}_i$, we consider the search window of size $(2N_2+1)\times (2N_2+1)$ that is centered at the 
\be
S_i:=\{P^{(\texttt{n})}_j|\, j\in\{1,\ldots,N^2\},\,\mbox{the difference of $i$ and $j$ is bounded by $N_2$ in both $x$ and $y$ axes}\},
\ee
where $N_2\in\mathbb{N}$ so that $(2N_2+1)\times (2N_2+1)>N_1$.
That is, when we establish $W$, we only consider the $(2N_2+1)\times (2N_2+1)$ patches whose centers are within $N_2$ pixels away from the center of $P^{(\texttt{n})}_i$ in both the $x$-axis and $y$-axis.
With this search window, we form the affinity matrix by the following:
\begin{align}
 W_{ij} = \left\{
 \begin{array}{ll}
 \exp(-d_{\texttt{RID}}^2(P^{(\texttt{n})}_i, P^{(\texttt{n})}_j) / \epsilon), & \mbox{for } P^{(\texttt{n})}_j \in S_i
  \\
  0, & \text{otherwise} 
\end{array}
\right..\label{Algorithm:Defintion:Wmatrix}
\end{align}
 
\subsubsection{Numerical techniques to improve the RID evaluation -- SIFT}

Note that finding the RID between two given patches incurs huge computational costs in its general form. Also, in general the patch is square and the size is limited, like $11\times 11$ or $13\times 13$, performing a direct numerical rotation might lead to a non-negligible error and deviate the estimated RID. To alleviate these two troubles and facilitate the derivation of the affinity matrix, we use the scale invariant feature transform (SIFT) \cite{Lowe:2004} to approximate the RID. We mention that the central moments are used in \cite{Zimmer2008,Grewenig2011,Guizard2015} and the curvelet transform is used in \cite{Zhang_He_Du:2016} to capture the rotational feature.

SIFT is an algorithm to extract the local features in an image. The particular feature extracted by the SIFT that we have interest in is the orientation feature. In short, for each pixel, based on the local image gradient direction, an orientation angle is calculated and assigned as the local feature. We will use this feature orientation to approximate the RID distances between the patches. Denote the orientation for the local feature centered in $P^{(\texttt{n})}_i$ as $\theta^{(\texttt{n})}_i$. The relative angle between $P^{(\texttt{n})}_i$ and $P^{(\texttt{n})}_j$ achieving the RID is approximated by $\theta^{(\texttt{n})}_i - \theta^{(\texttt{n})}_j$. We then rotate $P^{(\texttt{n})}_j$ by $R_{\theta^{(\texttt{n})}_i }.R^{-1}_{\theta^{(\texttt{n})}_j}.P^{(\texttt{n})}_j$, where $R_{\theta}\in SO(2)$ means the rotation by $\theta$ degrees.

Note that the SIFT provides an approximation of the angular relationship between two patches, which allows us to approximate the RID between two patches. However, it might not be accurate. To improve the accuracy, we perform the exhaustive search over a small range centered around the estimated angular relationship $\theta^{(\texttt{n})}_i - \theta^{(\texttt{n})}_j$:
\begin{align}
\theta_{ij}:=\argmin_{\theta \in \{\theta : \vert \theta - (\theta^{(\texttt{n})}_i - \theta^{(\texttt{n})}_j) \vert < \theta_l   \}} \| U_kP^{(\texttt{n})}_i - R_{\theta}.U_kP^{(\texttt{n})}_j \|,\label{approximate_Rotation_RID}
\end{align} 
where $U_k:\mathbb{R}^{q^2}\to \mathbb{R}^{k^2q^2}$ is the chosen upsampling operator that increases the sampling rate of the patch $P^{(\texttt{n})}_i$ by $k\in\mathbb{N}$ times and $\theta_l>0$ is the parameter chosen by the user,
and hence
\begin{align}
\tilde{d}_{\texttt{RID}}(P^{(\texttt{n})}_i, P^{(\texttt{n})}_j):= \| P^{(\texttt{n})}_i - R_{\theta_{ij}}.P^{(\texttt{n})}_j \|,\label{approximate_accurate_RID}
\end{align} 
which is used as an approximation of the RID distance.
Note that $U_k$ is applied to improve the accuracy of numerical rotation.
To estimate the value of the clean image at pixel $i$, we use the following affinity weights in~\eqref{Definition:VNLEM-DD}:
\begin{align}
W'_{ij} = \exp(-\tilde{d}^2_{\texttt{RID}}(P^{(\texttt{n})}_i, P^{(\texttt{n})}_j)/\epsilon),
\label{accurate_RID}
\end{align}
where $j \in S_i$.
Note that this method for finding RID offers a trade-off between computational time and accuracy of the result. With the estimated RID and the estimated affinity matrix $W'$, we could run the DM and get the estimated DD.

\subsubsection{The proposed VNLEM algorithm}

The proposed algorithms, after taking the above modifications into account, are summarized in Algorithm~\ref{Algorithm_VNLEM}. 
We call the modified denoising scheme in (\ref{Definition:VNLEM}) based on the approximated RID (\ref{accurate_RID}) the \textit{VNLEM} algorithm, and call the modified denoising scheme in (\ref{Definition:VNLEM-DD}) based on the estimated DD the \textit{VNLEM with DD} (VNLEM-DD). 
We denote $\tilde{I}^{(\texttt{VNLEM})}$ as the denoised image by VNLEM and $\tilde{I}^{(\texttt{VNLEM-DD})}$ as the denoised image by VNLEM with DD.

\begin{algorithm}[h] 
\begin{algorithmic}
\STATE \textbf{Input :} Noisy image $I^{(\texttt{n})}$, patch size $q\in\mathbb{N}$, the number of nearest neighbors $N_1\in\mathbb{N}$, the search window size $N_2\in\mathbb{N}$, the kernel bandwidth $\epsilon>0$, the DM embedding dimension $m\in\mathbb{N}$, the diffusion time $t>0$, and the power $0<\gamma\leq 1$.
\STATE \textbf{Output :} Denoised image $\tilde{I}$.
\STATE[pre-1] Pad the image array with a border of $\lceil  q/ 2 \rceil$ pixels.
\STATE[pre-2] Create the patch space $\mathcal{X}^{(\texttt{n})} := \{P^{(\texttt{n})}_i\}_{i = 1}^{N ^ 2}\subset\mathbb{R}^{q^2}$, where the center of $P^{(\texttt{n})}_i$ is $I^{(\texttt{n})}(i)$.
\STATE[pre-3] Find SIFT orientation feature for each patch and form an affinity matrix $W$ using these orientations from the search window $S_i$ of size $(N_2 + 1) \times (N_2 + 1)$ according to equation (\ref{Algorithm:Defintion:Wmatrix}).

\STATE[VNLEM. Step 1] For each $i$, find $N_1$ nearest neighbours from $S_i$ according to $W$.
\STATE[VNELM. Step 2] Find the more accurate estimation of RID, $\tilde{d}_{\texttt{RID}}$ in (\ref{approximate_accurate_RID}), and form $N_{\texttt{RID}}(i)$ that contains $\lceil N_1 / 2\rceil$ patches that are closer to patch $i$ according to $\tilde{d}_{\texttt{RID}}$, where $\lceil x\rceil$ means the smallest integer greater than or equal to $x\in \mathbb{R}$.

\STATE[VNLEM. Step 3] For each $i$, set $\tilde{I}^{(\texttt{VNLEM})}(i)$ to be the center point of 
$$\argmin_{P\in \mathbb{R}^{q\times q}} \sum_{P^{(\texttt{n})}_j \in N_{\texttt{RID}}(i)} W'_{ij} \| P- O_{ij}.P^{(\texttt{n})}_j\|^\gamma.$$

\STATE[VNLEM-DD. Step 1] Form the eigenvalue decomposition of $D^{-1} W$, where $D\in \mathbb{R}^{N^2\times N^2}$ is the diagonal matrix determined by $D_{ii}=\sum_{j=1}^{N^2}W_{ij}$. 

\STATE[VNLEM-DD. Step 2] Embed $P^{(\texttt{n})}_i$ into $\mathbb{R}^{m}$ by $\Phi_t^{(m)}(P^{(\texttt{n})}_i) = (\lambda_2^t \phi_2(i), \ldots, \lambda_{m+1}^t \phi_{m+1}(i))$ and evaluate the DD between patches. 

\STATE[VNLEM-DD. Step 3] For each $P_i^{(\texttt{n})}$, find $N_1$ nearest neighbours in terms of DD.

\STATE[VNLEM-DD. Step 4] Find $\lceil N_1 / 2\rceil$ closest patches with respect to $\tilde{d}_{\texttt{RID}}$ among the $N_1$ patches from the previous step to form $N_{\texttt{DD}}(i)$.

\STATE[VNLEM-DD. Step 5] For each $i$, set $\tilde{I}^{(\texttt{VNLEM-DD})}(i)$ to be the center point of $$\argmin_{P\in \mathbb{R}^{q\times q}} \sum_{P^{(\texttt{n})}_j \in N_{\texttt{DD}}(i)} W'_{ij} \| P- O_{ij}.P^{(\texttt{n})}_j\|^\gamma.$$

\end{algorithmic}
\caption{Vector non-local Euclidean median algorithm. }
\label{Algorithm_VNLEM}
\end{algorithm}

\section{Theoretical Analysis}

In this section, we provide a theoretical analysis to study the proposed VNLEM algorithm. The first part concerns how the VNLEM algorithms work. Precisely, we claim that the clean patch neighbors could be accurately evaluated from the noisy patch neighbors with high probability. This theorem also explains why the traditional nonlocal mean/median algorithm work. The second part concerns how accurate the orientation feature determined by the SIFT could help us to accurately approximate the RID.

\subsection{Finding good neighborhoods}
\label{sec:kNN}

In this section, we show that through finding nearest neighborhoods of noisy patches, it is with high probability that we would find ``correct'' nearest neighborhoods of clean patches as well. 

Note that  the rotation group action on patches can be expressed as 
\be
O.P^{(\texttt{n})}_j= O.P^{(\texttt{c})}_j + \sigma O.\xi_j,
\ee
where $O \in SO(2)$. 
When two patches $P^{(\texttt{n})}_i$ and $P^{(\texttt{n})}_j$ do not overlap, the noises of $P^{(\texttt{n})}_i$ and $P^{(\texttt{n})}_j$ are independent. 
However, when $P^{(\texttt{n})}_i$ and $P^{(\texttt{n})}_j$ overlap, the associated noises are not independent, and we need to control the dependence.
 To achieve this, we introduce the following sets that are associated with $P^{(\texttt{n})}_i$ and $P^{(\texttt{n})}_j$:
\be
K_{\texttt{O}}(O) := \{ (a, b)|\, a, b \in \{1,\cdots, q^2\} \mbox{ such that } \xi_i(a) = [O.\xi_j](b)   \}\,,
\ee 
which is associated with the overlapped pixels of $P^{(\texttt{n})}_i$ and $P^{(\texttt{n})}_j$ and is dependent on the rotation $O$, but whose cardinality does not depend on $O$;
\be
K_{\texttt{S}}(O) := \{ (a,b)|\, a, b \in \{1,\cdots, q^2\} \mbox{ such that } a\neq b,\,\xi_i(a) = O.\xi_j(b) \mbox{ and } \xi_i(b) = [O.\xi_j](a) \},
\ee 
which is associated with the ``swapped'' pixel indices after rotation and depends on $O\in SO(2)$;
and
\be 
K_{\texttt{I}}(O) := \{a\in \{1,\cdots, q^2\} |\, \xi_i(a) = [O.\xi_j](a) \},
\ee
which is associated with the overlapped pixels of $P^{(\texttt{n})}_i$ and $P^{(\texttt{n})}_j$ with ``identical indices'' after rotation and depends on $O\in SO(2)$. By definition, the cardinalities of $K_{\texttt{S}}(O)$ and $K_{\texttt{I}}(O)$ both depend on $O$.

Note that $K_{\texttt{S}}(O)\subset K_{\texttt{O}}(O)$ and $K_{\texttt{I}}(O)\subset K_{\texttt{O}}(O)$, and when $P^{(\texttt{n})}_i$ and $P^{(\texttt{n})}_j$ do not overlap, $K_{\texttt{O}}(O)$, $K_{\texttt{S}}(O)$ and $K_{\texttt{I}}(O)$ are all empty sets. Also note that $K_{\texttt{I}}(O)$ would only have at most one element. We mention that for the NLEM, since there is no rotation, $K_{\texttt{I}}(O)$ and $K_{\texttt{S}}(O)$ will be empty.

To better illustrate the sets $K_{\texttt{O}}(O)$, $K_{\texttt{S}}(O)$, and $K_{\texttt{I}}(O)$, see Figures \ref{Fig:OverlapContinuousSetup180} to \ref{Fig:OverlapDiscreteSetupS180} when the rotation is of 180 degree.  
It is easier to visualize the overlap in the continuous setup. Let $\mathbf{x}$ be a point in the overlap region of $p^{(\texttt{n})}_i$ and $p^{(\texttt{n})}_j$. If there exists $O \in SO(2)$ so that after rotation the point would be in the same relative position in $p^{(\texttt{n})}_i$ and $O.p^{(\texttt{n})}_j$, then the distances from $\mathbf{x}$ to the boundaries of $p^{(\texttt{n})}_i$ and $p^{(\texttt{n})}_j$ must be the same. See Figure \ref{Fig:OverlapContinuousSetup180} for an illustration. 
Therefore, when two patches have overlap, only the points on the line segment connecting the intersection points of the boundary circles. In the discrete setup, the same consideration holds. See Figure \ref{Fig:OverlapDiscreteSetup180} for an illustration. Therefore, for each rotation action, there would be at most one pixel pixel being lined up, and hence $|K_{\texttt{I}}(O)| \leq 1$. 

On the other hand, if $K_{\texttt{S}}(O)$ is not empty, there exist two points $\mathbf{x}$ and $\mathbf{y}$ in the intersection so that their corresponding rotated points $\mathbf{x}'$ and $\mathbf{y}'$ are at the same relative positions but swapped. This is only possible when the rotation angle is $\pi$ in the continuous setup. See Figure \ref{Fig:OverlapContinuousSetup90}.
Note that the overlap region is symmetric about the centre $\mathbf{x}$. When the rotation angle is $\pi$ in the discrete setup, the overlapping region, except $\mathbf{x}$, are points in $K_{\texttt{S}}(O)$, and hence $|K_{\texttt{S}}(O)| \leq |K_{\texttt{O}}(O)|$. See Figure \ref{Fig:OverlapDiscreteSetupS180}. Clearly, in general we have a rough bound $|K_{\texttt{O}}(O)|\leq q(q-1)$.

\begin{figure}[h]
\begin{tikzpicture}
\draw[blue, ultra thick] (1+5/3,0) circle [radius=5/3];
\draw[gray, ultra thick] (1+11/3,0) circle [radius=5/3];
\draw[<->, red, ultra thick, dotted] (1+8/3,0) -- (1+10/3,0);
\draw[<->, red, ultra thick, dotted] (1+2,0) -- (1+8/3,0);
\draw[fill] (1+8/3,0) circle [radius=0.05];
\node at (1+4/3,0) {$p^{(\texttt{n})}_i$};
\node at (1+12/3,0) {$p^{(\texttt{n})}_j$};
\node at (1+8/3, 0.3) {$\mathbf{x}$};
\draw[blue, ultra thick] (8+5/3,0) circle [radius=5/3];
\draw[gray, ultra thick] (10+11/3,0) circle [radius=5/3];
\node at (8+4/3,0) {$p^{(\texttt{n})}_i$};
\node at (10+10/3,0) {$O.p^{(\texttt{n})}_j$};
\draw[fill] (8+8/3,0) circle [radius=0.05];
\draw[fill] (10+14/3,0) circle [radius=0.05];
\node at (7.2,3.4/3) {after};
\draw[dashed, ->] (4 + 8/3,0) -- (5+8/3,0);
\node at (7.2,2.2/3) {$180^\circ$};
\node at (7.2,1/3) {rotation};
\end{tikzpicture}
\caption{\label{Fig:OverlapContinuousSetup180} Illustration of two overlapping patches under the continuous setup.}
\end{figure}
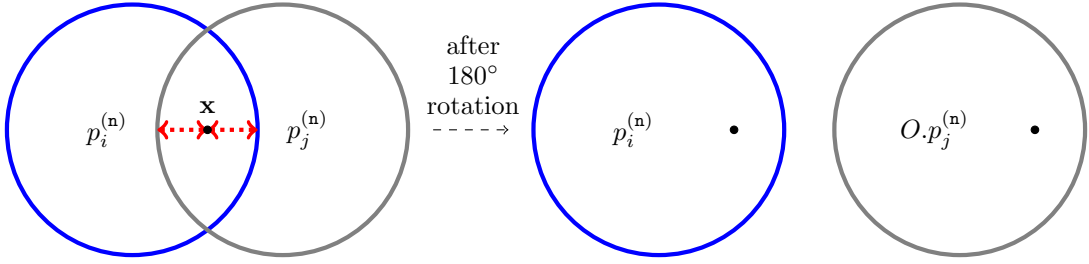

\begin{figure}[h] 
\begin{tikzpicture}
\draw[step=1cm,color=blue, ultra thick] (1,0) grid (4,3);
\draw[step=1cm,color=gray, ultra thick] (3,0) grid (6,3);
\node at (.5, 0) { {\color{blue}$P^{(\texttt{n})}_i$ }};
\node at (6.5, 0) {$P^{(\texttt{n})}_j$ };
\node at (3.5,2.5) {$\mathbf{x}_7$};
\node at (3.5,1.5) {$\mathbf{x}_8$};
\node at (3.5,0.5) {$\mathbf{x}_9$};
\draw[step=1cm,color=blue,  ultra thick] (8,0) grid (11,3);
\draw[step=1cm,color=gray,  ultra thick] (12,0) grid (15,3);
\node at (7.5, 0) { {\color{blue}$P^{(\texttt{n})}_i$ }};
\node at (15.7, 0) {$O.P^{(\texttt{n})}_j$ };
\node at (10.5,2.5) {$\mathbf{x}_7$};
\node at (10.5,1.5) {$\mathbf{x}_8$};
\node at (10.5,0.5) {$\mathbf{x}_9$};
\node at (14.5,2.5) {$\mathbf{x}_9$};
\node at (14.5,1.5) {$\mathbf{x}_8$};
\node at (14.5,0.5) {$\mathbf{x}_7$};
\node at (7.1,2.6) {after};
\draw[dashed, ->] (3.6 + 8/3,1.5) -- (5.1+8/3,1.5);
\node at (7.1,2.2) {$180^\circ$};
\node at (7.1,1.8) {rotation};
\end{tikzpicture}
\caption{\label{Fig:OverlapDiscreteSetup180} Illustration of the set $K_{\texttt{I}}(O)$ of two overlapping patches under the discrete setup. In this example there are three overlapped pixels. Clearly, $K_{\texttt{I}}(O) = \{8\}$ since $P^{(\texttt{n})}_i(8) = O.P^{(\texttt{n})}_j(8).$}
\end{figure}
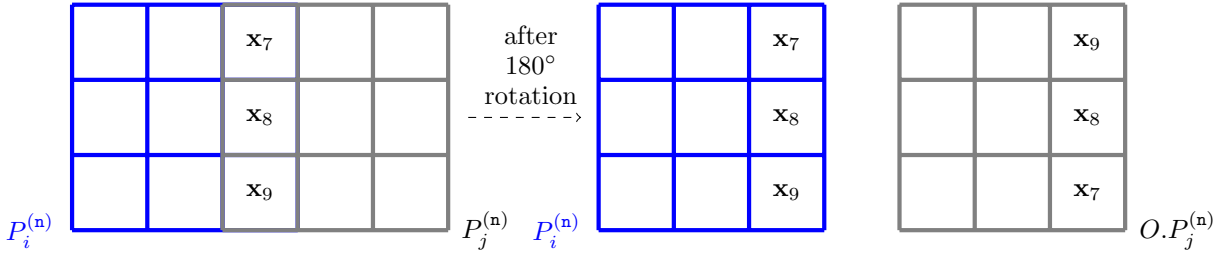

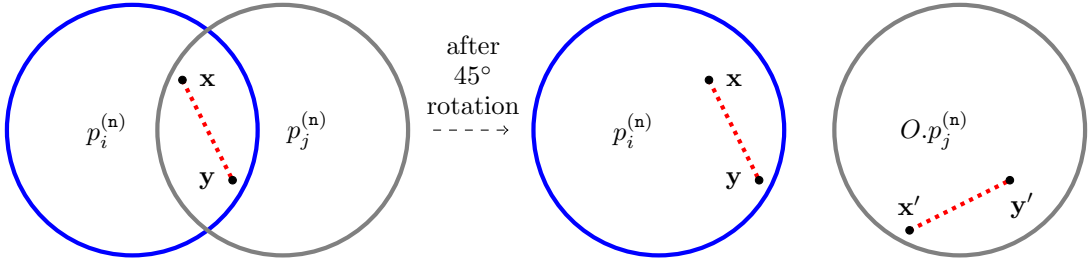
\begin{figure}[h]
\begin{tikzpicture}
\draw[blue, ultra thick] (1+5/3,0) circle [radius=5/3];
\draw[gray, ultra thick] (1+11/3,0) circle [radius=5/3];
\draw[-, red, ultra thick, dotted] (10/3,2/3) -- (12/3,-2/3);
\draw[fill] (10/3,2/3) circle [radius=0.05];
\draw[fill] (12/3,-2/3) circle [radius=0.05];
\node at (1+4/3,0) {$p^{(\texttt{n})}_i$};
\node at (1+12/3,0) {$p^{(\texttt{n})}_j$};
\node at (11/3, 2/3) {$\mathbf{x}$};
\node at (11/3, -2/3) {$\mathbf{y}$};
\draw[blue, ultra thick, shift={(7,0)}] (1+5/3,0) circle [radius=5/3];
\draw[-, red, ultra thick, dotted, shift={(7,0)}] (10/3,2/3) -- (12/3,-2/3);
\draw[fill, shift={(7,0)}] (10/3,2/3) circle [radius=0.05];
\draw[fill, shift={(7,0)}] (12/3,-2/3) circle [radius=0.05];
\node at (7+11/3, 2/3) {$\mathbf{x}$};
\node at (7+11/3, -2/3) {$\mathbf{y}$};
\draw[gray, ultra thick, shift={(9,0)}] (1+11/3,0) circle [radius=5/3];
\draw[-, red, ultra thick, dotted,shift={(9,0)}, rotate around={90:(14/3,0)}] (10/3,2/3) -- (12/3,-2/3);
\draw[fill, shift={(9,0)}, rotate around={90:(14/3,0)}] (10/3,2/3) circle [radius=0.05];
\draw[fill, shift={(9,0)}, rotate around={90:(14/3,0)}] (12/3,-2/3) circle [radius=0.05];
\node at (7.2,3.4/3) {after};
\draw[dashed, ->] (4 + 8/3,0) -- (5+8/3,0);
\node at (7.2,2.2/3) {$45^\circ$};
\node at (7.2,1/3) {rotation};
\node at (13, -1) {$\mathbf{x}'$};
\node at (14.5, -1) {$\mathbf{y}'$};
\node at (8+4/3,0) {$p^{(\texttt{n})}_i$};
\node at (10+10/3,0) {$O.p^{(\texttt{n})}_j$};\end{tikzpicture}
\caption{Illustration of the set $K_{\texttt{S}}(O)$ two overlapping patches under the continuous setup. In this case, $K_{\texttt{S}}(O)$ is empty. }
\label{Fig:OverlapContinuousSetup90}
\end{figure}

\begin{figure}[h]
\begin{tikzpicture}
\draw[step=1cm,color=blue, ultra thick] (1,0) grid (4,3);
\draw[step=1cm,color=gray, ultra thick] (2,0) grid (5,3);
\node at (.5, 0) { {\color{blue}$P^{(\texttt{n})}_i$ }};
\node at (5.5, 0) {$P^{(\texttt{n})}_j$ };
\node at (2.5,2.5) {$\mathbf{x}_4$};
\node at (2.5,1.5) {$\mathbf{x}_5$};
\node at (2.5,0.5) {$\mathbf{x}_6$};
\node at (3.5,2.5) {$\mathbf{x}_7$};
\node at (3.5,1.5) {$\mathbf{x}_8$};
\node at (3.5,0.5) {$\mathbf{x}_9$};
\draw[step=1cm,color=blue,  ultra thick] (8,0) grid (11,3);
\draw[step=1cm,color=gray,  ultra thick] (12,0) grid (15,3);
\node at (7.5, 0) { {\color{blue}$P^{(\texttt{n})}_i$ }};
\node at (15.7, 0) {$O.P^{(\texttt{n})}_j$ };
\node at (9.5,2.5) {$\mathbf{x}_4$};
\node at (9.5,1.5) {$\mathbf{x}_5$};
\node at (9.5,0.5) {$\mathbf{x}_6$};
\node at (10.5,2.5) {$\mathbf{x}_7$};
\node at (10.5,1.5) {$\mathbf{x}_8$};
\node at (10.5,0.5) {$\mathbf{x}_9$};
\node at (13.5,2.5) {$\mathbf{x}_9$};
\node at (13.5,1.5) {$\mathbf{x}_8$};
\node at (13.5,0.5) {$\mathbf{x}_7$};
\node at (14.5,2.5) {$\mathbf{x}_6$};
\node at (14.5,1.5) {$\mathbf{x}_5$};
\node at (14.5,0.5) {$\mathbf{x}_4$};
\node at (6.5,2.2) {after};
\node at (6.5,1.8) {$180^\circ$ rotation};
\draw[dashed, ->] (5.5,1.5) -- (7.5,1.5);
\end{tikzpicture}
\caption{\label{Fig:OverlapDiscreteSetupS180} Illustration of the set $K_{\texttt{S}}(O)$ of two overlapping patches under the discrete setup. In this example there are six overlapped pixels. By definition, we have $K_{\texttt{S}}(O) = \{ (4,9), (9,4), (5,8), (8,5), (6,7), (7,6)\}$ and $|K_{\texttt{S}}(O)|=6$. }
\end{figure}
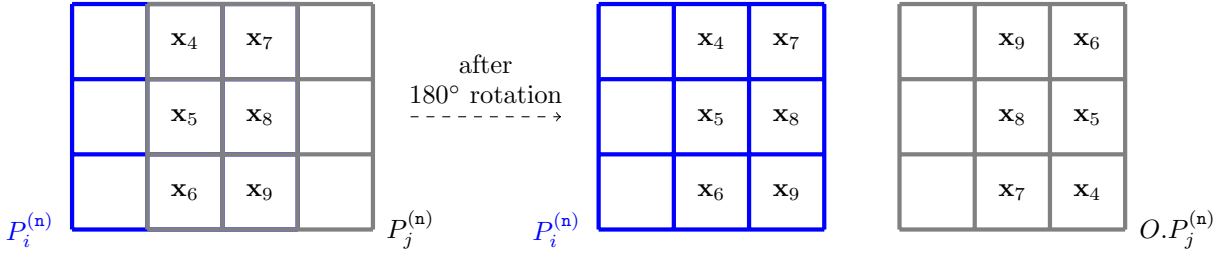

\begin{lem}
\label{lemma:patche_stat}
Fix $O \in SO(2)$. Take two patches $P^{(\texttt{n})}_i,P^{(\texttt{n})}_j\in\mathcal{X}_I^{(\texttt{n})}\subset \mathbb{R}^{q^2}$, where $\mathcal{X}_I^{(\texttt{n})}$ is defined in (\ref{Definition:DiscretizationNoisyImagePatchSpace}). 
Then, we have
\be
\mathbb{E}(\| P^{(\texttt{n})}_i - O.P^{(\texttt{n})}_j \|^2 ) = \| P \|^2 + 2 \sigma^2 (q^2 - |K_{\texttt{I}}(O)|)
\ee
and
\begin{align}
\mathrm{Var}(\| P^{(\texttt{n})}_i - O.P^{(\texttt{n})}_j \|^2 ) = &\, 8\sigma^2 \left(   \| P \|^2 -  \sum_{(a,b) \in K_{\texttt{O}}(O)} P(a)P(b) + \sum_{a \in K_{\texttt{I}}(O)} P(a)^2 \right)  \nonumber \\
&\quad +  4\sigma^4 \left( 2q^2  + |K_{\texttt{S}}(O)| + |K_{\texttt{O}}(O)| -3 |K_{\texttt{I}}(O)|\right)\,,
\end{align}
where $P := P_i^{(\texttt{c})} - O.P_j^{(\texttt{c})}$. Particularly, when $O$ is the identity, we have
\be
\mathbb{E}(\| P^{(\texttt{n})}_i - P^{(\texttt{n})}_j \|^2 ) = \| P \|^2 + 2 \sigma^2 q^2
\ee
and
\begin{align}
\mathrm{Var}(\| P^{(\texttt{n})}_i - P^{(\texttt{n})}_j \|^2 ) = &\, 8\sigma^2 \left(   \| P \|^2 -  \sum_{(a,b) \in K_{\texttt{O}}} P(a)P(b) \right) +  4\sigma^4 \left( 2q^2 + |K_{\texttt{O}}|\right)\,.
\end{align}
\end{lem}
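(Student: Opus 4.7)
The plan is to expand $\|P_i^{(\texttt{n})}-O.P_j^{(\texttt{n})}\|^2$ and reduce everything to second- and fourth-order moments of a single jointly Gaussian noise vector. Writing $P := P_i^{(\texttt{c})}-O.P_j^{(\texttt{c})}$ (deterministic) and $\eta := \xi_i - O.\xi_j$ (jointly Gaussian with mean zero, since by the isotropy of $\Phi$ and Definition \ref{Definition:DiscreteRotation} the rotation permutes i.i.d.\ standard Gaussian coordinates), we have
\begin{align*}
\|P_i^{(\texttt{n})}-O.P_j^{(\texttt{n})}\|^2 = \|P\|^2 + 2\sigma\,\langle P,\eta\rangle + \sigma^2\,\|\eta\|^2.
\end{align*}
The single identity that drives everything is the covariance kernel
\begin{align*}
\mathbb{E}[\eta(a)\eta(b)] = 2\delta_{ab} - \mathbf{1}_{(a,b)\in K_{\texttt{O}}(O)} - \mathbf{1}_{(b,a)\in K_{\texttt{O}}(O)},
\end{align*}
obtained by expanding $\eta(a)\eta(b)$ and using $\mathbb{E}[\xi_i(a)\,[O.\xi_j](b)] = 1$ precisely when $(a,b)\in K_{\texttt{O}}(O)$ and $0$ otherwise.

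The expectation is then immediate: $\mathbb{E}\langle P,\eta\rangle = 0$, and summing the diagonal of the kernel yields $\mathbb{E}\|\eta\|^2 = 2q^2 - 2|K_{\texttt{I}}(O)|$, giving the stated first moment.

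For the variance I would split
\begin{align*}
\mathrm{Var}\bigl(\|P_i^{(\texttt{n})}-O.P_j^{(\texttt{n})}\|^2\bigr) = 4\sigma^2\,\mathrm{Var}\langle P,\eta\rangle + \sigma^4\,\mathrm{Var}\|\eta\|^2 + 4\sigma^3\,\mathrm{Cov}(\langle P,\eta\rangle,\|\eta\|^2),
\end{align*}
the last covariance being zero because $\mathbb{E}[\eta(a)\eta(b)^2]$ is a centered third-order moment of a jointly Gaussian vector. The first variance is $\sum_{a,b}P(a)P(b)\,\mathbb{E}[\eta(a)\eta(b)]$, which after substituting the kernel collapses to $2\|P\|^2 - 2\sum_{(a,b)\in K_{\texttt{O}}(O)}P(a)P(b)$; the extra $+\sum_{a\in K_{\texttt{I}}(O)}P(a)^2$ written in the statement is in fact zero, because $a\in K_{\texttt{I}}(O)$ forces the underlying image pixels to coincide and hence $P(a) = P_i^{(\texttt{c})}(a)-[O.P_j^{(\texttt{c})}](a) = 0$. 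For the second variance I use Isserlis' theorem for zero-mean jointly Gaussian variables, which gives $\mathrm{Cov}(\eta(a)^2,\eta(b)^2) = 2(\mathbb{E}[\eta(a)\eta(b)])^2$, so
\begin{align*}
\mathrm{Var}\|\eta\|^2 = 2\sum_{a,b}\bigl(2\delta_{ab} - \mathbf{1}_{(a,b)\in K_{\texttt{O}}(O)} - \mathbf{1}_{(b,a)\in K_{\texttt{O}}(O)}\bigr)^2.
\end{align*}

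The main obstacle is the careful bookkeeping after expanding this last square. The pure-indicator squares each contribute $|K_{\texttt{O}}(O)|$; the mixed terms with $\delta_{ab}$ collapse to $|K_{\texttt{I}}(O)|$ since the diagonal pairs of $K_{\texttt{O}}(O)$ are exactly $K_{\texttt{I}}(O)$; and the genuinely subtle term is
\begin{align*}
\sum_{a,b}\mathbf{1}_{(a,b)\in K_{\texttt{O}}(O)}\,\mathbf{1}_{(b,a)\in K_{\texttt{O}}(O)},
\end{align*}
which counts pairs $(a,b)$ with \emph{both} $\xi_i(a)=[O.\xi_j](b)$ and $\xi_i(b)=[O.\xi_j](a)$. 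By the definitions in the paper this set is $K_{\texttt{S}}(O)$ off the diagonal and $K_{\texttt{I}}(O)$ on the diagonal, giving cardinality $|K_{\texttt{S}}(O)|+|K_{\texttt{I}}(O)|$; this is precisely where $K_{\texttt{S}}(O)$ enters the formula. Assembling everything yields the claimed $4\sigma^4(2q^2+|K_{\texttt{S}}(O)|+|K_{\texttt{O}}(O)|-3|K_{\texttt{I}}(O)|)$, and the identity-rotation specialization follows at once from $K_{\texttt{I}}(O)=K_{\texttt{S}}(O)=\emptyset$ for $O=\mathrm{id}$ and $i\ne j$.
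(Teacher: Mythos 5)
Your proof is correct and reaches exactly the stated formulas, but it executes the key computation differently from the paper. Both arguments start from the same decomposition $\|P\|^2+2\sigma\langle P,\eta\rangle+\sigma^2\|\eta\|^2$ and the same three-term split of the variance with a vanishing $\sigma^3$ cross term. The paper then expands $\|\eta\|^2=\|\xi\|^2+\|\xi'\|^2-2\xi^T\xi'$ and grinds through six separate variance/covariance computations, each requiring its own case analysis over $K_{\texttt{O}}$, $K_{\texttt{S}}$, $K_{\texttt{I}}$. You instead package all the dependence into the single covariance kernel $\mathbb{E}[\eta(a)\eta(b)]=2\delta_{ab}-\mathbf{1}_{(a,b)\in K_{\texttt{O}}(O)}-\mathbf{1}_{(b,a)\in K_{\texttt{O}}(O)}$ and invoke Isserlis' theorem, so that $\mathrm{Var}\|\eta\|^2=2\sum_{a,b}(\mathbb{E}[\eta(a)\eta(b)])^2$ and the appearance of $|K_{\texttt{S}}(O)|$, $|K_{\texttt{O}}(O)|$, $-3|K_{\texttt{I}}(O)|$ falls out of expanding one square; I checked the bookkeeping ($4q^2+2|K_{\texttt{O}}|+2|K_{\texttt{S}}|+2|K_{\texttt{I}}|-8|K_{\texttt{I}}|$, doubled) and it reproduces $(\texttt{II})=8q^2+4(|K_{\texttt{S}}|+|K_{\texttt{O}}|-3|K_{\texttt{I}}|)$ exactly. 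This buys a shorter, less error-prone derivation at the cost of citing Wick/Isserlis rather than staying with bare-hands moment computations. Your treatment of the linear term also quietly fixes a wrinkle in the paper: the paper takes $\mathrm{Var}(\xi(a)-\xi'(a))=2$ for every $a$, which fails for $a\in K_{\texttt{I}}(O)$ where the difference is identically zero; your version yields $(\texttt{I})=2\|P\|^2-2\sum_{(a,b)\in K_{\texttt{O}}(O)}P(a)P(b)$ without the $+\sum_{a\in K_{\texttt{I}}(O)}P(a)^2$ term, and your observation that $P(a)=0$ on $K_{\texttt{I}}(O)$ (coincident physical pixels have coincident clean values) is precisely why the two expressions, and hence both proofs, give the same number. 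The identity-rotation specialization via $K_{\texttt{I}}=K_{\texttt{S}}=\emptyset$ for $i\neq j$ matches the paper's remark.
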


\begin{rmk}
Before proving the Lemma, we have some comments. First, since $2 \sigma^2 (q^2 - |K_{\texttt{I}}(O)|)>0$, $\| P^{(\texttt{n})}_i - O.P^{(\texttt{n})}_j \|^2$ is a biased estimator of $\| P^{(\texttt{c})}_i - O.P^{(\texttt{c})}_j \|^2$. Second, by the lemma, if $P^{(\texttt{n})}_i,P^{(\texttt{n})}_j\in\mathcal{X}_I^{(\texttt{n})}\subset \mathbb{R}^{q^2}$ do not overlap, we have
\begin{align}
\mathbb{E}(\| P^{(\texttt{n})}_i - O.P^{(\texttt{n})}_j \|^2 ) = \| P \|^2 + 2 \sigma^2 q^2 \mbox{ and }
\mathrm{Var}(\| P^{(\texttt{n})}_i - O.P^{(\texttt{n})}_j \|^2 ) = 8\sigma^2\| P \|^2  +  8\sigma^4 q^2\,.
\end{align}
Third, when $O$ is the identity, the Lemma could be applied to study the NLEM. 
Finally, since the sign of $\sum_{(a,b) \in K_{\texttt{O}}} P(a)P(b)$ is not controlled, this term leads to the complicated behavior of the $L^2$ distance or RID when two patches overlap. Indeed, when two patches overlap, depending on the clean patches' structure, the RID estimator from the noisy patches might be biased toward overlapped patches. Thus, if the overlapped patches are included in the denoising process, the search of the nearest neighboring patches might be biased to the ``local patches'' that have overlaps.
\end{rmk}

\begin{proof}
Since $i,j$ are fixed, to simplify the notation, we write 
\be
P^{(\texttt{n})}_i(a) - O.P^{(\texttt{n})}_j (a)
 = P^{(\texttt{c})}_i(a) -O.P^{(\texttt{c})}_j(a) + \sigma (\xi_i - O.\xi_j) \nonumber 
 = P(a) + \sigma(\xi(a)- \xi'(a))\,,
\ee
where $\xi':=O.\xi_j$. 
Hence, we can express the $\ell^2$ norm of $P^{(\texttt{n})}_i - O.P^{(\texttt{n})}_j$ as 
\begin{align}
\|P^{(\texttt{n})}_i - O.P^{(\texttt{n})}_j \|^2 
&= \| P\|^2 + 2 \sigma P^T (\xi - \xi') + \sigma^2 \| \xi - \xi'\|^2 \nonumber \\
&= \| P\|^2 + 2 \sigma P^T (\xi - \xi') + \sigma^2 \left(\| \xi \|^2 + \| \xi' \|^2 - 2 \xi^T \xi' \right)\,.\label{Proof:Lemma1:Eq1}
\end{align}
By the assumption, $\xi(a), a = 1,\ldots,q^2$ are i.i.d. Gaussian random variables and as well as $\xi'(a), a = 1,\ldots,q^2$. We also know that $\xi(a)$ and $\xi'(a)$ are independent Gaussian random variables when $a\notin K_{\texttt{I}}(O)$. By a direct calculation, we have $\mathbb{E}(\xi^T \xi')= |K_{\texttt{I}}(O)|$. Combining this with the facts that 
\be
\mathbb{E}(\|\xi\|^2)=\mathbb{E}(\|\xi'\|^2)=q^2 \mbox{ and } \mathbb{E}(\xi(a) - \xi'(a) )=0\,,
\ee
we obtain 
\be
\mathbb{E}(\|P^{(\texttt{n})}_i - O.P^{(\texttt{n})}_j \|^2) = \| P \|^2 + 2 \sigma^2 (q^2 - |K_{\texttt{I}}(O)|).
\ee

To compute the variance, we write $\mathrm{Var}(\| P^{(\texttt{n})}_i - O.P^{(\texttt{n})}_j \|^2 )$ as the following by expanding (\ref{Proof:Lemma1:Eq1}):
\begin{align}
&\mathrm{Var}(\| P^{(\texttt{n})}_i - O.P^{(\texttt{n})}_j \|^2 ) \\
=\,& \mathrm{Var}\left( 2 \sigma \sum_{a=1}^{q^2} P(a)(\xi(a) - \xi'(a))\right) + \mathrm{Var}\left( \sigma^2 \sum_{a=1}^{q^2} (\xi(a) - \xi'(a))^2\right) \nonumber \\
& \quad + 2 \mathrm{Cov}\left( 2\sigma \sum_{a=1}^{q^2} P(a)(\xi(a) - \xi'(a)) , \sigma^2 \sum_{a=1}^{q^2} (\xi(a) - \xi'(a))^2\right) \nonumber \\
=\,& 4\sigma^2 \cdot (\texttt{I}) + \sigma^4 \cdot (\texttt{II}) + 4\sigma^3\cdot (\texttt{III}),\nonumber
\end{align}
where $(\texttt{I}):=\mathrm{Var}\left(\sum_{a=1}^{q^2} P(a)(\xi(a) - \xi'(a))\right)$, $(\texttt{II}):=\mathrm{Var}\left(\sum_{a=1}^{q^2} (\xi(a) - \xi'(a))^2\right)$, and \\$(\texttt{III}):=\mathrm{Cov}\left(\sum_{a=1}^{q^2} P(a)(\xi(a) - \xi'(a)) , \sum_{a=1}^{q^2} (\xi(a) - \xi'(a))^2\right)$.
We compute $(\texttt{I}), (\texttt{II})$, and $(\texttt{III})$ below.
\begin{align}
(\texttt{I}) 
&= \mathrm{Var}\left( \sum_{a=1}^{q^2} P(a)(\xi(a) - \xi'(a))\right) \nonumber \\
&= \sum_{a=1}^{q^2} P^2(a) \mathrm{Var}(\xi(a) - \xi'(a)) + \sum_{a,b \in \{1,\cdots, q^2\}, a\neq b} P(a)P(b)\mathrm{Cov}\left( \xi(a) - \xi'(a), \xi(b) - \xi'(b)\right) \nonumber \\
&= 2 \| P \|^2 - 2 \sum_{a,b \in \{1, \cdots, q^2\}, a\neq b} P(a)P(b) \mathrm{Cov}(\xi(a),\xi'(b))
\end{align}
where the second equality comes from a direct expansion, and the last equality holds since
\begin{align}
\mathrm{Cov}\left( \xi(a) - \xi'(a), \xi(b) - \xi'(b)\right) 
=  - \left[ \mathrm{Cov}\left( \xi(a), \xi'(b) \right) + \mathrm{Cov}\left( \xi(b), \xi'(a) \right) \right]\,,
\end{align}
which comes from the fact that $\{\xi(a)\}_{a=1}^{q^2}$ are independent and  $\{\xi'(a)\}_{a=1}^{q^2}$ are independent.
Since only overlapped pixels lead to non-zero $\mathrm{Cov}(\xi(a),\xi'(b))$, we have
\be
\sum_{a,b \in \{1, \cdots, q^2\}, a\neq b} P(a)P(b)  \mathrm{Cov}(\xi(a),\xi'(b)) =  \sum_{(a,b) \in K_{\texttt{O}}(O)} P(a)P(b) - \sum_{a \in K_{\texttt{I}}(O)} P(a)^2 \,,
\ee
where we subtract $\sum_{a \in K_{\texttt{I}}(O)} P(a)^2$ since $a\neq b$. As a result,
\be
\label{eq:I}
(\texttt{I}) =  2 \| P \|^2 - 2 \left( \sum_{(a,b) \in K_{\texttt{O}}(O)} P(a)P(b) - \sum_{a \in K_{\texttt{I}}(O)} P(a)^2 \right).
\ee
Next,
\begin{align}
\label{eq:II_0}
(\texttt{II}) 
&= \mathrm{Var}\left(  \sum_{a=1}^{q^2} (\xi(a) - \xi'(a))^2\right) = \mathrm{Var}\left( \|\xi\|^2+\|\xi'\|^2-2\xi^T\xi'\right) \nonumber \\
&= \mathrm{Var}(\| \xi \|^2) + \mathrm{Var}(\| \xi' \|^2) + 4 \mathrm{Var}\left(\sum_{a=1}^{q^2} \xi(a) \xi'(a)\right) +2 \mathrm{Cov}\left(\sum_{a=1}^{q^2} \xi(a)^2, \sum_{b=1}^{q^2} \xi'(b)^2\right) \nonumber\\
&\quad 
- 4 \mathrm{Cov}\left(\sum_{a=1}^{q^2} \xi(a)^2, \sum_{b=1}^{q^2} \xi(b)\xi'(b)\right)
- 4 \mathrm{Cov}\left(\sum_{a=1}^{q^2} \xi'(a)^2, \sum_{b=1}^{q^2} \xi(b)\xi'(b)\right)\,.
\end{align}
We now calculate $(\texttt{II})$ term by term. By a direct expansion, we have
\begin{align}
\label{eq:II_1}
\mathrm{Var}\left(\sum_{a=1}^{q^2} \xi(a) \xi'(a)\right)
&= \sum_{a=1}^{q^2} \mathrm{Var}(\xi(a) \xi'(a)) +  \sum_{a,b \in \{1,\cdots, q^2\}, a \neq b} \mathrm{Cov}\left(\xi(a)\xi'(a), \xi(b)\xi'(b) \right) \nonumber\\
&= \Big[\sum_{a \notin K_{\texttt{I}}(O)} \mathrm{Var}(\xi(a)) \mathrm{Var}(\xi'(a)) + \sum_{a \in K_{\texttt{I}}(O)} \mathrm{Var}(\xi(a)^2)\Big] +\sum_{(a,b) \in K_{\texttt{S}}(O)} \mathrm{Var}(\xi(a)\xi(b))  \nonumber \\
&=[(q^2-|K_{\texttt{I}}(O)|)+2|K_{\texttt{I}}(O)|]+|K_{\texttt{S}}(O)|\nonumber\\
&= q^2 + |K_{\texttt{I}}(O)| + |K_{\texttt{S}}(O)|,  
\end{align}
where the second equality holds since $\mathrm{Cov}\left(\xi(a)\xi'(a), \xi(b)\xi'(b) \right)\neq 0$ only when $(a,b)\in K_{\texttt{S}}(O)$ and the third equality holds since 
$\mathrm{Var}(\xi(a)^2)=2$ and $\mathrm{Var}(\xi(a)\xi(b))=\mathbb{E}\xi(a)^2\mathbb{E}\xi(b)^2=1$ due to the independence.
Similarly, we have by a direct calculation
\begin{align}
\label{eq:II_2}
\mathrm{Cov}\left(\sum_{a=1}^{q^2} \xi(a)^2, \sum_{b=1}^{q^2} \xi'(b)^2\right)
= \sum_{(a,b) \in K_{\texttt{O}}(O)} \mathrm{Cov}\left(\xi(a)^2, \xi'(b)^2\right) = 2 |K_{\texttt{O}}(O)|\,.
\end{align}
By the linearity of the covariance, 
\begin{align}
&\mathrm{Cov}\left(\sum_{a=1}^{q^2} \xi(a)^2, \sum_{b=1}^{q^2} \xi(b)\xi'(b)\right) \label{eq:II_3}\\
=&\, \sum_{a = 1}^{q^2} \mathrm{Cov}( \xi(a)^2,\xi(a)\xi'(a) )+\sum_{a,b\in \{1,\ldots,q^2\},a\neq b} \mathrm{Cov}( \xi(a)^2,\xi(b)\xi'(b) )\nonumber\\
=&\, 2|K_{\texttt{I}}(O)|+\sum_{a\neq b,\,b\in K_{\texttt{I}}(O)} \mathrm{Cov}( \xi(a)^2,\xi(b)\xi'(b) )+\sum_{a\neq b,\,b\notin K_{\texttt{I}}(O)} \mathrm{Cov}( \xi(a)^2,\xi(b)\xi'(b) )=2|K_{\texttt{I}}(O)|\nonumber\,,
\end{align}
where 
\be
\sum_{a\neq b,\,b\in K_{\texttt{I}}(O)} \mathrm{Cov}( \xi(a)^2,\xi(b)\xi'(b) )=\sum_{a\neq b,\,b\in K_{\texttt{I}}(O)} \mathrm{Cov}( \xi(a)^2,\xi(b)^2)=0
\ee
 and 
 \be
 \sum_{a\neq b,\,b\notin K_{\texttt{I}}(O)} \mathrm{Cov}( \xi(a)^2,\xi(b)\xi'(b) )=0,
 \ee
 since $\mathbb{E}(X) = \mathbb{E}(X^3) =0$ for $X \sim \mathcal{N}(0,1)$. 
 To be more precise, when $a\neq b$ and $b\in K_{\texttt{I}}(O)$, $\mathrm{Cov}( \xi(a)^2,\xi(b)^2)=0$ due to the independence assumption; when $a\neq b$ and $b\notin K_{\texttt{I}}(O)$, $\xi(b)$ and $\xi'(b)$ are independent, so 
 \be
 \mathrm{Cov}( \xi(a)^2,\xi(b)\xi'(b) )= \mathbb{E}(\xi(a)^2\xi(b)\xi'(b))-\mathbb{E}\xi(a)^2\mathbb{E}(\xi(b)\xi'(b))=\mathbb{E}(\xi(a)^2\xi(b)\xi'(b)),
 \ee
 which is $0$ no matter $\xi(a)^2$ is independent of $\xi(b)\xi'(b)$ or not. Note that by our assumption, $\xi(a)$ is dependent on $\xi(b)$ (or $\xi'(b)$) if and only if  $\xi(a)$ is the same as $\xi(b)$ (or $\xi'(b)$).
 
Similarly, we have
\be
\label{eq:II_4}
\mathrm{Cov}\left(\sum_{a=1}^{q^2} \xi'(a)^2, \sum_{b=1}^{q^2} \xi(b)\xi'(b)\right) = 2|K_{\texttt{I}}(O)|.
\ee
By
substituting (\ref{eq:II_1}), (\ref{eq:II_2}), (\ref{eq:II_3}), and (\ref{eq:II_4}) into (\ref{eq:II_0}), we obtain
\be
\label{eq:II}
(\texttt{II}) = 8q^2  + 4(|K_{\texttt{S}}(O)| + |K_{\texttt{O}}(O)| -3 |K_{\texttt{I}}(O)|).
\ee
Finally, we have
\begin{align}
\label{eq:III}
(\texttt{III}) =\, &\mathrm{Cov}\left(  \sum_{a=1}^{q^2} P(a)(\xi(a) - \xi'(a)) , \sum_{a=1}^{q^2} (\xi(a) - \xi'(a))^2\right) = \sum_{a,b=1}^{q^2} P(a)\mathbb{E}(\xi(a) - \xi'(a))(\xi(b) - \xi'(b))^2 \nonumber\\
=\, & \sum_{a,b=1}^{q^2} P(a)\mathbb{E}[\xi(a)\xi(b)^2-2\xi(a)\xi(b)\xi'(b)+\xi(a)\xi'(b)^2-\xi'(a)\xi(b)^2+2\xi'(a)\xi(b)\xi'(b)-\xi'(a)\xi'(b)^2] = 0,
\end{align}
since $\mathrm{Cov}(X,Y^2) = 0$ for independent $X,Y \sim \mathcal{N}(0,1)$ and $\mathrm{Cov}(X,YZ) = 0$ for independent $X,Y,Z\sim \mathcal{N}(0,1)$.

Combining equations (\ref{eq:I}), (\ref{eq:II}), and (\ref{eq:III}), we conclude that
\begin{align}
&\mathrm{Var}(\| P^{(\texttt{n})}_i - O.P^{(\texttt{n})}_j \|^2 )\\
 = &\,8\sigma^2 \left(   \| P \|^2 -  \sum_{(a,b) \in K_{\texttt{O}}(O)} P(a)P(b) + \sum_{a \in K_{\texttt{I}}(O)} P(a)^2 \right) +  4\sigma^4 \left( 2q^2  + |K_{\texttt{S}}(O)| + |K_{\texttt{O}}(O)| -3 |K_{\texttt{I}}(O)|\right).\nonumber
\end{align}
\end{proof}

With this Lemma, we are ready to show that the RID between two clean patches could be well approximated by noisy patches; particularly, if two clean patches are close enough, their RID can be well approximated by the associated noisy patches.  

\begin{thm}\label{Theorem:RIDNeighborsEstimate}
Take two patches $P^{(\texttt{n})}_i,P^{(\texttt{n})}_j\in\mathcal{X}_I^{(\texttt{n})}\subset \mathbb{R}^{q^2}$. Suppose that $d_{\texttt{RID}} (P_i^{(\texttt{c})}, P_j^{(\texttt{c})}) < \epsilon$ and that $ \sigma q <  \epsilon$.
Then
\begin{align}
\label{Theorem1:RIDbound:1}
\mathrm{Pr}\left( d_{\texttt{RID}} (P_i^{(\texttt{n})}, P_j^{(\texttt{n})})  < 2 \epsilon \right) 
>  
\left(1 + 8\sigma^2\frac{2   \epsilon^2 +  \sigma^2 ( 2q^2-q )}{\left(3\epsilon^2  - 2 \sigma^2 (q^2 - 1) \right)^2} 
\right)^{-1}\,,
\end{align}
which increases when $q$ decreases. 
  
Suppose that $d_{\texttt{RID}} (P_i^{(\texttt{c})}, P_j^{(\texttt{c})}) > 2\epsilon$. Then
\be
\mathrm{Pr}\left(d_{\texttt{RID}} (P_i^{(\texttt{n})}, P_j^{(\texttt{n})}) >  \epsilon \right) >
\left( 1 + 8 \sigma^2\frac{2 \| P \|^2  +  \sigma^2  ( 2q^2 -q)}{\left(3\|P\|^2/4+ 2 \sigma^2 (q^2 - 1)  \right)^2}\right)^{-1}\,,
\label{Theorem1:RIDbound:2}
\ee
which increases when $q$ increases. 

In particular, when the patches $P^{(\texttt{n})}_i,P^{(\texttt{n})}_j\in\mathcal{X}_I^{(\texttt{n})}\subset \mathbb{R}^{q^2}$ have no overlap, we have
if $d_{\texttt{RID}} (P_i^{(\texttt{c})}, P_j^{(\texttt{c})}) < \epsilon$,
then
\be
\label{noOL:bound1}
\mathrm{Pr}\left( d_{\texttt{RID}} (P_i^{(\texttt{n})}, P_j^{(\texttt{n})})  < 2 \epsilon \right) >
\left( 1 + 8\sigma^2 \frac{\epsilon^2 + \sigma^2 q^2}{(3\epsilon^2 - 2 \sigma^2 q^2)^2} \right)^{-1};
\ee 
if $d_{\texttt{RID}} (P_i^{(\texttt{c})}, P_j^{(\texttt{c})}) > \epsilon$, then
\be
\mathrm{Pr}\left(d_{\texttt{RID}} (P_i^{(\texttt{n})}, P_j^{(\texttt{n})}) >  \epsilon \right) 
> \left( 1 + 8 \sigma^2\frac{ \| P \|^2  +  \sigma^2 q^2}{\left(3\|P\|^2/4+ 2 \sigma^2 q^2  \right)^2}\right)^{-1}.
\label{noOL:bound2}
\ee
\end{thm}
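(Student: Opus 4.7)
The plan is to apply Cantelli's one-sided inequality to the random variable $Y_O := \| P^{(\texttt{n})}_i - O.P^{(\texttt{n})}_j \|^2$ for a well-chosen rotation $O$, using the mean and variance formulas provided by Lemma \ref{lemma:patche_stat}. Cantelli states $\Pr(Y < a) \geq (a-\mathbb{E}[Y])^2/(\mathrm{Var}(Y) + (a-\mathbb{E}[Y])^2)$ when $a > \mathbb{E}[Y]$ and, symmetrically, $\Pr(Y > a) \geq (\mathbb{E}[Y] - a)^2/(\mathrm{Var}(Y) + (\mathbb{E}[Y] - a)^2)$ when $a < \mathbb{E}[Y]$; this produces precisely the $(1 + \mathrm{Var}/(\cdot)^2)^{-1}$ form appearing in the stated bounds.

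For (\ref{Theorem1:RIDbound:1}), take $O^*$ to be a rotation achieving $d_{\texttt{RID}}(P^{(\texttt{c})}_i, P^{(\texttt{c})}_j)$, so that with $P := P^{(\texttt{c})}_i - O^*.P^{(\texttt{c})}_j$ one has $\|P\|^2 < \epsilon^2$. The monotonicity $d_{\texttt{RID}}(P^{(\texttt{n})}_i, P^{(\texttt{n})}_j) \leq \sqrt{Y_{O^*}}$ reduces the problem to lower-bounding $\Pr(Y_{O^*} < 4\epsilon^2)$. Two observations drive the variance estimate: first, $(a,a) \in K_{\texttt{O}}(O^*)$ if and only if $a \in K_{\texttt{I}}(O^*)$, yielding the cancellation
\[
\|P\|^2 - \!\!\!\!\!\sum_{(a,b)\in K_{\texttt{O}}(O^*)}\!\!\!\!\! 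P(a)P(b) + \!\!\!\!\sum_{a\in K_{\texttt{I}}(O^*)}\!\!\!\! P(a)^2 = \|P\|^2 - \!\!\!\!\!\!\!\!\!\!\!\sum_{(a,b)\in K_{\texttt{O}}(O^*),\, a\neq b}\!\!\!\!\!\!\!\!\!\!\! P(a)P(b) \leq 2\|P\|^2,
\]
by the AM-GM bound $|P(a)P(b)| \leq (P(a)^2 + P(b)^2)/2$ together with the fact that each index appears at most once in each coordinate of $K_{\texttt{O}}(O^*)$; second, the counts $|K_{\texttt{S}}(O^*)|, |K_{\texttt{O}}(O^*)| \leq q(q-1)$ and $|K_{\texttt{I}}(O^*)| \leq 1$ control the $\sigma^4$ term. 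Using $\sigma q < \epsilon$ to keep $4\epsilon^2 - \mathbb{E}[Y_{O^*}]$ positive, Cantelli yields (\ref{Theorem1:RIDbound:1}). The no-overlap case (\ref{noOL:bound1}) is immediate since $K_{\texttt{O}}, K_{\texttt{S}}, K_{\texttt{I}}$ are empty, so the noises are independent and the expressions simplify.

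For (\ref{Theorem1:RIDbound:2}) and (\ref{noOL:bound2}), apply Cantelli on the other tail with $a = \epsilon^2$. Using $\|P\|^2 \geq d_{\texttt{RID}}(P^{(\texttt{c})}_i,P^{(\texttt{c})}_j)^2 > 4\epsilon^2$, one has $\epsilon^2 < \|P\|^2/4$, whence $\mathbb{E}[Y_{O^*}] - \epsilon^2 > 3\|P\|^2/4 + 2\sigma^2(q^2-1)$; the variance bound from the first part carries over with $\|P\|^2$ in the role of $\epsilon^2$. The principal obstacle here is conceptual rather than computational: $d_{\texttt{RID}}(P^{(\texttt{n})}_i, P^{(\texttt{n})}_j)$ is a minimum over all $O \in SO(2)$, and while $d_{\texttt{RID}}(P^{(\texttt{n})}_i,P^{(\texttt{n})}_j) \leq \sqrt{Y_{O^*}}$ cooperates with the upper-tail direction, for the lower bound one genuinely needs $\min_O Y_O > \epsilon^2$, which a pointwise Cantelli bound on $Y_{O^*}$ alone does not deliver. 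Closing this gap---via stability of the minimizer (so that once $\sigma q < \epsilon$ the noisy minimizer stays close to $O^*$ and $d_{\texttt{RID}}(P^{(\texttt{n})}_i,P^{(\texttt{n})}_j)^2 \approx Y_{O^*}$) or via a uniform-in-$O$ concentration argument exploiting the $SO(2)$-invariance of the noise distribution---is the step I expect to be the most delicate, and likely the one requiring additional auxiliary arguments beyond the lemma.
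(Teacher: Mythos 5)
Your approach is the same as the paper's: apply Lemma \ref{lemma:patche_stat} together with the one-sided Chebyshev (Cantelli) inequality to $Y_O=\|P^{(\texttt{n})}_i-O.P^{(\texttt{n})}_j\|^2$, bound the overlap terms via $|K_{\texttt{O}}(O)|,|K_{\texttt{S}}(O)|\leq q(q-1)$ and $|K_{\texttt{I}}(O)|\leq 1$, and use $d_{\texttt{RID}}(P^{(\texttt{n})}_i,P^{(\texttt{n})}_j)\leq \sqrt{Y_{O^*}}$ for the upper tail. Your AM--GM treatment of $\|P\|^2-\sum_{(a,b)\in K_{\texttt{O}}}P(a)P(b)+\sum_{a\in K_{\texttt{I}}}P(a)^2$ lands on the same $2\|P\|^2$ bound the paper gets from $\bigl|\sum_{(a,b)\in K_{\texttt{O}}}P(a)P(b)\bigr|\leq\|P\|^2$, so the first part, including the no-overlap specialization, is complete and matches.

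The obstacle you flag in the second part is genuine, and you should know that the paper does not actually overcome it. The paper derives, for each fixed $O$, the pointwise bound $\Pr(Y_O>\epsilon^2)>c$ with $c$ independent of $O$, and then concludes ``since the lower bound holds for any rotation $O$, we therefore have $\Pr(d_{\texttt{RID}}(P^{(\texttt{n})}_i,P^{(\texttt{n})}_j)>\epsilon)>c$.'' But $\{d_{\texttt{RID}}>\epsilon\}=\bigcap_{O\in SO(2)}\{Y_O>\epsilon^2\}$, and the probability of an intersection over an uncountable family is bounded \emph{above}, not below, by $\inf_O\Pr(Y_O>\epsilon^2)$; a uniform pointwise bound gives nothing in the required direction. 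So the step you correctly identify as needing a stability-of-the-minimizer or uniform-in-$O$ concentration argument (e.g., a chaining/covering argument over $SO(2)$ exploiting the continuity of $O\mapsto Y_O$, at the cost of a worse constant) is precisely the step the paper asserts without justification. Your write-up is, in this respect, more honest than the source: you have reproduced everything the paper legitimately proves, and the remaining gap is one the paper shares.
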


\begin{rmk}
When two patches $P^{(\texttt{n})}_i$ and $P^{(\texttt{n})}_j$ are disjoint, the bound (\ref{noOL:bound1}) suggests that 
the smaller patch size is better. However, the bound (\ref{noOL:bound2}) suggests the opposite. We thus need to choose a suitable patch size $q$ that balances (\ref{noOL:bound1}) and (\ref{noOL:bound2}).

When two patches $P^{(\texttt{n})}_i$ and $P^{(\texttt{n})}_j$ overlap, the analysis of choosing the patch size becomes very complicated.
It would depend on the image, the minimiser rotation $O$, and how the patches overlap. See (\ref{ineq:complexity}) for example. The worst bounds (\ref{Theorem1:RIDbound:1}) and (\ref{noOL:bound1}) we have in Theorem \ref{Theorem:RIDNeighborsEstimate} also suggest that $q$ should not be too small but also not too large.  
In practice, we found that an odd value $q$ between $7$ and $15$ leads to a good performance, but the optimal $q$ depends on the image. 

We mention that if $O$ is the identity in Theorem \ref{Theorem:RIDNeighborsEstimate}, the same argument explains why the $L^2$ distance between two clean patches could be well approximated by noisy patches, and hence better understand the NLEM algorithm.
\end{rmk}

\begin{proof} Suppose $d_{\texttt{RID}} (P_i^{(\texttt{c})}, P_j^{(\texttt{c})}) < \epsilon$ and $\sigma q < \epsilon$. Suppose $O \in SO(2)$ is such that $d_{\texttt{RID}} (P_i^{(\texttt{c})}, P_j^{(\texttt{c})}) = \| P_i^{(\texttt{c})} - O.P_j^{(\texttt{c})} \| = \|P \|^2$. 
We start by preparing a bound. Applying Lemma \ref{lemma:patche_stat} and the bounds of $|K_{\texttt{S}}(O)| $ and $|K_{\texttt{I}}(O)|$, we have
\begin{align}
&\frac{\mathrm{Var} (\| P^{(\texttt{n})}_i - O.P^{(\texttt{n})}_j \|)}{\left(4\epsilon^2 -  \mathbb{E} (\| P^{(\texttt{n})}_i - O.P^{(\texttt{n})}_j \|) \right)^2} \\
=&\, \frac{\displaystyle 8\sigma^2 \left(   \| P \|^2 -  \sum_{(a,b) \in K_{\texttt{O}}, a,b \notin K_{\texttt{I}}} P(a)P(b)  \right) +  4\sigma^4 \left( 2q^2  + |K_{\texttt{O}}| + |K_{\texttt{S}}(O)| -3 |K_{\texttt{I}}(O)|\right)}{\left(4\epsilon^2 - \| P \|^2 - 2 \sigma^2 (q^2 - |K_I|) \right)^2} \nonumber\\
\leq&\,
\frac{16\sigma^2   \epsilon^2 +  8\sigma^4 ( 2q^2-q )}{\left(3\epsilon^2  - 2 \sigma^2 (q^2 - 1) \right)^2} \nonumber\,,
\end{align}
since 
\be
\left| \sum_{(a,b) \in K_{\texttt{O}}} P(a)P(b) \right| \leq \|P\|^2, \,|K_{\texttt{O}}| \leq q(q-1),\, |K_{\texttt{S}}| \leq q(q-1), \mbox{ and } |K_{\texttt{I}}| \leq 1.\label{Proof:Theorem:BoundsForOverlap}
\ee

Recall the one-sided Chebychev's inequality for a random variable $X$ with a finite second moment: $\mathrm{Pr}(X\geq \mathbb{E}X+a)\leq \frac{\mathrm{Var}(X)}{\mathrm{Var}(X)+a^2}$, for $a >0$. Applying the one-sided Chebyshev's inequality and the inequality above, we obtain
\begin{align}
&\mathrm{Pr}\left( d_{\texttt{RID}} (P_i^{(\texttt{n})}, P_j^{(\texttt{n})} ) < 2\epsilon \right) 
\geq  \mathrm{Pr}\left( \| P^{(\texttt{n})}_i - O.P^{(\texttt{n})}_j \|^2 < (2\epsilon)^2 \right) \\
 > &\, \left(1+ \frac{\mathrm{Var} (\| P^{(\texttt{n})}_i - O.P^{(\texttt{n})}_j \|)}{\left(4\epsilon^2 -  \mathbb{E} (\| P^{(\texttt{n})}_i - O.P^{(\texttt{n})}_j \|) \right)^2}  \right)^{-1} 
\geq
\left(1 + \frac{16\sigma^2   \epsilon^2 +  8\sigma^4 ( 2q^2-q )}{\left(3\epsilon^2  - 2 \sigma^2 (q^2 - 1) \right)^2} .\nonumber
\right)^{-1}.
\end{align}
When patches $P^{(\texttt{n})}_i$ and $P^{(\texttt{n})}_j$ do not overlap,
\be
\frac{\mathrm{Var} (\| P^{(\texttt{n})}_i - O.P^{(\texttt{n})}_j \|)}{\left(4\epsilon^2 -  \mathbb{E} (\| P^{(\texttt{n})}_i - O.P^{(\texttt{n})}_j \|) \right)^2}
= \frac{8\sigma^2 \|P\|^2 + 8\sigma^4q^2}{(4 \epsilon^2 - \|P\|^2 - 2\sigma^2 q^2)^2}\,,
\ee
which implies (\ref{noOL:bound1}).

Now, suppose $d_{\texttt{RID}} (P_i^{(\texttt{c})}, P_j^{(\texttt{c})}) > 2 \epsilon$. For any $O \in SO(2)$, we apply the assumption $d_{\texttt{RID}} (P_i^{(\texttt{c})}, P_j^{(\texttt{c})}) > 2 \epsilon$ and Lemma \ref{lemma:patche_stat}. We obtain
\begin{align}
\label{ineq:complexity}
&\frac{\mathrm{Var} (\| P^{(\texttt{n})}_i - O.P^{(\texttt{n})}_j \|)}{\left(\mathbb{E} (\| P^{(\texttt{n})}_i - O.P^{(\texttt{n})}_j \|) - \epsilon^2 \right)^2  } \\
\leq&\, \frac{\displaystyle 8\sigma^2 \left(   \| P \|^2 -  \sum_{(a,b) \in K_{\texttt{O}}, a,b \notin K_{\texttt{I}}} P(a)P(b)  \right)   +  4\sigma^4 \left( 2q^2  + |K_{\texttt{O}}| + |K_{\texttt{S}}| -3 |K_{\texttt{I}}|\right)}{\left(3\|P\|^2/4+ 2 \sigma^2 (q^2 - |K_I|)  \right)^2}, \nonumber
\end{align}
where $P = P^{(\texttt{c})}_i - O.P^{(\texttt{c})}_j$.
Due to the bounds shown in (\ref{Proof:Theorem:BoundsForOverlap}), we can further obtain the following bound which is independent of the rotation $O$:
\be
\frac{\mathrm{Var} (\| P^{(\texttt{n})}_i - O.P^{(\texttt{n})}_j \|)}{\left(\mathbb{E} (\| P^{(\texttt{n})}_i - O.P^{(\texttt{n})}_j \|) - \epsilon^2 \right)^2  } \\
\leq 
\frac{16\sigma^2  \| P \|^2    +  8\sigma^4 ( 2q^2 -q)}{\left(3\|P\|^2/4+ 2 \sigma^2 (q^2 - 1)  \right)^2}. \nonumber
\ee 
Applying the one-sided Chebyshev's inequality, we can obtain a universal lower bound
\begin{align}
\mathrm{Pr}\left( \| P^{(\texttt{n})}_i - O.P^{(\texttt{n})}_j  \|^2 > \epsilon^2 \right) 
&> \left(  1 + \frac{\mathrm{Var} (\| P^{(\texttt{n})}_i - O.P^{(\texttt{n})}_j \|)}{\left(\mathbb{E} (\| P^{(\texttt{n})}_i - O.P^{(\texttt{n})}_j \|) - \epsilon^2 \right)^2  }  \right)^{-1}\label{Bound:Theorem:OverlapLowerBound} \\
&\geq \left( 1 + 4 \sigma^2\frac{4 \| P \|^2  +  \sigma^2  \left( 4q^2 -2q\right)}{\left(3\|P\|^2/4+ 2 \sigma^2 (q^2 - 1)  \right)^2}\right)^{-1}.\nonumber
\end{align}
Since the lower bound (\ref{Bound:Theorem:OverlapLowerBound}) holds for any rotation $O$, we therefore have
\be
\mathrm{Pr}\left( d_{\texttt{RID}}(P^{(\texttt{n})}_i , P^{(\texttt{n})}_j )> \epsilon \right)
>  \left( 1 + 4 \sigma^2\frac{4 \| P \|^2  +  \sigma^2  \left( 4q^2 -2q\right)}{\left(3\|P\|^2/4+ 2 \sigma^2 (q^2 - 1)  \right)^2}\right)^{-1}.
\ee
When patches $P^{(\texttt{n})}_i$ and $P^{(\texttt{n})}_j$ do not overlap,
\be
\frac{\mathrm{Var} (\| P^{(\texttt{n})}_i - O.P^{(\texttt{n})}_j \|)}{\left(\mathbb{E} (\| P^{(\texttt{n})}_i - O.P^{(\texttt{n})}_j \|) - \epsilon^2 \right)^2  }
= \frac{8\sigma^2 \|P\|^2 + 8\sigma^4}{(\|P\|^2 - \epsilon^2 + 2\sigma^2q^2)^2}
\ee
which implies (\ref{noOL:bound2}).
\end{proof}

Some discussions are needed for this Theorem. 
The quantity $\sigma^2 q^2$ could be understood as the ``total energy'' of the added noise, and the condition $ \sigma^2 q^2<\epsilon^2$ means that the RID estimated from two noisy patches is controlled by the square root of the energy of the noise. With this energy viewpoint, we could apply the technique developed in \cite[Theorem 2.1]{ElKaroui_Wu:2015b}. However, we carried out the proof in the above way to emphasize the main purpose of the RID, and to find the true neighbors and the dependence on the patch size.

Second, we mention that when $d_{\texttt{RID}} (P_i^{(\texttt{c})}, P_j^{(\texttt{c})})<\epsilon$ and $P_i^{(\texttt{c})}$ is of size $q\times q$, then the difference of the central pixels of $P_i^{(\texttt{c})}$ and $P_j^{(\texttt{c})}$ is bounded by $\epsilon$ in the worst case. Indeed, we have
\be
d^2_{\texttt{RID}} (P_i^{(\texttt{c})}, P_j^{(\texttt{c})})=\|P_i^{(\texttt{c})}-O.P_j^{(\texttt{c})}\|^2=\sum_{a=1,a\neq c}^{q^2}|P_i^{(\texttt{c})}(a)-(O.P_j^{(\texttt{c})})(a)|^2+|P_i^{(\texttt{c})}(c)-P_j^{(\texttt{c})}(c)|^2<\epsilon^2,
\ee
where $O\in SO(2)$ is the rotation that achieves the RID, and the second equality holds since $(O.P_j^{(\texttt{c})})(c)=P_j^{(\texttt{c})}(c)$. In practice, $|P_i^{(\texttt{c})}(c)-P_j^{(\texttt{c})}(c)|$ could be smaller than $\epsilon$ when $\sum_{a=1,a\neq c}^{q^2}|P_i^{(\texttt{c})}(a)-(O.P_j^{(\texttt{c})})(a)|^2$ is not zero. When $O$ is the identity, the same argument holds for the NLEM algorithm, and this explains why we could have a better denoising result by using the patches.

\subsection{Orientation Assignment in SIFT}
\label{sec:SIFT}
SIFT is a method for extracting features that are invariant to image scale and rotation \cite{Lowe:2004}. The idea was originally from \cite{Burt_Adelson:1983,Crowley_Stern:1984,Lindeberg:2012}, and became popular after \cite{Lowe:2004}. For a given image, SIFT detects points of interest, called keypoints, under the scale-space model \cite{Lindeberg:1994,Lindeberg:1998}. It then assigns the \textit{orientation feature} at each keypoint. Since the centre point of each patch is our point of interest, we only use orientation feature assignments in the SIFT algorithm and skip the keypoint detection. For more details and different variations of SIFT, we refer the interested reader to \cite{Lindeberg:2012} for a review.
In this subsection, we give theoretical proofs to show why orientation assignments in SIFT is robust to noise and hence can be used to approximate rotation angles between patches.

\begin{defn}
Let $p:\RR^2 \rightarrow \RR$ be a round patch of an image $I: \RR^2 \rightarrow \RR$ defined in Definition \ref{Definition:ContinuousPatchSpace}. The Gaussian smoothed patch, denoted as $L_p$, is defined as the convolution of the Gaussian function $G(\mathbf{s},1)$ with the patch $p$
\be
L_p(\mathbf{x}) = G*p(\mathbf{x}) = \iint_{\RR^2} G(\mathbf{s},1)p(\mathbf{x-s}) d{\mathbf{s}},
\ee
where the Gaussian $G(\mathbf{s}, \ell) := \frac{1}{2\pi \ell^2} e^{-\frac{\| \mathbf{s}\|^2}{2 \ell^2}}$ and $\ell>0$ denotes the scale in the SIFT algorithm. 
\end{defn}
We only consider the case when $\ell=1$. Similar argument could be carried over for $\ell\neq 1$. From now on, we write $G(\mathbf{s},1)$ as $G(\mathbf{s})$ to simplify the notation. 
Suppose $L_p$ and $L_{O.p}$ denote the Gaussian smoothed patches of $p$ and its rotated patch $O.p$ by the angle $\phi$. 
Then since $O \in SO(2)$ and $G(\mathbf{s})$ is rotationally invariant, we have
\be
L_{O.p}(\mathbf{x}) = L_p(O^{-1}\mathbf{x}) 
\ee
and the orientation features of $p$ and $O.p$ will differ by the angle $\phi$ as well.

Denote $S^1$ to be the unit circle in $\mathbb{R}^2$ with the metric induced from the canonical metric of $\mathbb{R}^2$. We now give a mathematical definition of the orientation feature in the SIFT algorithm. 
\begin{defn}[Orientation]
\label{def:orient}
Given a patch $p:D_r \rightarrow \RR$, let $\Psi$ be the map 
\begin{eqnarray}
\label{def:psi}
\Psi : D_r &\rightarrow& S^1 \nonumber \\
\mathbf{x} &\mapsto& \theta,
\end{eqnarray}
where $\theta$ is an angle between $\nabla L_p(\mathbf{x})$ and the positive $x$-axis. 
For a fixed positive number $\delta<\pi$, the \textit{orientation feature} is defined as an angle $\theta ^* \in S^1$ such that
\be
\iint_{\Psi^{-1}(N_{\theta_*})} \|\nabla L(\mathbf{x})\| d\mathbf{x} = \max_{\theta '} \iint_{\Psi^{-1}(N_{\theta'})} \|\nabla L(\mathbf{x})\| d\mathbf{x},
\ee
where $N_{\theta'}:=\{\theta|\,d_{S^1}(\theta ,\theta ')<\delta\}$ and $d_{S^1}$ is the distance with respect to the canonical metric on $S^1$.
\end{defn}

In general, there might be more than one orientation feature associated with one patch. To simplify the discussion, we assume that there is only one orientation feature.

\begin{Assumption}\label{Assumption:OneOrientation}
Take $\epsilon>0$ and $\delta>0$ associated with the orientation feature. Assume
\be
\iint_{\Psi^{-1}(N_{\theta_*})} \|\nabla L(\mathbf{x})\| d\mathbf{x} > \iint_{\Psi^{-1}(N_{\theta'})} \|\nabla L(\mathbf{x})\| d\mathbf{x} + \epsilon  \pi r^2
\ee
for any $\theta ' \in S^1$ such that $d_{S^1}(\theta ',\theta _*)>\delta$. 
\end{Assumption}

With the orientation features of patches, we could define the following ``SIFT distance'' between patches.
\begin{defn}
The ``SIFT distance'' between patches $\tilde{p}$ and $p$ is defined as
\be
d_{\textup{SIFT}}(\tilde{p}, p) := \|\tilde{p} - R(\tilde{\theta})R(\theta)^{-1}.p\|,
\ee
where $\tilde{\theta}\in S^1$ and $\theta\in S^1$ are the orientation feature of $\tilde{p}$ and $p$, and $\| \cdot \|$ denotes the $L^2$ norm. 

\end{defn}

First of all, note that the ``SIFT distance'' is not really a distance, but it is intimately related to the RID.
If $\tilde{p}=R(\phi).p$, where $\phi\in S^1$ and $R: S^1\to SO(2)$ is a diffeomorphic map, then it is clear that the orientation features of $\tilde{p}$ and $p$, denoted as $\tilde{\theta}$ and $\theta$ respectively, are related by $R(\tilde{\theta})R(\theta)^{-1} = R(\phi)$. However, the reverse is not true. If two patches have the same orientation features, it does not imply that they are the same. In practice, suppose the orientation features of $\tilde{p}$ and $p$ are $\tilde{\theta}$ and $\theta$ respectively, we have
\be
d_{\texttt{RID}}(\tilde{p},p)\leq \|\tilde{p}-R(\tilde{\theta})R(\theta)^{-1}.p\|_{L^2}.
\ee
In other words, two patches that are determined to be neighbors by the RID will be determined to be neighbors by the $d_{\texttt{SIFT}}$ defined in (\ref{approximate_accurate_RID}). Note that the SIFT distance could be further improved to better approximate the RID, like (\ref{approximate_accurate_RID}).
Thus, orientation features under SIFT can be used to estimate rotation angles between patches, or to ``nonlinearly filter out'' the impossible neighbors.

Next, we show why the orientation feature is robust to noise. We first study gradients of Gaussian smoothed patches. While the noise is modeled by the generalized random process $\psi\Phi$ in (\ref{definition:ContinuousNoisyPatch}), to make the calculation succinct, we consider the following simplified model:
\be
\label{noisy_patch}
p^{(\texttt{n})}_i(\mathbf{x}) = p^{(\texttt{c})}_i(\mathbf{x}) + \sigma \xi_i(\mathbf{x})
\ee
be a noisy patch defined on $D_r$, where $\sigma>0$, $p^{(\texttt{c})}_i$ denotes the patch of the clean image supported on $D_r$, and $\xi_i$ are i.i.d. standard random normal variables for all $\mathbf{x}\in D_r$.\footnote{Note that in the general model with the patch defined in (\ref{definition:ContinuousNoisyPatch}), the calculation is the same while the cut-off function $\psi$ will come into play and the calculation will be tedious. For example, in (\ref{Calculation:Proof:SIFT:Robust}) the noise term becomes $\Phi(\psi G_x)$ and $\Phi(\psi G_y)$, and the expectation and variance will be similar to the result, but the expression will not be explicit.}
To further simplify notation, we denote the Gaussian smoothed noisy and clean patches, $p^{(\texttt{n})}_i$ and $p^{(\texttt{c})}_i$ defined in (\ref{definition:ContinuousNoisyPatch}), by 
\be
 L^{(\texttt{n})}(\mathbf{x}) := L_{p^{(\texttt{n})}_i}(\mathbf{x}), L^{(\texttt{c})}(\mathbf{x}) := L_{p^{(\texttt{c})}_i}(\mathbf{x}),  \mbox{   and   }
\xi = \xi_i 
\ee
respectively.

\begin{lem}
\label{lem:grad}
We have 
\begin{align}
\mathbb{E}\left[ \| \nabla L^{(\texttt{n})}(\mathbf{x}) - \nabla L^{(\texttt{c})}(\mathbf{x}) \|^2 \right] = \frac{\sigma^2}{4\pi} 
\end{align}
and
\begin{align}
\mathrm{Var}\left[ \| \nabla L^{(\texttt{n})}(\mathbf{x}) - \nabla L^{(\texttt{c})}(\mathbf{x}) \|^2 \right] = \frac{\sigma^4}{16\pi^2} \left(1+\frac{27}{16\pi}\right).
\end{align}
\end{lem}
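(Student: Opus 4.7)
\medskip

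\textbf{Proof plan.} The key observation is that gradient and convolution commute and both are linear, so subtracting the clean patch kills $p^{(\texttt{c})}_i$ entirely:
\be
\nabla L^{(\texttt{n})}(\mathbf{x}) - \nabla L^{(\texttt{c})}(\mathbf{x}) \;=\; \sigma\,\nabla(G*\xi)(\mathbf{x}) \;=\; \sigma\,\bigl((G_x*\xi)(\mathbf{x}),\,(G_y*\xi)(\mathbf{x})\bigr),
\ee
where $G_x(\mathbf{s})=-\tfrac{s_1}{2\pi}e^{-\|\mathbf{s}\|^2/2}$ and $G_y(\mathbf{s})=-\tfrac{s_2}{2\pi}e^{-\|\mathbf{s}\|^2/2}$. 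So the problem reduces to a moment calculation for $\sigma^2(U^2+V^2)$, where $U:=(G_x*\xi)(\mathbf{x})$ and $V:=(G_y*\xi)(\mathbf{x})$ are Gaussian random variables obtained from the white noise $\xi$ via a Wiener-type integral. This step is the conceptual heart of the argument and requires no information whatsoever about the clean patch $p^{(\texttt{c})}_i$.

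\medskip

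\textbf{Expected value.} First I would compute the $L^2$ norms of the Gaussian derivatives by switching to polar coordinates:
\be
\|G_x\|_{L^2}^2 \;=\; \frac{1}{4\pi^2}\int s_1^2 e^{-\|\mathbf{s}\|^2}\,d\mathbf{s} \;=\; \frac{1}{8\pi},
\ee
and $\|G_y\|_{L^2}^2=1/(8\pi)$ by symmetry. By the defining isometry of the Wiener integral against a white noise of unit intensity, $U$ and $V$ are mean-zero Gaussians with $\mathrm{Var}(U)=\mathrm{Var}(V)=1/(8\pi)$. Therefore
\be
\mathbb{E}\bigl[\|\nabla L^{(\texttt{n})}(\mathbf{x})-\nabla L^{(\texttt{c})}(\mathbf{x})\|^2\bigr] \;=\; \sigma^2\bigl(\|G_x\|^2+\|G_y\|^2\bigr) \;=\; \frac{\sigma^2}{4\pi},
\ee
which gives the first assertion immediately.

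\medskip

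\textbf{Variance.} Next I would observe that the covariance of $U$ and $V$ vanishes by antisymmetry, $\int G_x(\mathbf{s})G_y(\mathbf{s})\,d\mathbf{s}=\tfrac{1}{4\pi^2}\int s_1 s_2\,e^{-\|\mathbf{s}\|^2}d\mathbf{s}=0$, so that $U$ and $V$ are in fact \emph{independent} Gaussians with the variance computed above. Writing
\be
\mathrm{Var}\bigl(\sigma^2(U^2+V^2)\bigr) \;=\; \sigma^4\bigl[\mathrm{Var}(U^2)+\mathrm{Var}(V^2)\bigr] \;=\; 2\sigma^4\bigl[\mathbb{E}(U^4)-\mathbb{E}(U^2)^2\bigr],
\ee
I would then apply the Gaussian fourth-moment identity $\mathbb{E}(U^4)=3\,\mathrm{Var}(U)^2$ and the analogous formula for $V$. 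The remaining cross-terms $\mathbb{E}(U^2 V^2)$, if retained in a more careful expansion that avoids the independence shortcut, reduce to the explicit integrals $\int G_x^4\,d\mathbf{s}$ and $\int G_x^2 G_y^2\,d\mathbf{s}$, each computable in polar coordinates.

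\medskip

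\textbf{Main obstacle.} The conceptual step that requires the most care is \emph{justifying the Wiener-integral interpretation} of $(G_x*\xi)(\mathbf{x})$ and $(G_y*\xi)(\mathbf{x})$, since the model (\ref{noisy_patch}) treats $\xi(\mathbf{x})$ pointwise as i.i.d.\ standard normals rather than through the rigorous generalized-random-field framework of (\ref{Definition:NoisyImage}). One must verify that the formal identities $\mathbb{E}[\xi(\mathbf{s})\xi(\mathbf{t})]=\delta(\mathbf{s}-\mathbf{t})$ applied inside the double integral $\mathbb{E}[(G_x*\xi)(\mathbf{x})(G_y*\xi)(\mathbf{x})]=\iint G_x(\mathbf{x}-\mathbf{s})G_y(\mathbf{x}-\mathbf{t})\,\mathbb{E}[\xi(\mathbf{s})\xi(\mathbf{t})]\,d\mathbf{s}\,d\mathbf{t}$ give the correct answer. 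Once this formalism is set up, the calculation is routine; indeed the bulk of the work is in evaluating a handful of explicit polar integrals and invoking Isserlis'/Wick's theorem to handle the fourth-order moments that appear in the variance.
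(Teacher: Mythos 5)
Your reduction $\nabla L^{(\texttt{n})}(\mathbf{x})-\nabla L^{(\texttt{c})}(\mathbf{x})=\sigma\bigl((G_x*\xi)(\mathbf{x}),(G_y*\xi)(\mathbf{x})\bigr)$ is exactly the paper's starting point, and your expectation computation via $\|G_x\|_{L^2}^2=\|G_y\|_{L^2}^2=1/(8\pi)$ reproduces the paper's $\sigma^2/(4\pi)$.

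The variance is where your argument and the stated lemma part ways. Your primary route treats $U=(G_x*\xi)(\mathbf{x})$ and $V=(G_y*\xi)(\mathbf{x})$ as jointly Gaussian via the Wiener isometry, notes $\mathrm{Cov}(U,V)=0$ and hence independence, and applies $\mathbb{E}[U^4]=3\,\mathrm{Var}(U)^2$; this yields $\mathrm{Var}\bigl(\sigma^2(U^2+V^2)\bigr)=2\sigma^4\cdot 2\,(1/(8\pi))^2=\sigma^4/(16\pi^2)$, i.e.\ the asserted answer \emph{without} the factor $\bigl(1+\tfrac{27}{16\pi}\bigr)$. The paper's extra term is generated precisely by the ``diagonal'' contributions it retains in expanding $\mathbb{E}[(U^2+V^2)^2]$, namely $\iint \mathbb{E}[\xi(\mathbf{s})^4]\,G_x(\mathbf{x}-\mathbf{s})^4\,d\mathbf{s}$, the analogous $G_y$ term, and $2\iint \mathbb{E}[\xi(\mathbf{s})^4]\,G_x(\mathbf{x}-\mathbf{s})^2G_y(\mathbf{x}-\mathbf{s})^2\,d\mathbf{s}$: the model (\ref{noisy_patch}) is taken literally as i.i.d.\ normals attached to every point, so the coincidence set $\mathbf{s}=\mathbf{t}=\mathbf{u}=\mathbf{v}$ in the fourfold integral is allowed to contribute as it would in a discrete sum. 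Under the Wiener-integral interpretation you adopt, those sets have Lebesgue measure zero and contribute nothing (consistent with the fact that a linear functional of Gaussian white noise is exactly Gaussian), so your calculation and the paper's cannot both land on the lemma's constant. You gesture at these terms in a parenthetical (``if retained in a more careful expansion''), but your main line of argument discards them, so as written you prove a variance of $\sigma^4/(16\pi^2)$ rather than $\tfrac{\sigma^4}{16\pi^2}\bigl(1+\tfrac{27}{16\pi}\bigr)$. To match the paper you must commit to its pointwise-i.i.d.\ convention and evaluate $\iint G_x^4\,d\mathbf{s}$ and $\iint G_x^2G_y^2\,d\mathbf{s}$ explicitly; if instead you insist on the rigorous white-noise formalism that you yourself identify as the ``main obstacle,'' the lemma's variance constant changes. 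Your proposal leaves this tension unresolved, and that is a genuine gap.
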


\begin{proof} 
Note that 
\be
\| \nabla L^{(\texttt{n})}(\mathbf{x}) - \nabla L^{(\texttt{c})}(\mathbf{x}) \|^2 = \sigma^2\left( \left( \iint_{\RR^2} \xi(\mathbf{s}) G_x(\mathbf{x-s})d\mathbf{s} \right)^2 + \left( \iint_{\RR^2} \xi(\mathbf{s}) G_y(\mathbf{x-s})d\mathbf{s} \right)^2\right), \label{Calculation:Proof:SIFT:Robust}
\ee
where we denote $\nabla L=[L_x\,\, L_y]^T$. 
Since $\xi(\mathbf{s}) \sim \mathcal{N}(0,1)$  are i.i.d., we have 
\begin{align}
 &\mathbb{E} \left[\| \nabla L^{(\texttt{n})}(\mathbf{x}) - \nabla L^{(\texttt{c})}(\mathbf{x}) \|^2 \right] \nonumber \\
 =\,& \sigma^2\mathbb{E}\left[ \iint_{\RR^2} G_x(\mathbf{x-s})\xi(\mathbf{s}) d\mathbf{s}  \iint_{\RR^2} G_x(\mathbf{x-t})\xi(\mathbf{t}) d\mathbf{t} \right] \nonumber \nonumber\\
 &\quad +  \sigma^2\mathbb{E}\left[ \iint_{\RR^2} G_y(\mathbf{x-s})\xi(\mathbf{s}) d\mathbf{s}  \iint_{\RR^2} G_y(\mathbf{x-t})\xi(\mathbf{t}) d\mathbf{t} \right] \nonumber\\
 =\,& \sigma^2\left( \iint_{\RR^2} \EE[\xi(\mathbf{s})^2] G_x(\mathbf{x-s})^2 + \EE[\xi(\mathbf{s})^2] G_y(\mathbf{x-s})^2 d\mathbf{s}\right) \nonumber\\
 =\,& \sigma^2 \left( \iint_{\RR^2} \frac{x^2}{4\pi} e^{-(x^2+y^2)} +  \frac{y^2}{4\pi} e^{-(x^2+y^2)} d\mathbf{s} \right)
 = \frac{\sigma^2}{4\pi}.
\end{align}
To obtain the variance, we evaluate $\mathbb{E}\left( \| \nabla L^{(\texttt{n})}(\mathbf{x}) - \nabla L^{(\texttt{c})}(\mathbf{x}) \|^4 \right)$.
\begin{align}
&\EE \left[ \| \nabla L^{(\texttt{n})}(\mathbf{x}) - \nabla L^{(\texttt{c})}(\mathbf{x}) \|^4 \right]   \nonumber\\
=\,& \sigma^4 \EE\left[ \left( \iint_{\RR^2}  G_x(\mathbf{x-s})\xi(\mathbf{s}) d\mathbf{s}  \right)^4\right] + \,  \sigma^4 \EE\left[ \left( \iint_{\RR^2} G_y(\mathbf{x-s})\xi(\mathbf{s}) d\mathbf{s}  \right)^4\right]  \nonumber\\
& + 2 \sigma^4 \EE\left[\left(\iint_{\RR^2} G_x(\mathbf{x-s})\xi(\mathbf{s}) d\mathbf{s}  \right)^2 \left(\iint_{\RR^2} G_y(\mathbf{x-s})\xi(\mathbf{s}) d\mathbf{s}  \right)^2 \right] \nonumber \\
=\,& \sigma^4 \left[ \left( 3 \iint_{\RR^2} \EE[\xi(\mathbf{s})^2] G_x(\mathbf{x-s}) ^2 d\mathbf{s} \right)^2  + \iint_{\RR^2} \EE[\xi(\mathbf{s})^4] G_x(\mathbf{x-s})^4 d\mathbf{s}\right] \nonumber \\
&  +  \sigma^4 \left[ \left( 3 \iint_{\RR^2} \EE[\xi(\mathbf{s})^2] G_y(\mathbf{x-s}) ^2 d\mathbf{s} \right)^2  + \iint_{\RR^2} \EE[\xi(\mathbf{s})^4] G_y(\mathbf{x-s})^4 d\mathbf{s}\right]  \nonumber\\
&  +2 \sigma^4 \left(\iint_{\RR^2} \EE[\xi(\mathbf{s})^2]  G_x(\mathbf{x-s}) ^2 d\mathbf{s} \right) \left(\iint_{\RR^2} \EE[\xi(\mathbf{s})^2]  G_y(\mathbf{x-s}) ^2 d\mathbf{s} \right)  \nonumber \\
&+ 4\sigma^4  \left(\iint_{\RR^2} \EE[\xi(\mathbf{s})^2] G_x(\mathbf{x-s}) G_y(\mathbf{x-s}) d\mathbf{s} \right)
+ 2 \sigma^4 \iint_{\RR^2} \EE[\xi(\mathbf{s})^4]G_x(\mathbf{x-s})^2 G_y(\mathbf{x-s})^2 d\mathbf{s}  \nonumber\\
=\,&\sigma^4\left( \frac{1}{8\pi^2} + \frac{27}{256 \pi^3}\right).  
\end{align}
Therefore, 
\be
\mathrm{Var}\left[ \| \nabla L^{(\texttt{n})}(\mathbf{x}) - \nabla L^{(\texttt{c})}(\mathbf{x}) \|^2 \right] =\frac{\sigma^4}{16\pi^2} \left(1+\frac{27}{16\pi}\right).
\ee
\end{proof}

\begin{cor}
\label{cor:grad}
With high probability, we have that $\| \nabla L^{(\texttt{n})}(\mathbf{x}) \| \approx \| \nabla L^{(\texttt{c})}(\mathbf{x})\|$. More precisely, for $k>0$
\be
\label{approx:grad}
\mathrm{Pr}\left( \left| \|\nabla L^{(\texttt{n})}(\mathbf{x}) \| - \| \nabla L^{(\texttt{c})}(\mathbf{x}) \| \right| < \sigma \sqrt{\frac{1+1.25k}{4\pi}} \right) > 1 - \frac{1}{1+k^2}.
\ee
\end{cor}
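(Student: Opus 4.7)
The plan is to derive the inequality directly from Lemma \ref{lem:grad} by combining the reverse triangle inequality with a one-sided Chebyshev bound on the squared gradient-difference norm.

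First I would reduce the two-vector statement to a single-variable concentration problem. By the reverse triangle inequality in $\mathbb{R}^2$,
\begin{equation*}
\bigl|\,\|\nabla L^{(\texttt{n})}(\mathbf{x})\| - \|\nabla L^{(\texttt{c})}(\mathbf{x})\|\,\bigr|^{2} \;\leq\; \|\nabla L^{(\texttt{n})}(\mathbf{x}) - \nabla L^{(\texttt{c})}(\mathbf{x})\|^{2} \;=:\; X.
\end{equation*}
So it suffices to control $X$ from above, and any event $\{X < C\}$ is contained in $\{|\|\nabla L^{(\texttt{n})}\| - \|\nabla L^{(\texttt{c})}\|| < \sqrt{C}\}$. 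With $C = \sigma^{2}(1+1.25k)/(4\pi)$ the target probability will follow from a lower bound on $\mathrm{Pr}(X < C)$.

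Next I would apply Lemma \ref{lem:grad}, which provides $\mathbb{E}(X) = \sigma^{2}/(4\pi)$ and $\mathrm{Var}(X) = \sigma^{4}(1 + 27/(16\pi))/(16\pi^{2})$. Writing $C = \mathbb{E}(X) + a$ with $a = 1.25k\,\sigma^{2}/(4\pi)$, the one-sided Chebyshev inequality
\begin{equation*}
\mathrm{Pr}\bigl(X \geq \mathbb{E}(X) + a\bigr) \;\leq\; \frac{\mathrm{Var}(X)}{\mathrm{Var}(X) + a^{2}}
\end{equation*}
yields
\begin{equation*}
\mathrm{Pr}(X < C) \;\geq\; 1 - \frac{\mathrm{Var}(X)}{\mathrm{Var}(X) + a^{2}} \;=\; \frac{1}{1 + a^{2}/\mathrm{Var}(X)}.
\end{equation*}

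The small obstacle is just the numerical verification that the constant $1.25$ absorbs the exact variance constant so the bound simplifies to $1/(1+k^{2})$. A direct computation gives
\begin{equation*}
\frac{a^{2}}{\mathrm{Var}(X)} \;=\; \frac{(1.25k)^{2}}{1 + 27/(16\pi)} \;=\; \frac{1.5625\,k^{2}}{1 + 27/(16\pi)},
\end{equation*}
and since $27/(16\pi) < 9/(16) \cdot 3 < 0.5625$ one checks $1 + 27/(16\pi) < 1.5625$, so $a^{2}/\mathrm{Var}(X) \geq k^{2}$. Plugging this into the Chebyshev bound gives $\mathrm{Pr}(X < C) > 1/(1+k^{2})$, and together with the first step this proves \eqref{approx:grad}. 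No other ingredient is needed beyond Lemma \ref{lem:grad} and the elementary inequalities above.
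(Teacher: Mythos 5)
Your approach is essentially the paper's: reduce to the scalar $X:=\|\nabla L^{(\texttt{n})}(\mathbf{x})-\nabla L^{(\texttt{c})}(\mathbf{x})\|^2$ via the reverse triangle inequality and then apply the one-sided Chebyshev inequality with the mean and variance supplied by Lemma \ref{lem:grad}. The only stylistic difference is that you fix the threshold $C=\mathbb{E}(X)+1.25k\sigma^2/(4\pi)$ up front and verify $(1.25)^2>1+27/(16\pi)$, whereas the paper deviates by $k\sqrt{\mathrm{Var}(X)}$ and bounds $\sqrt{1+27/(16\pi)}<1.25$ at the end; these are the same estimate.

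There is, however, an algebraic slip in the middle that, as written, lands you on the wrong bound. From $\mathrm{Pr}(X\geq\mathbb{E}(X)+a)\leq\mathrm{Var}(X)/(\mathrm{Var}(X)+a^2)$ one gets
\begin{equation*}
\mathrm{Pr}(X<C)\;\geq\;1-\frac{\mathrm{Var}(X)}{\mathrm{Var}(X)+a^{2}}\;=\;\frac{a^{2}}{\mathrm{Var}(X)+a^{2}}\;=\;\frac{1}{1+\mathrm{Var}(X)/a^{2}},
\end{equation*}
not $\frac{1}{1+a^{2}/\mathrm{Var}(X)}$. Your computation $a^{2}/\mathrm{Var}(X)\geq k^{2}$ is exactly the right ingredient, but it must be fed into the correct expression: it gives $\mathrm{Var}(X)/a^{2}\leq 1/k^{2}$, hence $\mathrm{Pr}(X<C)\geq\frac{1}{1+1/k^{2}}=\frac{k^{2}}{1+k^{2}}=1-\frac{1}{1+k^{2}}$, which is \eqref{approx:grad}. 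As stated, your conclusion ``$\mathrm{Pr}(X<C)>1/(1+k^{2})$'' is not the bound in the corollary, and it does not follow from a lower bound on $a^{2}/\mathrm{Var}(X)$, since $1/(1+a^{2}/\mathrm{Var}(X))$ is decreasing in that ratio. Separately, the numerical chain ``$27/(16\pi)<9/16\cdot 3<0.5625$'' is garbled ($27/16>0.5625$); what you need is $27/(16\pi)<27/48=9/16=0.5625$, which holds because $\pi>3$. With these two repairs the argument is complete and coincides with the paper's proof.
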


\begin{proof}
By Lemma \ref{lem:grad} and the one-sided Chebyshev inequality, we can obtain
\be
\mathrm{Pr}\left(  \|\nabla L^{(\texttt{n})}(\mathbf{x}) - \nabla L^{(\texttt{c})}(\mathbf{x}) \|^2 < \frac{\sigma^2}{4\pi} + k\frac{\sigma^2}{4\pi}\left(1+\frac{27}{16\pi} \right)\right) > 1 - \frac{1}{1+k^2}.
\ee
Since $\left| \|\nabla L^{(\texttt{n})}(\mathbf{x}) \| - \| \nabla L^{(\texttt{c})}(\mathbf{x}) \| \right| < \| \nabla L^{(\texttt{n})}(\mathbf{x}) - \nabla L^{(\texttt{c})}(\mathbf{x})\|$, and $\sqrt{1+\frac{27}{16\pi}} <1.25$, we obtain (\ref{approx:grad}).
\end{proof}

\begin{prop}
Let $p^{(\texttt{n})}$ be the associated noisy patch of a clean patch $p^{(\texttt{c})}$. Denote their orientation assignments by $\theta^{(c)*}\in S^1$ and $\theta^{(n)*}\in S^1$, respectively. Suppose $\sigma$ is small and $k$ satisfies $\sigma \sqrt{\frac{1+1.25k}{4\pi}}< \epsilon$, where $\epsilon>0$. With the probability higher than $1-\frac{1}{1+k^2}$, we have
\be
d_{S^1}(\theta^{(c)*}, \theta^{(n)*}) <\delta,
\ee
where $\delta$ is the number in Definition \ref{def:orient}.
\end{prop}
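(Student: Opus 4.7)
The plan is to view the orientation assignment as the argmax over $\theta \in S^1$ of the functional
$$F^{(*)}(\theta) := \iint_{(\Psi^{(*)})^{-1}(N_\theta)} \|\nabla L^{(*)}(\mathbf{x})\| d\mathbf{x}$$
with $(*) \in \{c,n\}$. By Assumption \ref{Assumption:OneOrientation}, the clean maximizer $\theta^{(c)*}$ is strictly separated from every $\theta$ with $d_{S^1}(\theta, \theta^{(c)*}) > \delta$ by a gap of size $\epsilon \pi r^2$. Hence it will suffice to prove that $|F^{(n)}(\theta) - F^{(c)}(\theta)|$ is, uniformly in $\theta$, strictly less than half of this gap on the high-probability event from Corollary \ref{cor:grad}; the three-line chain $F^{(n)}(\theta^{(c)*}) > F^{(c)}(\theta^{(c)*}) - \epsilon \pi r^2/2 > F^{(c)}(\theta) + \epsilon \pi r^2/2 > F^{(n)}(\theta)$ then forces any maximizer $\theta^{(n)*}$ of $F^{(n)}$ into the $\delta$-neighborhood of $\theta^{(c)*}$.

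I would implement this in two technical steps. First, promote the pointwise Chebyshev bound in Corollary \ref{cor:grad} to an $L^1$ control on $D_r$: since $\mathbb{E}\|\nabla L^{(n)}(\mathbf{x}) - \nabla L^{(c)}(\mathbf{x})\|^2 = \sigma^2/(4\pi)$ is independent of $\mathbf{x}$ by Lemma \ref{lem:grad}, the expectation of $\iint_{D_r} \|\nabla L^{(n)} - \nabla L^{(c)}\|^2 d\mathbf{x}$ equals $\sigma^2 r^2/4$, and a one-sided Chebyshev applied to this integrated quantity (its variance being controllable from the fourth-moment identity in the same lemma together with the Gaussian-correlation decay across distinct base points) yields an event $E$ of probability $> 1 - 1/(1+k^2)$ on which $\iint_{D_r} \bigl|\|\nabla L^{(n)}(\mathbf{x})\| - \|\nabla L^{(c)}(\mathbf{x})\|\bigr| d\mathbf{x}$ is dominated by a constant multiple of $\sigma \sqrt{(1+1.25k)/(4\pi)} \cdot r^2 < \epsilon r^2$ by hypothesis. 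Second, transfer this $L^1$ control into a uniform bound on $|F^{(n)}(\theta) - F^{(c)}(\theta)|$ by decomposing the integration region into the intersection $A_\theta := (\Psi^{(c)})^{-1}(N_\theta) \cap (\Psi^{(n)})^{-1}(N_\theta)$ and the symmetric difference $S_\theta$. On $A_\theta$ the $L^1$ bound applies directly, and on $S_\theta$ a flip of the gradient angle across $\partial N_\theta$ forces the perturbation $\|\nabla L^{(n)} - \nabla L^{(c)}\|$ to dominate at least one of $\|\nabla L^{(c)}\|$ and $\|\nabla L^{(n)}\|$, so the $S_\theta$-contribution is again absorbed into the same $L^1$ bound.

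The principal obstacle is controlling the symmetric difference $S_\theta$ uniformly in $\theta$: the set-valued map $\theta \mapsto (\Psi^{(*)})^{-1}(N_\theta)$ depends discontinuously on the noise wherever $\|\nabla L^{(c)}\|$ is small (an arbitrarily small perturbation can flip the angle by nearly $\pi$) and wherever $\Psi^{(c)}(\mathbf{x})$ sits near $\partial N_\theta$. The substantive part of Step 2 is to exploit the cancellation between the two terms defining $S_\theta$ and to use that the weight $\|\nabla L^{(*)}\|$ on the troublesome region is itself small in $L^1$. A secondary technical point, already handled in Step 1, is that Corollary \ref{cor:grad} is a single-point statement that cannot be union-bounded over the uncountable set $D_r$, which is why one must work directly with the integrated second-moment identity furnished by Lemma \ref{lem:grad}.
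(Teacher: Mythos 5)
Your high-level strategy is the same as the paper's: control the perturbation of the orientation functional $F(\theta)=\iint_{\Psi^{-1}(N_\theta)}\|\nabla L\|\,d\mathbf{x}$ via Lemma \ref{lem:grad}/Corollary \ref{cor:grad}, and use the gap in Assumption \ref{Assumption:OneOrientation} to force the noisy maximizer into the $\delta$-neighborhood of the clean one. The paper argues by contradiction: assuming $d_{S^1}(\theta^{(c)*},\theta^{(n)*})>\delta$, it applies the gap inequality to the \emph{noisy} patch, sandwiches $\|\nabla L^{(\texttt{n})}\|$ between $\|\nabla L^{(\texttt{c})}\|\pm\sigma\sqrt{(1+1.25k)/(4\pi)}$ on the event of Corollary \ref{cor:grad}, integrates, and contradicts Assumption \ref{Assumption:OneOrientation}; in doing so it uses a single region $\Psi^{-1}(N_\theta)$ for both the clean and noisy integrands and treats the Corollary's single-point bound as if it held for all $\mathbf{x}\in D_r$ simultaneously. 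Your three-line chain is logically cleaner (it needs the gap assumption only for the clean patch), and you have correctly diagnosed exactly the two places where the paper's argument is non-rigorous.

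However, your proposed repairs do not close. First, in Step 2 the claim that on the symmetric difference $S_\theta$ a flip of the gradient angle across $\partial N_\theta$ forces $\|\nabla L^{(\texttt{n})}-\nabla L^{(\texttt{c})}\|$ to dominate one of $\|\nabla L^{(\texttt{c})}\|$, $\|\nabla L^{(\texttt{n})}\|$ is false: at a point where $\Psi^{(\texttt{c})}(\mathbf{x})$ lies just inside $\partial N_\theta$ and $\Psi^{(\texttt{n})}(\mathbf{x})$ just outside, the full weight $\|\nabla L(\mathbf{x})\|$ enters $F^{(\texttt{n})}(\theta)-F^{(\texttt{c})}(\theta)$ while the perturbation is arbitrarily small, and nothing in the paper's hypotheses prevents $\|\nabla L^{(\texttt{c})}\|$ from concentrating its mass at angles near $\partial N_{\theta}$; closing this requires an additional margin or absolute-continuity assumption on the angular distribution of $\nabla L^{(\texttt{c})}$ that neither you nor the paper states. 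Second, Step 1's integrated Chebyshev needs $\mathrm{Var}\bigl(\iint_{D_r}\|\nabla L^{(\texttt{n})}-\nabla L^{(\texttt{c})}\|^2\,d\mathbf{x}\bigr)$, i.e., the covariance of the squared gradient error at two \emph{distinct} base points, which Lemma \ref{lem:grad} does not provide (it gives only the single-point fourth moment), and whatever bound you derive will not reproduce the specific threshold $\sigma\sqrt{(1+1.25k)/(4\pi)}$ and probability $1-\frac{1}{1+k^2}$ asserted in the Proposition — those constants come from the pointwise Corollary \ref{cor:grad}, which is precisely the statement you (rightly) argue cannot be applied uniformly over $D_r$. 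So the proposal correctly identifies the genuine difficulties but does not yet constitute a proof of the stated bound.
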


\begin{proof}
By Corollary \ref{cor:grad}, the probability that
\be
\| \nabla L^{(\texttt{c})}\| - \sigma \sqrt{\frac{1+1.25k}{4\pi}} < \| \nabla L^{(\texttt{n})}\| < \| \nabla L^{(\texttt{c})}\| - \sigma \sqrt{\frac{1+1.25k}{4\pi}}
\ee
is higher than $1-\frac{1}{1+k^2}$.

Suppose that $d_{S^1}(\theta^{(c)*}, \theta^{(n)*})> \delta$. Then
\begin{align}
\iint_{\Psi^{-1}(N_{\theta^{(n)*}})} \|\nabla L^{(\texttt{n})}(\mathbf{x})\| d\mathbf{x} 
>& \iint_{\Psi^{-1}(N_{\theta^{(c)*}})} \|\nabla L^{(\texttt{n})}(\mathbf{x})\| d\mathbf{x}  +  \epsilon \pi r^2 \nonumber\\
>& \iint_{\Psi^{-1}(N_{\theta^{(c)*}})} \Big(\|\nabla L^{(\texttt{c})}(\mathbf{x})\| -\sigma \sqrt{\frac{1+1.25k}{4\pi}}\Big)  d\mathbf{x}  + \epsilon \pi  r^2,
\end{align}
where $\Psi$ is defined as (\ref{def:psi}).
On the other hand,
\be
\iint_{\Psi^{-1}(N_{\theta^{(n)*}})} \|\nabla L^{(\texttt{n})}(\mathbf{x})\| d\mathbf{x}
< \iint_{\Psi^{-1}(N_{\theta^{(c)*}})} \Big(\|\nabla L^{(\texttt{c})}(\mathbf{x})\| + \sigma \sqrt{\frac{1+1.25k}{4\pi}}\Big)  d\mathbf{x}.
\ee
Combining the two inequalities and the assumption $\sigma \sqrt{\frac{1+1.25k}{4\pi}}< \epsilon$, we have 
 \be
 \iint_{\Psi^{-1}(N_{\theta^{(n)*}})} \|\nabla L^{(\texttt{c})}(\mathbf{x})\| 
 < \iint_{\Psi^{-1}(N_{\theta^{(c)*}})} \|\nabla L^{(\texttt{c})}(\mathbf{x})\|   d\mathbf{x} +  \epsilon \pi r^2
 \ee
 which leads to a contradiction to Assumption \ref{Assumption:OneOrientation}. Hence $d_{S^1}(\theta^{(c)*}, \theta^{(n)*}) < \delta$.
\end{proof}

\begin{rmk}
Definition \ref{def:orient} corresponds to the case where there is only one orientation feature in the SIFT algorithm. We may generalize the definition that allows two (or more) orientation features as the following: 

For fixed small positive numbers $\delta$ and $\epsilon$,  angles $\theta _1^*, \theta _2^* \in S^1$  and $d_{S^1}(\theta _1^*, \theta _2^*) >\delta$ are orientations if 
\be
\iint_{\Psi^{-1}(N_{\theta^*_i})} \|\nabla L(\mathbf{x})\| d\mathbf{x} > \iint_{\Psi^{-1}(N_{\theta'})} \|\nabla L(\mathbf{x})\| d\mathbf{x} + \epsilon  \pi r^2
\ee
for any $\theta ' \in S^1$ outside of $N_{\theta_1^{*}}$ and $N_{\theta_2^{*}}$. 
For a patch with two orientations, we can prove that with high probability, the orientations of the associated noisy patch will be close to the ones of the clean patch.
\end{rmk}

\section{Image Quality Assessment}
\label{sec:img_quality}

Image quality assessment (IQA) is an important subfield in image processing. The goal is to find an index quantifying ``how good'' an image is, which is suitable for different scenarios. We consider measures of two major categories in this paper to evaluate the VNLEM algorithm. The first category consists of objective measures based on a chosen theoretical model without taking the human visual system (HVS) into account. The second category consists of objective measures based on models taking the HVS into account. Below we summarize these measures.
Denote the clean image as $I\in\mathbb{R}^{N\times N}$. We are concerned with how close the noisy observation $I+\sigma\xi\in\mathbb{R}^{N\times N}$ is to $I$, or the denoised image $\tilde{I}\in\mathbb{R}^{N\times N}$ is to $I$. 

The signal-to-noise ratio (SNR) belongs to the first category, and is given in decibels. By denoting 
\be
E:=\left[\sum_{i=1,\ldots,N^2}(\tilde{I}(i)-I(i))^2\right]^{1/2}, 
\ee
the SNR is defined as
\begin{align}
\texttt{SNR}=20\log_{10}\big(\frac{\sigma_{I}}{E}\big)\,,
\end{align}
where $\sigma_I$ is defined in (\ref{Definition:SigmaI}) and assumed to be $1$.
Clearly, if the denoising algorithm can fully recover the clean image; that is $\tilde{I}=I$, then the SNR is $\infty$. 
The peak-signal-to-noise ratio (PSNR) also belongs to the first category, which is given in decibels:
\begin{align}
\texttt{PSNR}=20\log_{10}\big(\frac{p_I}{E}\big)\,,
\end{align}
where 
\be
p_I:=\max_{i=1,\ldots,N^2}|I(i)|.
\ee
The SNR gives us a sense of how strong the signal and the noise are, but if the image is rather homogenous, the SNR is not very informative. The PSNR is a lot more content dependent, and it gives us a sense of how well the high-intensity regions of the image is coming through the noise i.e. the contrast. Since the denoising filter can adjust the contrast of the image, the PSNR can be rather helpful in demonstrating the performance of the various denoising filters. While SNR and PSNR are widely applied IQA's in the field, they do not necessarily tell us all aspects of how well the denoising methods performed. For example, they do not readily capture the edge preserving capability of an algorithm.

To capture the edge preservation performance, we consider the third measurement, the \textit{Sobolev index} \cite{Wilson1997}, which also belongs to the first category. Let $\hat{I}$ and $\hat{\tilde{I}}$ denote the discrete Fourier transforms of $I$ and $\tilde{I}$, respectively. The Sobolev index of order $\kappa$ is then defined by the Sobolev norm, and is given by
\begin{align}
\texttt{SOB} =\left[\frac{1}{\vert \Omega \vert ^ 2} \sum_{\omega \in \Omega} (1 + \vert \eta_\omega \vert ^ 2) ^ \kappa  \vert \hat{I}(\omega) - \hat{\tilde{I}}(\omega) \vert ^ 2  \right]^{1 / 2}\,,
\end{align}
where $\Omega$ is the lattice of the frequency domain and $\eta_\omega$ is the two-dimensional frequency vector associated with $\omega\in\Omega$.

The SNR, PSN, and the Sobolev norm aim to evaluate how close the denoised image is to the clean image. We further consider the earth mover's distance (EMD) to measure how well we could recover the noise \cite[Section 2.2]{VillaniEMD}. 
The EMD between two probability distributions $\mu$ and $\nu$ on $\mathbb{R}$ is defined as
\[
d_{\texttt{OT}}(\mu,\nu) := \int_{\mathbb{R}}\, |f_\mu(x)-f_\nu(x)|\, d x \,,
\]
where $f_\mu(x):=\int_{-\infty}^x d \mu $ is the cumulative distribution function of $\mu$ and similarly for $f_\nu$. 
We will evaluate the EMD to compare how close the distribution of the estimated noise is to the added noise.

The above measurements are designed mainly around the idea of ``how well the error is captured'', or ``error sensitivity'' \cite{Wang_Bovik_Sheikh_Simoncelli:2004}. While they have been widely applied in different problems and provide useful information, it has been well accepted that they do not capture all aspects from the perspective of image quality. Particularly, generally it is not statistically consistent with human observers \cite{Zhang_Zhang_Mou_Zhang:2011}. Several metrics have been designed in the past decades to faithfully take the HVS into account, and they belong to the second category. These metrics emphasize the importance of luminance, the contrast, and the frequency/phase content. To further evaluate the performance of VNLEM, we consider the state-of-art measurement in this category, 
the Feature SIMilarity (FSIM) index \cite{Zhang_Zhang_Mou_Zhang:2011}. 
The FSIM is based on the model that the HVS perceives an image mainly based on its low-level features, such as edges and zero crossings, and it separates the similarity measurement task into phase congruency and gradient magnitude. 
Here we summarize the FSIM index. Suppose the dynamical range of the image is $\mathcal{R}$. 
The definition of FSIM depends on the definition of the phase congruency and gradient magnitude. The phase congruent of $I$ at $i$, denoted as $P_I(i)$, and the gradient magnitude of $I$ at $i$, denoted as $G_I(i)$, are defined in \cite[Equation (3) and Section II.B]{Zhang_Zhang_Mou_Zhang:2011}. Similarly, we could define $P_{\tilde{I}}(i)$ and $G_{\tilde{I}}(i)$. The FSIM between $I$ and $\tilde{I}$ is defined as
\be
\texttt{FSIM}(I,\tilde{I}):=\frac{\sum_{i=1}^{N^2}S_L(i)P_m(i)}{\sum_{i=1}^{N^2}P_m(i)},
\ee
where
\be
P_m(i)=\max\{P_I(i),P_{\tilde{I}}(i)\},\quad S_L(i):=S_P(i)S_G(i),
\ee
\be
S_P(i):=\frac{2P_I(i)P_{\tilde{I}}(i)+T_1}{P_I(i)^2+P_{\tilde{I}}(i)^2+T_1},\quad\mbox{and }S_G(i):=\frac{2G_I(i)G_{\tilde{I}}(i)+T_2}{G_I(i)^2+G_{\tilde{I}}(i)^2+T_2}.
\ee
Here, we follow \cite{Zhang_Zhang_Mou_Zhang:2011} and choose $T_1=0.85$ and $T_2=160$.
There are several other measures of this kind in the field, and we refer interested readers to \cite{Wang_Bovik_Sheikh_Simoncelli:2004,Zhang_Zhang_Mou_Zhang:2011} for a review of these indices.

\section{Numerical Result}
\label{sec:stats}

In our numerical experiments, we fix the following parameters for NLEM, VNLEM, and VNLEM-DD for a fair comparison. Fix $q=13$. We build $13 \times 13$ patches around each pixel of the noisy image. We chose $\epsilon = (16.5)^2$, the number of nearest neighbors as $N_1=100$, the size of the search window for creating the initial affinity matrix is determined by $N_2=10$; that is, $21 \times 21$ neighbours of each patch are chosen for the search window. The $\theta_l$ in (\ref{approximate_Rotation_RID}) is set to 30 degrees, the upsampling operator $U_k$ is implemented by the bicubic interpolation, and $k$ is set to $2$. After building the transition matrix, we choose $m=30$ to evaluate the DM and DD. Finally, we select $\gamma = 0.1$ for the final denoising step. The Matlab code is available via request.

To compare our results with those of the NLEM algorithm, we also preformed the NLEM denoising with $\epsilon = (6.5)^2$\footnote{The code is available in \url{https://www.mathworks.com/matlabcentral/fileexchange/40624-non-local-patch-regression}}, where the search window and patch sizes are chosen to be identical to those selected for our proposed schemes.  
The kernel bandwidth is chosen to give the best performance for the NLEM algorithm in terms of SNR and PSNR.  

In Table~\ref{stats} we report the different IQA metrics, including SNR, PSNR, RMS, SOB, and FSIM discussed previously as well as the computational time, by running the three denoising algorithms on 1,361 sample images of size $512\times 512$\footnote{The images are collected from : \\
$\bullet$ Caltech-UCSD Birds-200-2011 collection at :
\url{http://www.vision.caltech.edu/visipedia/CUB-200-2011.html}
\\ $\bullet$ Digital Image Processing, 3rd ed, by Gonzalez and Woods at : \url{http://www.imageprocessingplace.com/DIP-3E/dip3e_book_images_downloads.htm}
\\
$\bullet$ USC-SIPI image database at: \url{http://sipi.usc.edu/database/}}. 
There are $98$ images for animals, $143$ images for flowers, $52$ images for fruits, $115$ images for landscapes, $450$ images for faces, $419$ images for manmade structures, and $44$ miscellaneous images.
The SOB metric is applied to the image recovery error. This measure particularly reflects the amount of edge information wiped out due to the denoising process. Therefore, the scheme with a lower SOB index performs the better. For the other indices, the higher the index is, the better the performance is. 
Under the null hypothesis that the performance of two algorithms is the same, we reject the hypothesis by the Mann-Whitney U test with the $p$ value less than $10^{-4}$. Note that based on the overall statistics, VNELM and VNLEM-DD outperform NLEM statistically significantly on all IQA metrics. On the other hand, we cannot distinguish the performance of VNLEM and VNLEM-DD statistically, except on the FSIM index. This result suggests that VNLEM-DD could better recover features sensitive to HVS.

The execution times based on 17 images are $501.8\pm 203.3$s, $1489.8\pm 26.1$s, and $1619\pm 34.8$s for NLEM, VNLEM, and VNLEM-DD respectively. This execution time is obtained on a PC with 8 Gb of RAM using a single core from Intel Corei7 CPU with a clock speed of 3.7 GHz running on Microsoft Windows 7.

\begin{table}[ht]
\caption{Summary statistics over 1,361 images of different denoising algorithms evaluated by different image quality assessment metrics. ${}^{*\#}$: $p<10^{-8}$. ${}^{\dagger}$: $p<10^{-6}$. a.u.: the arbitrary unit.}
\label{times}
\begin{center}
\begin{tabular}{m{100 pt} m{100pt}  m{100pt}  m{100pt}}
\hline
\vspace{5pt}
 & \textbf{NLEM} & \textbf{VNLEM} & \textbf{VNLEM-DD}
\\
\hline
\vspace{5pt}
\textbf{PSNR} (dB) & $18.78\pm 2.92^{*\#}$ & $19.49\pm 2.72^{*}$ & $19.62\pm 2.81^{\#}$ 
\\
\textbf{SNR} (dB) & $13.33\pm 2.78^{*\#}$ & $14.04\pm 2.36^{*}$ & $14.18\pm 2.49^{\#}$ 
\\
\textbf{RMS}$\times 100$ (a.u.) & $5.77\pm 1.58^{*\#}$ & $5.35\pm 1.33^{*}$ & $5.24\pm 1.36^{\#}$
 \\
\textbf{SOB}$\times 100$ (a.u.) & $5.9\pm 1.63^{*\#}$ & $5.45\pm 1.37^{*}$ & $5.35\pm 1.4^{\#}$ 
\\
\textbf{OT}$\times 100$  (a.u.) & $0.59\pm 0.4^{*\#}$ & $0.32\pm 0.16^{*}$ & $0.35\pm 0.22^{\#}$ 
\\
\textbf{FSIM}$\times 100$  (a.u.) & $88.33\pm 2.98^{*\#}$ & $89.64\pm 2.13^{*\dagger}$ & $90.03\pm 2.13^{\#\dagger}$ 
\\
\hline
\end{tabular}
\end{center}
\label{stats}
\end{table}

\subsection{Comparison between NLEM, VNLEM, and VNLEM-DD}

Fig.~\ref{beans} depicts an example of noisy image recovery performed using three different denoising algorithms, namely the NLEM, the VNLEM and the VNLEM-DD algorithms. The original image is of size $512 \times 512$. In this figure, we have also presented the denoising error for each scheme. These errors help us identify the amount of details and desired features that are lost in the image recovery process. For this and the consequent examples, we have also reported the PSNR, SNR, SOB, and FSIM values achieved by the denoising process. 
The PSNR, SNR, SOB, and FSIM all suggest that by taking the rotational fiber structure into account, the proposed algorithm improves the NLEM scheme in terms of the amount of details preserved in the recovered image.
A close look at the results of the recovered image suggests that the VNLEM-DD algorithm preserves the most amount of texture features present in the image. The smaller SOB's of VNLEM and VNLEM-DD indicate a better edge preservation. The visual perception improvement is captured by the higher FSIM.

Similar observations can be made in our second example in Fig.~\ref{starfish}.\footnote{The original image can be found at \url{https://wall.alphacoders.com/big.php?i=109992}.} 
The visual perception improvement by reading Fig.~\ref{starfish} is supported by the higher FSIM. By looking at the denoising errors one can notice that the details of the edges are lost in all three schemes. However, while the VNLEM and VNLEM-DD algorithms lead to higher SNR and PSNR, and edge preservation; this fact is quantified by the larger SOB.

\begin{figure}[ht]
\centering
\subfigure[Original image]{
$\hspace{-0.5cm}$
\includegraphics[width=0.30\textwidth]{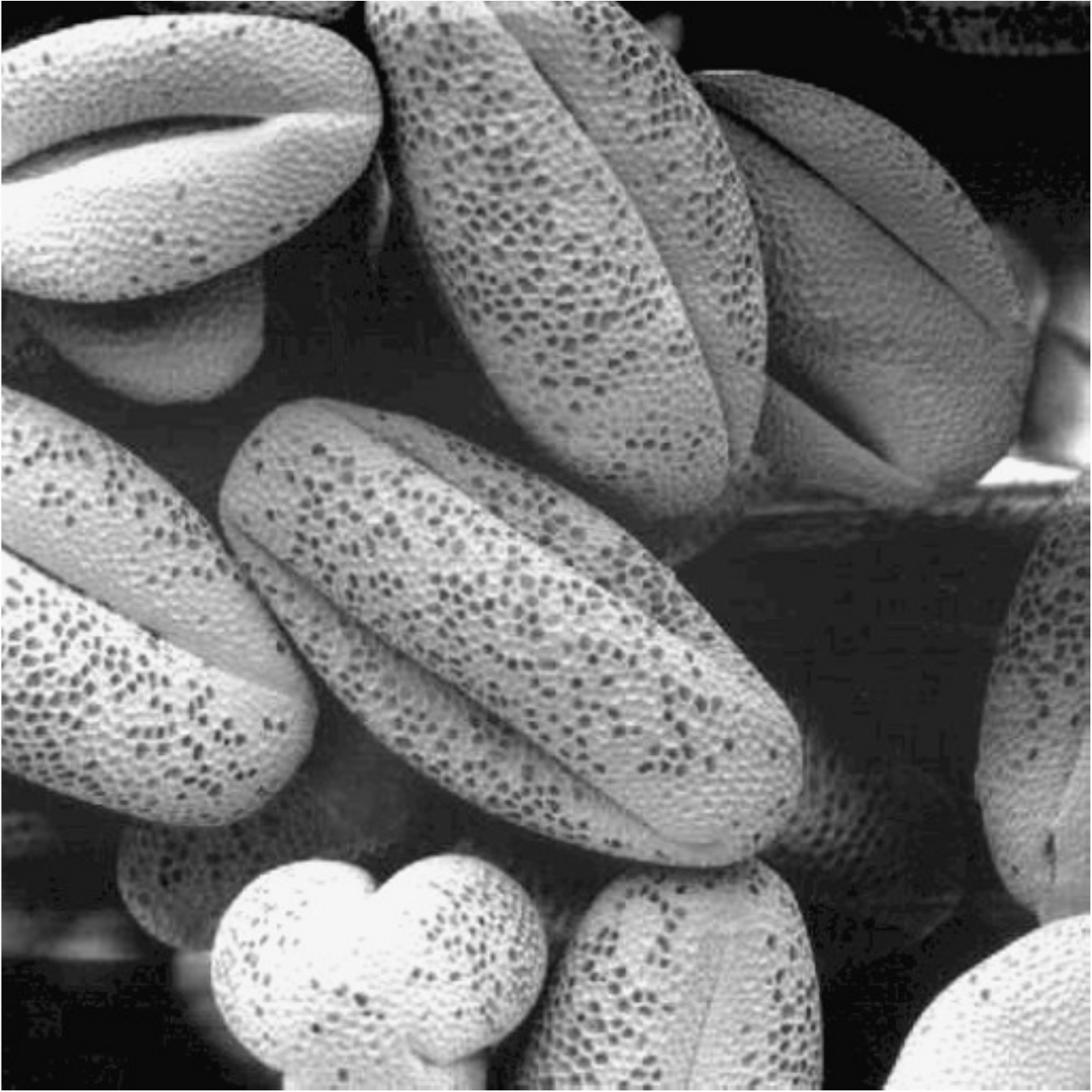}}
\subfigure[Noisy image, PSNR = $11.47$, SNR = $5.24$]{
\includegraphics[width=0.30\textwidth]{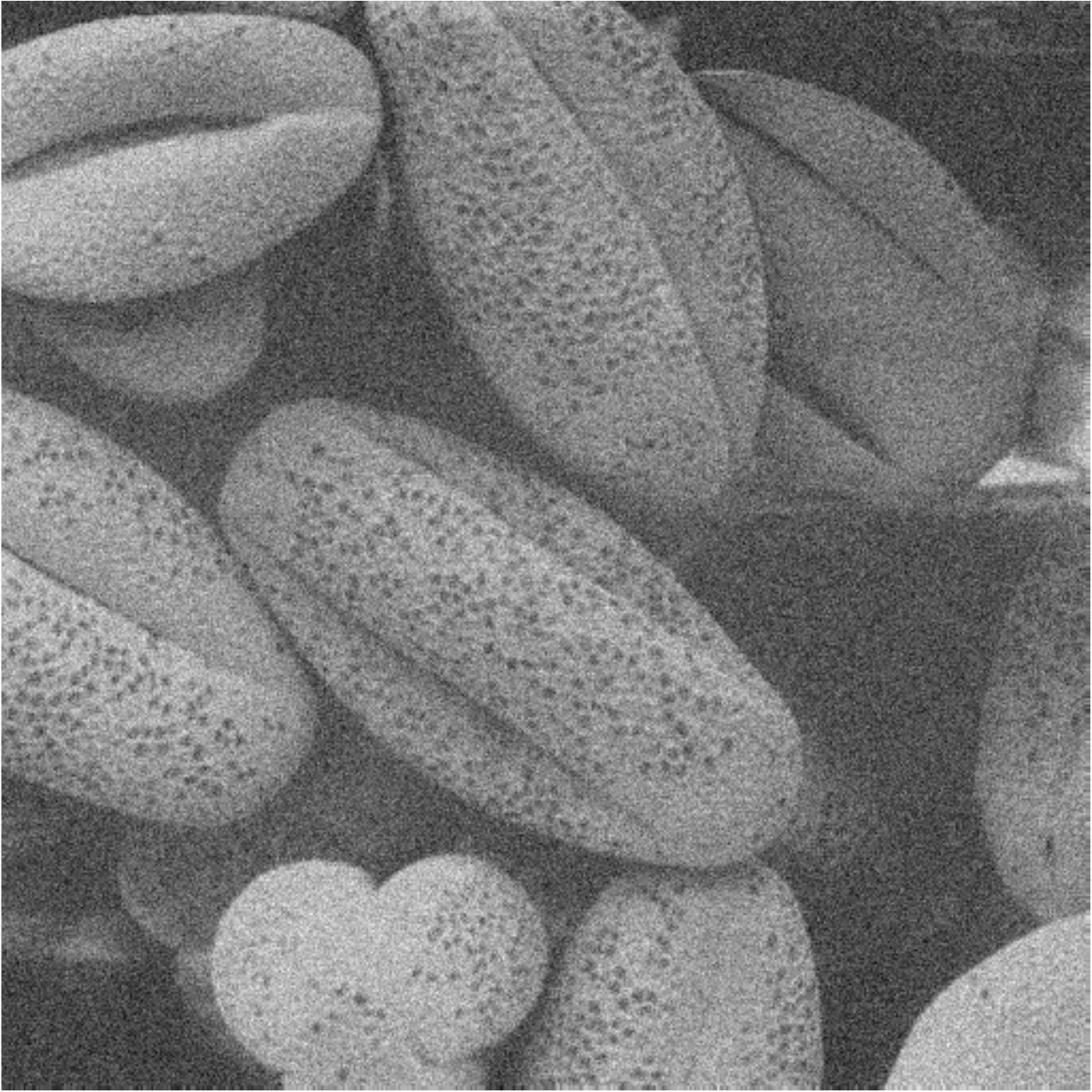}}

\subfigure[NLEM, PSNR = $19.37$, SNR = $13.14$, FSIM = $0.853$]{
\includegraphics[height=0.30\textwidth]{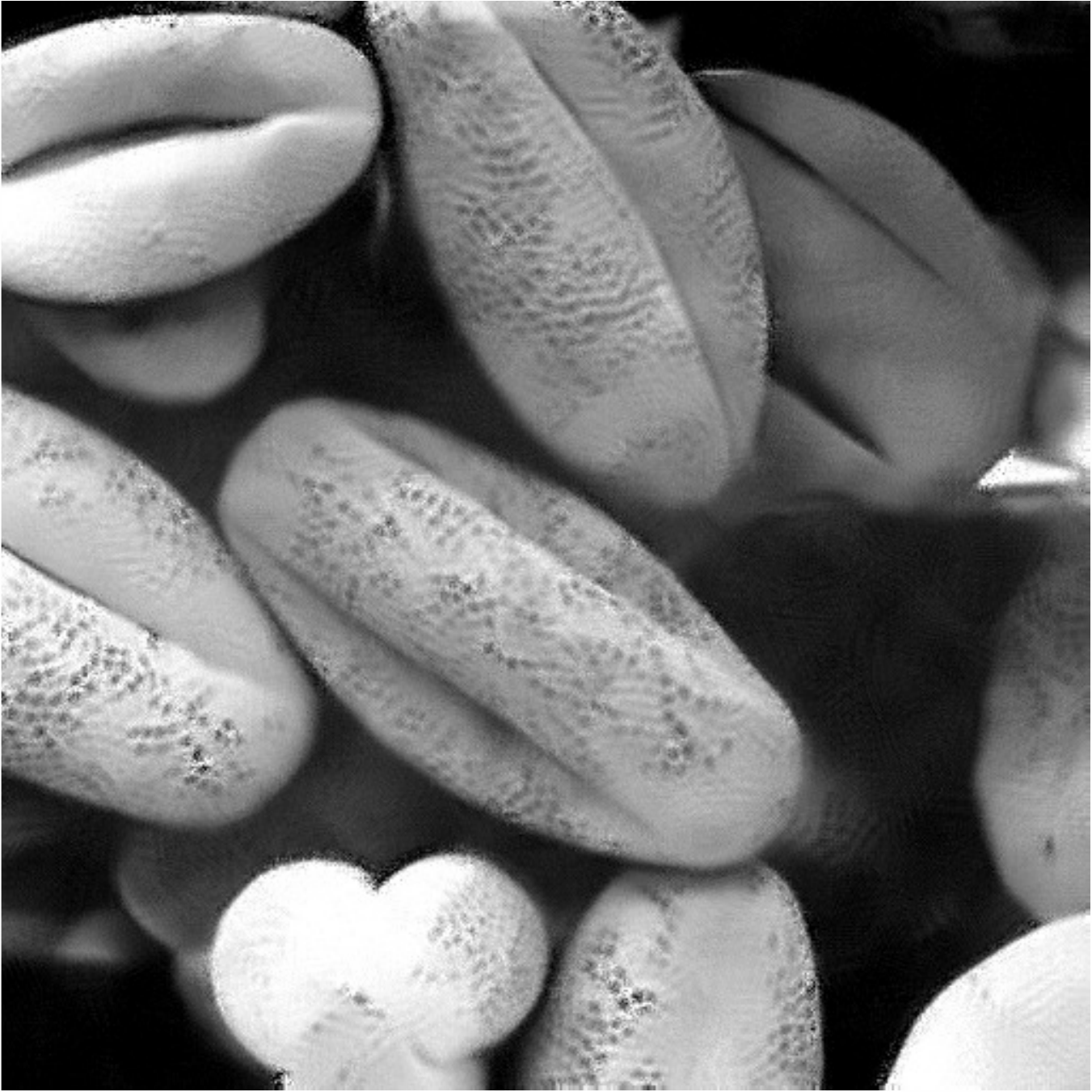}}
\subfigure[VNLEM, PSNR = $20.22$, SNR = $13.99$, FSIM = $0.883$]{
\includegraphics[height=0.30\textwidth]{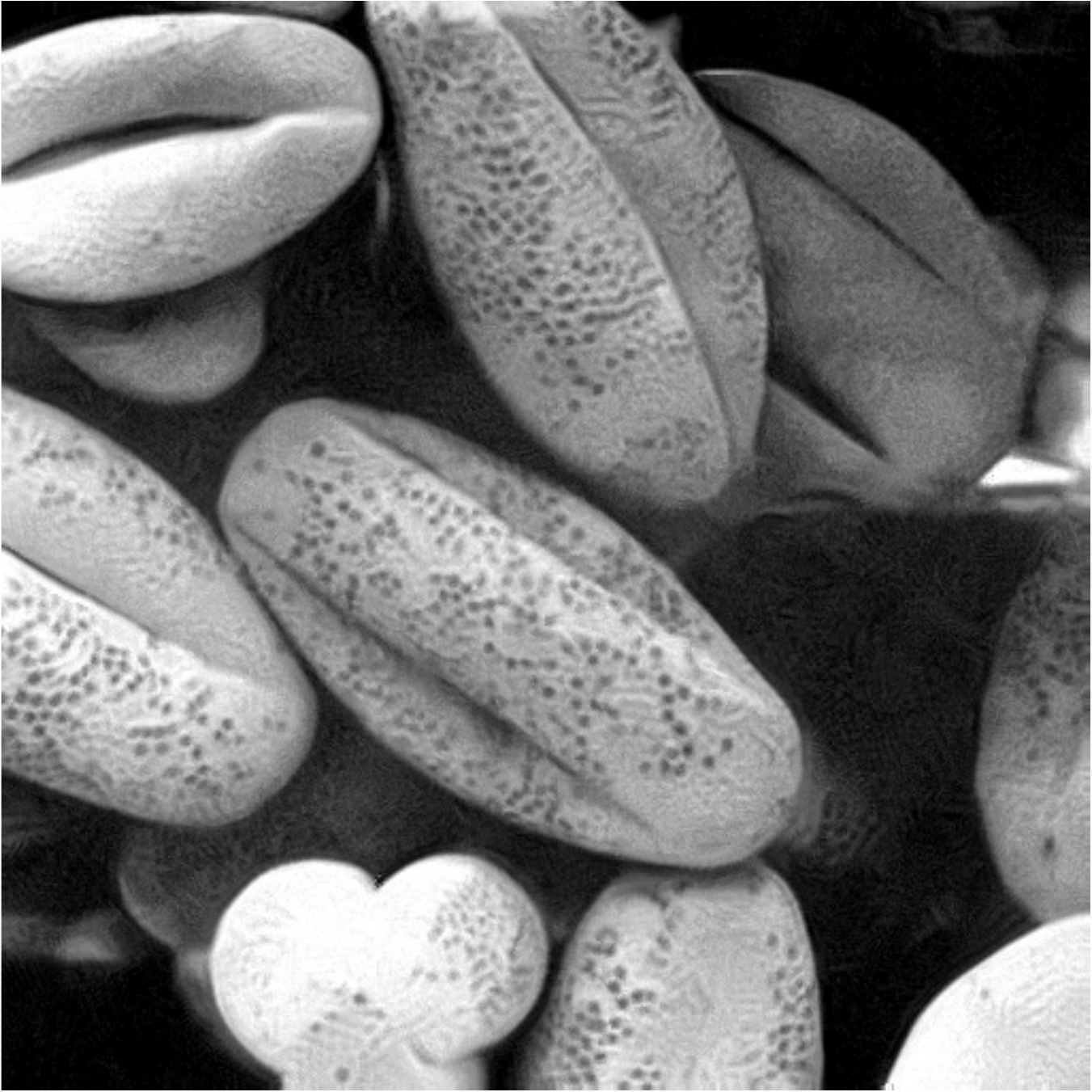}}
\subfigure[VNLEM-DD, PSNR = $20.07$, SNR = $13.84$, FSIM = $0.877$]{
\includegraphics[height=0.30\textwidth]{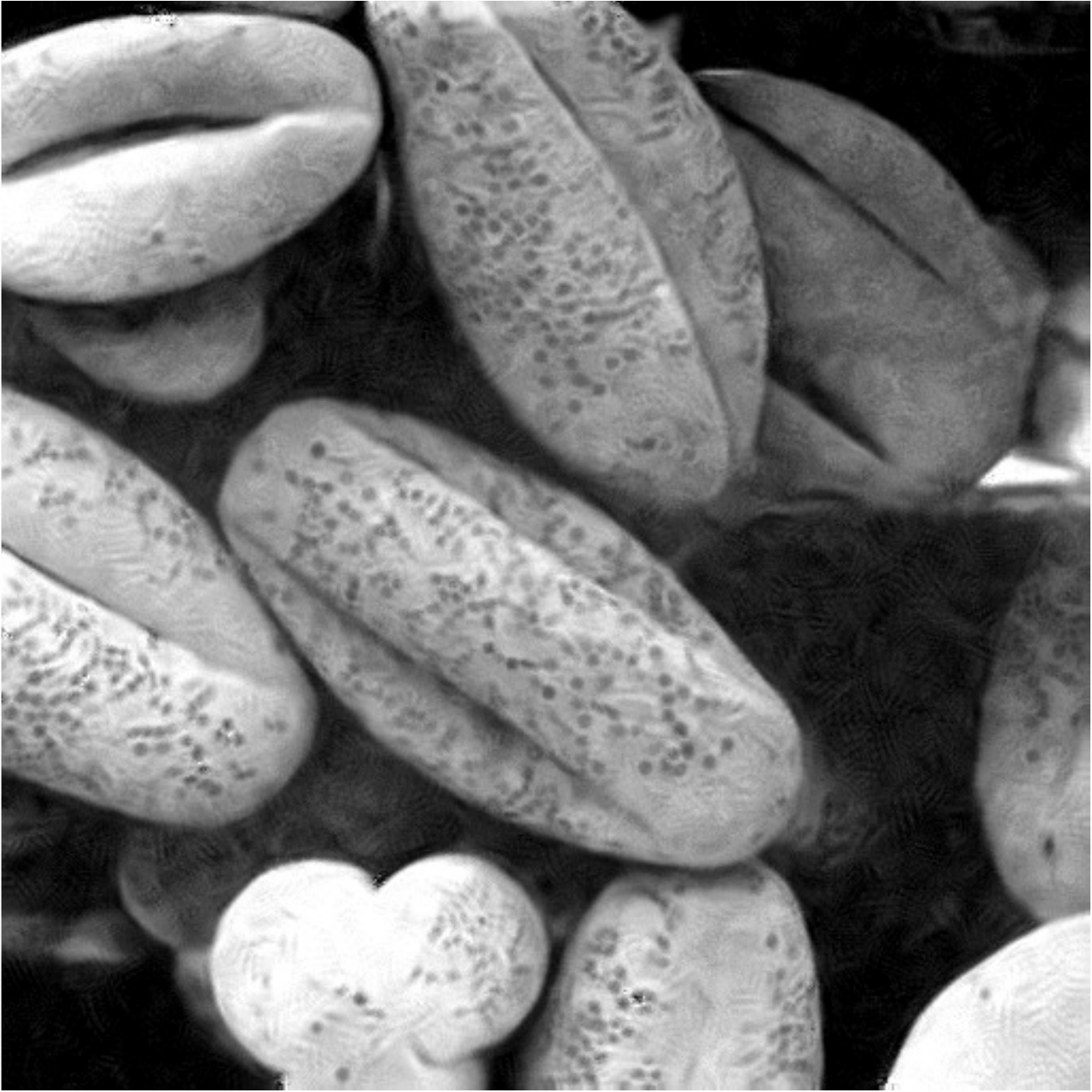}}

\subfigure[NLEM, difference, SOB = $0.166$]{
\includegraphics[height=0.28\textwidth]{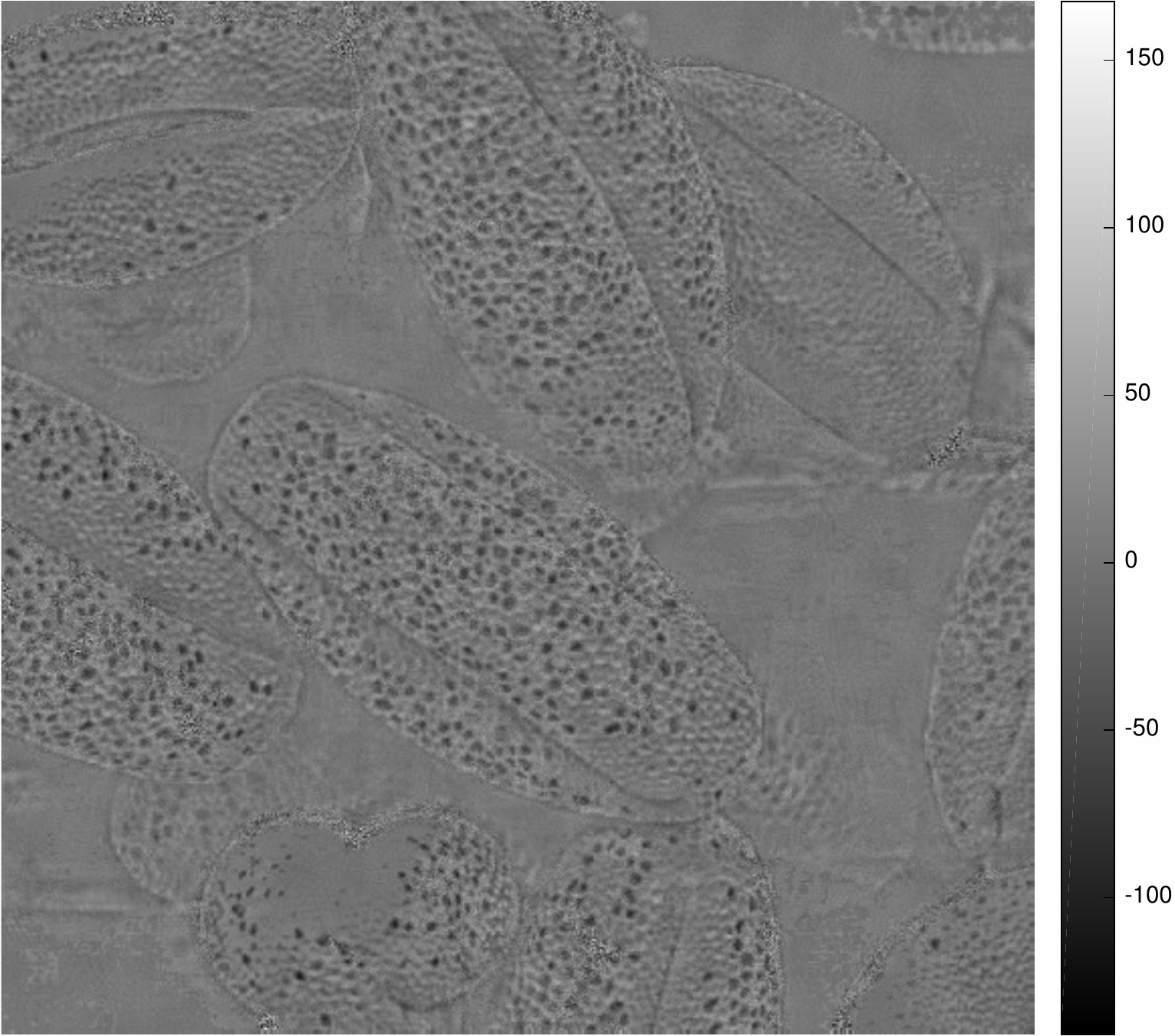}}
\subfigure[VNLEM, difference, SOB = $0.059$]{
\includegraphics[height=0.28\textwidth]{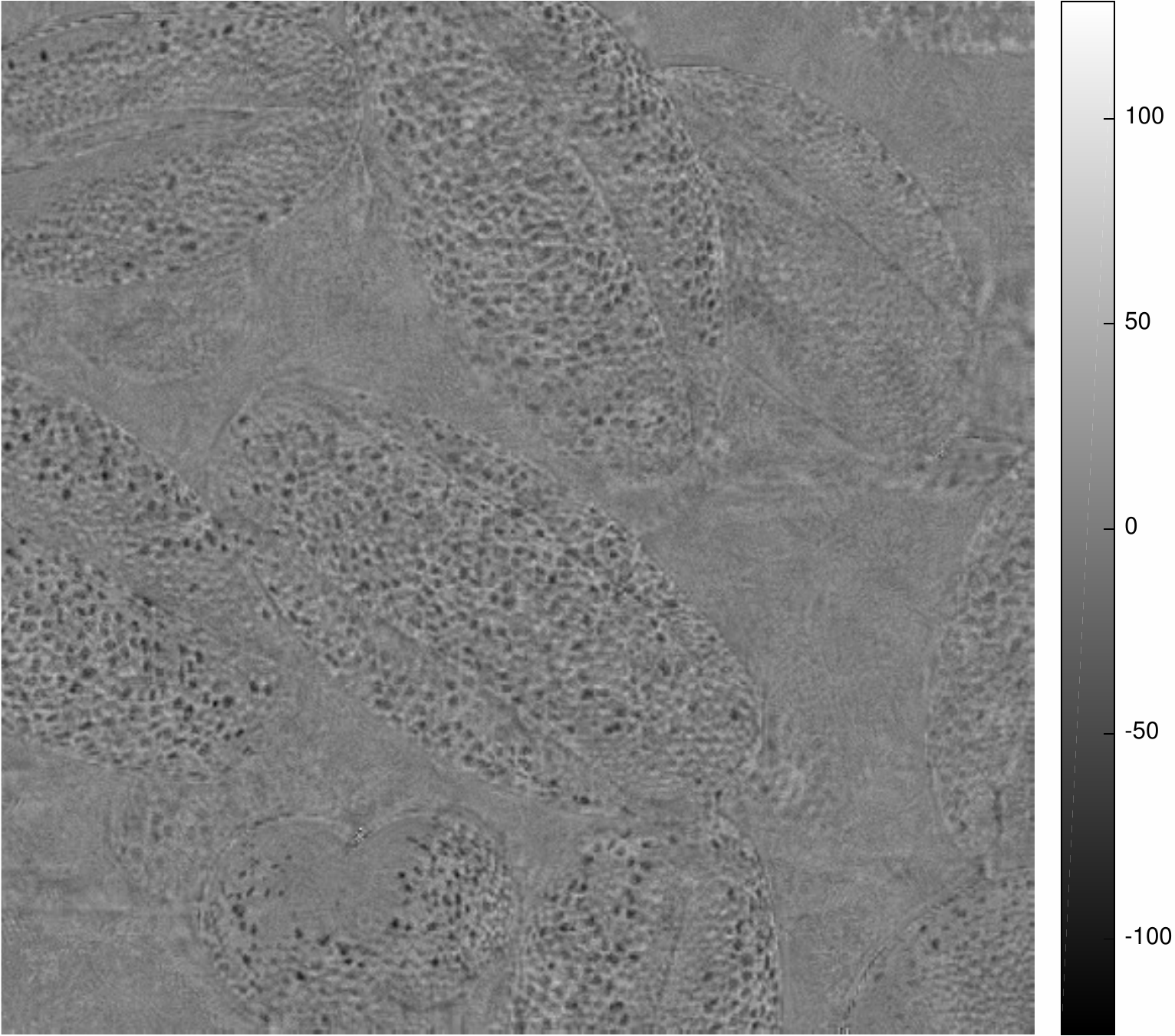}}
\subfigure[VNLEM-DD, difference, SOB = $0.06$]{
\includegraphics[height=0.28\textwidth]{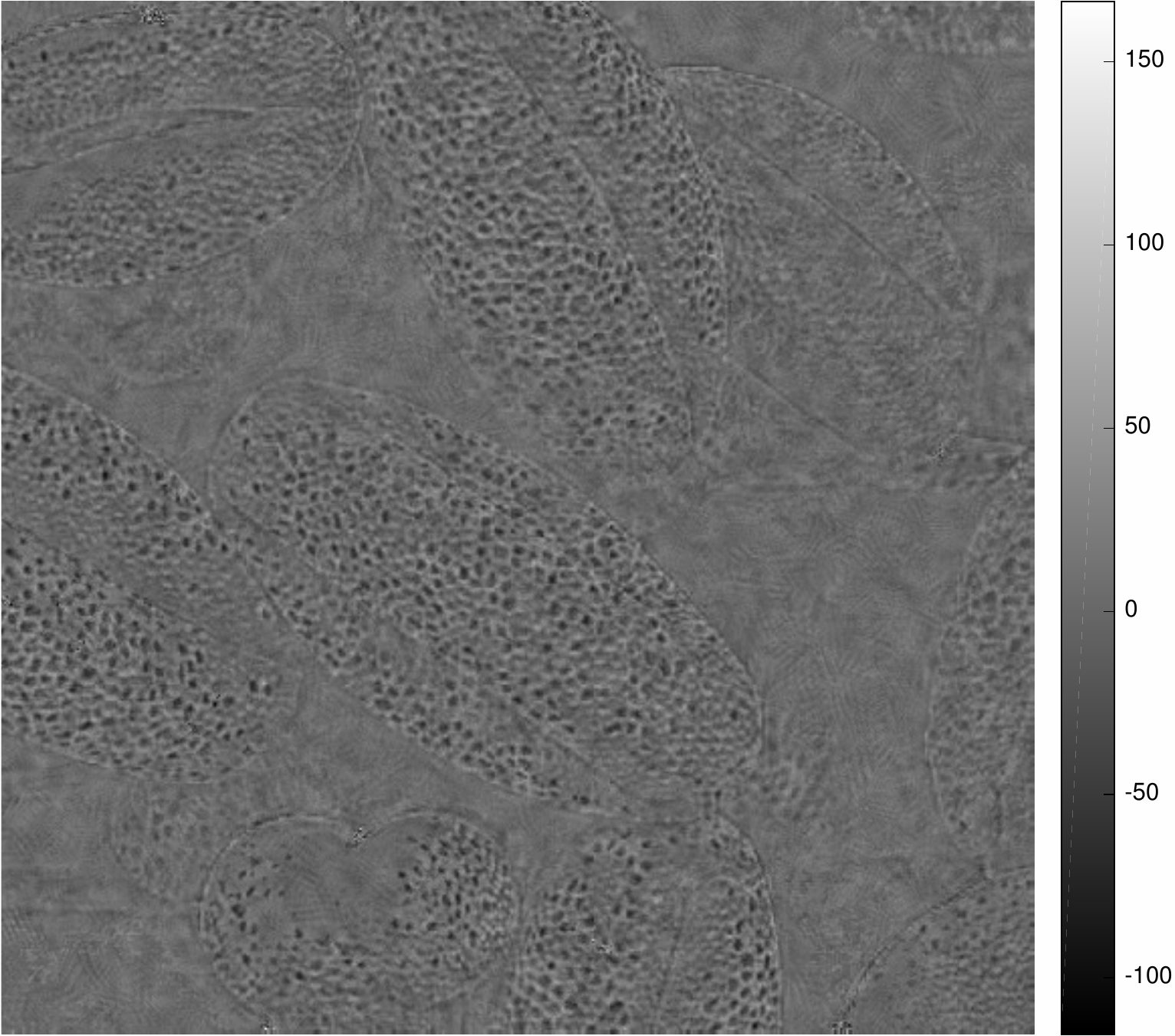}}

\caption{\label{beans}Example 1: the beans}
\end{figure}

\begin{figure}[ht]
\centering
\subfigure[Original image]{
$\hspace{-0.5cm}$
\includegraphics[height=0.30\textwidth]{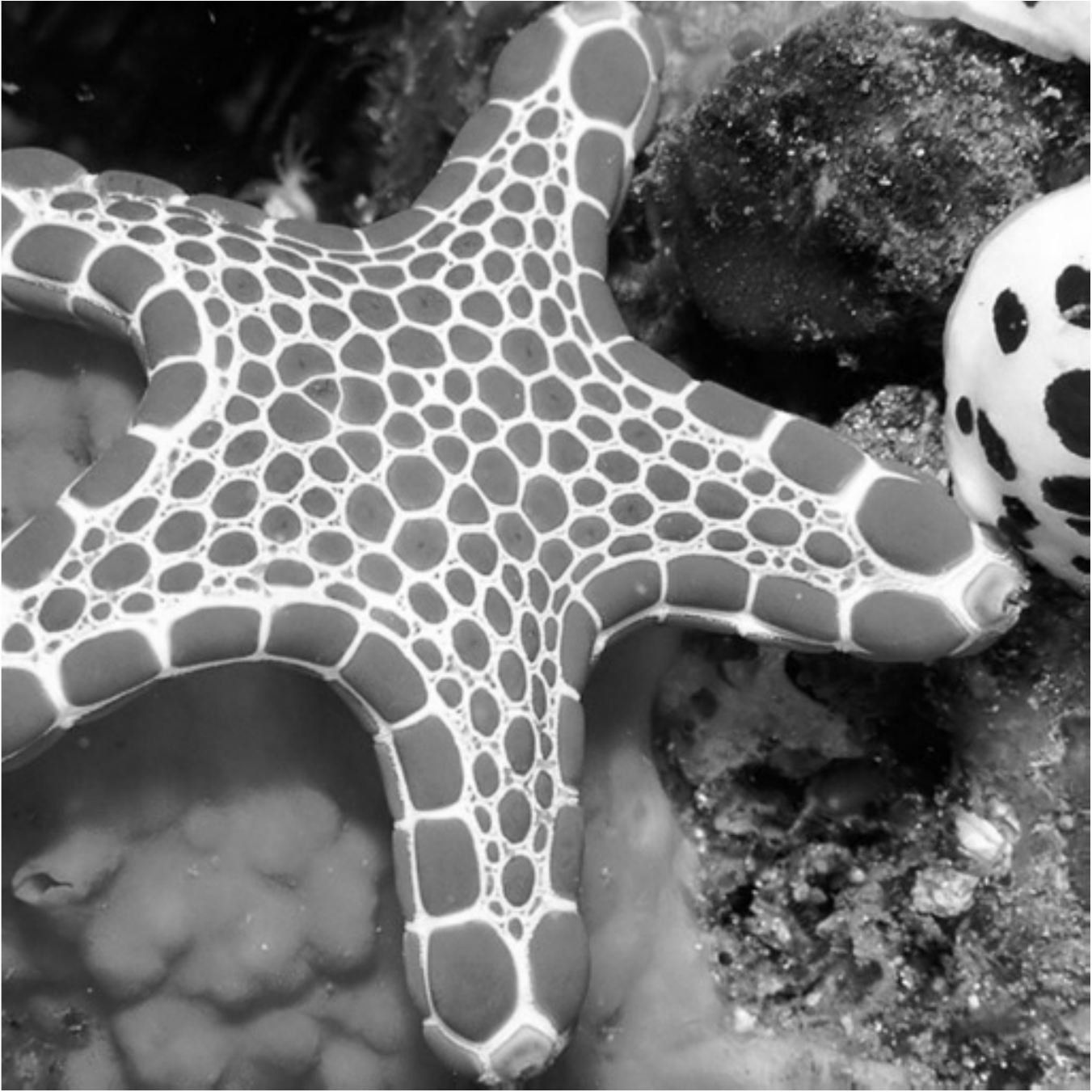}}
\subfigure[Noisy image, PSNR = $11.60$, SNR = $5.25$]{
\includegraphics[height=0.30\textwidth]{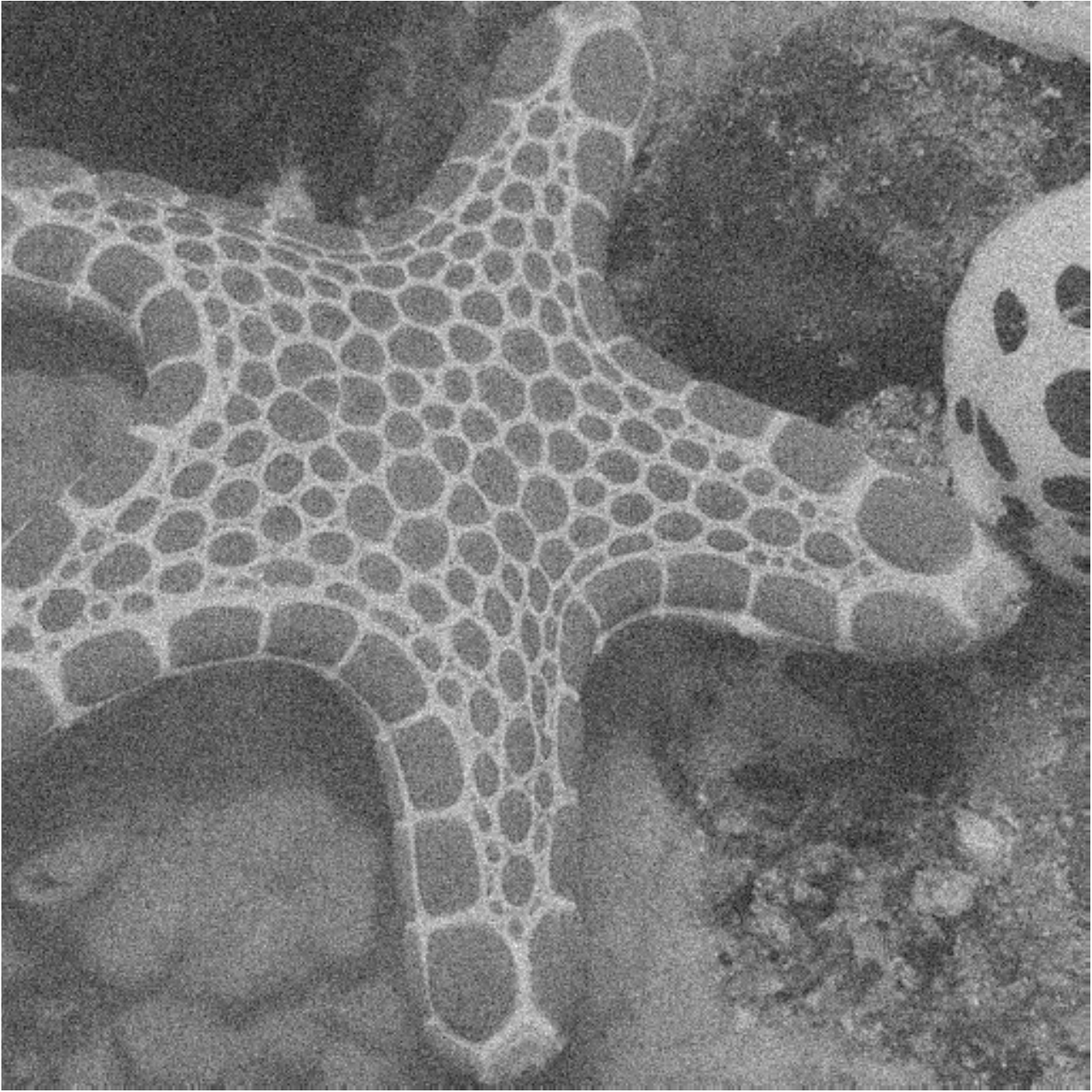}}

\subfigure[NLEM, PSNR = $16.79$, SNR = $10.42$, FSIM = $0.902$]{
\includegraphics[height=0.30\textwidth]{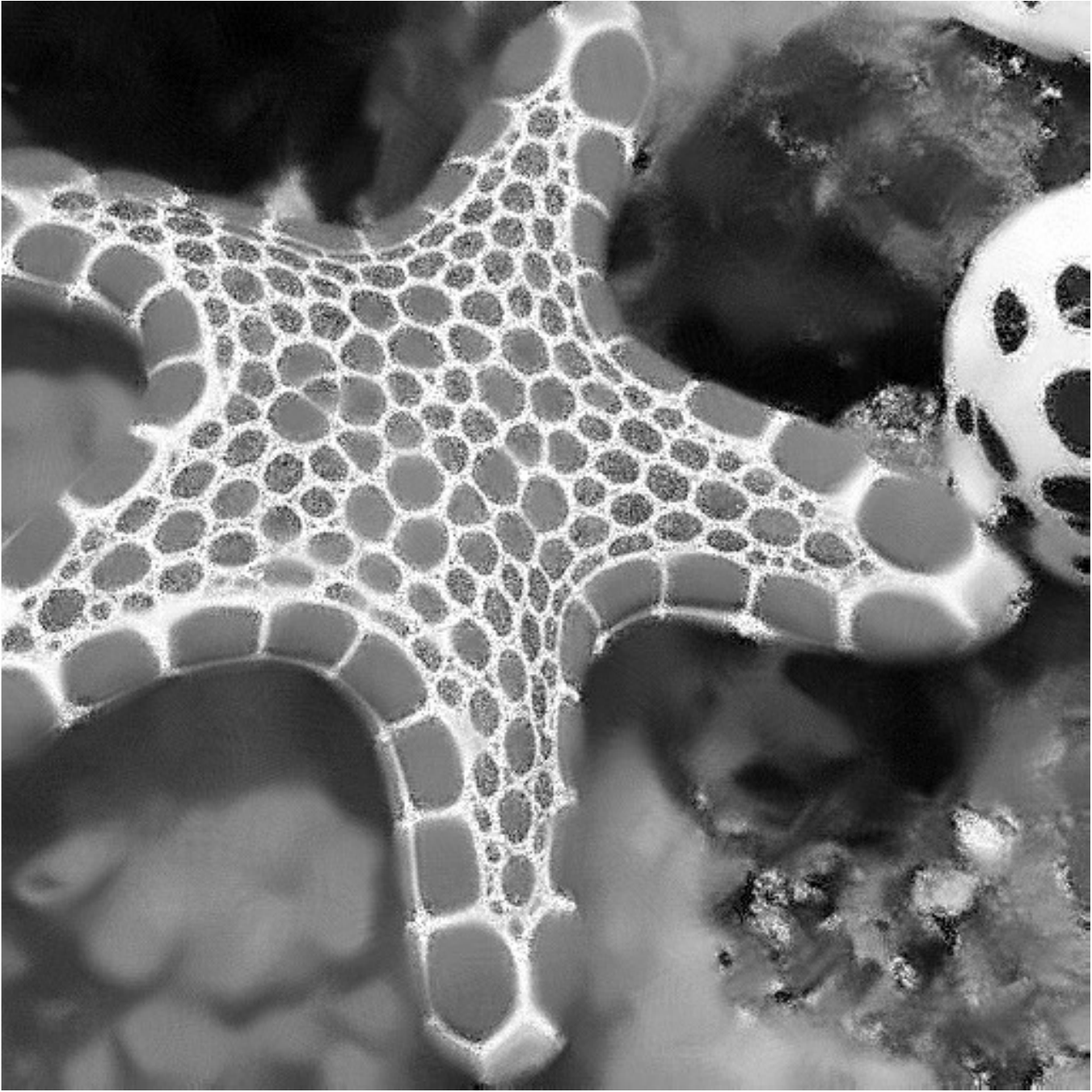}}
\subfigure[VNLEM, PSNR = $19.68$, SNR = $13.32$, FSIM = $0.914$]{
\includegraphics[height=0.30\textwidth]{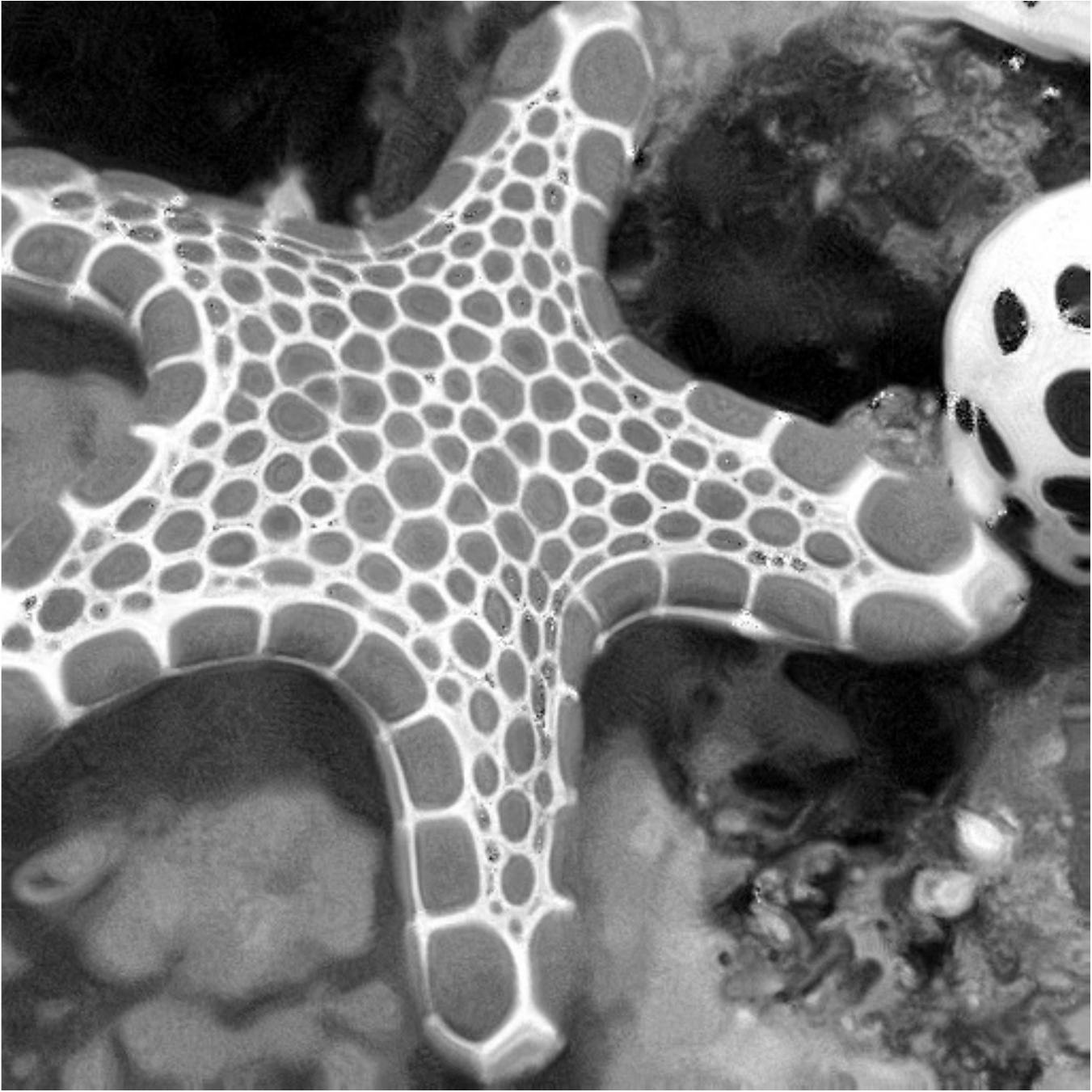}}
\subfigure[VNLEM-DD, PSNR = $19.49$, SNR = $13.13$, FSIM = $0.918$]{
\includegraphics[height=0.30\textwidth]{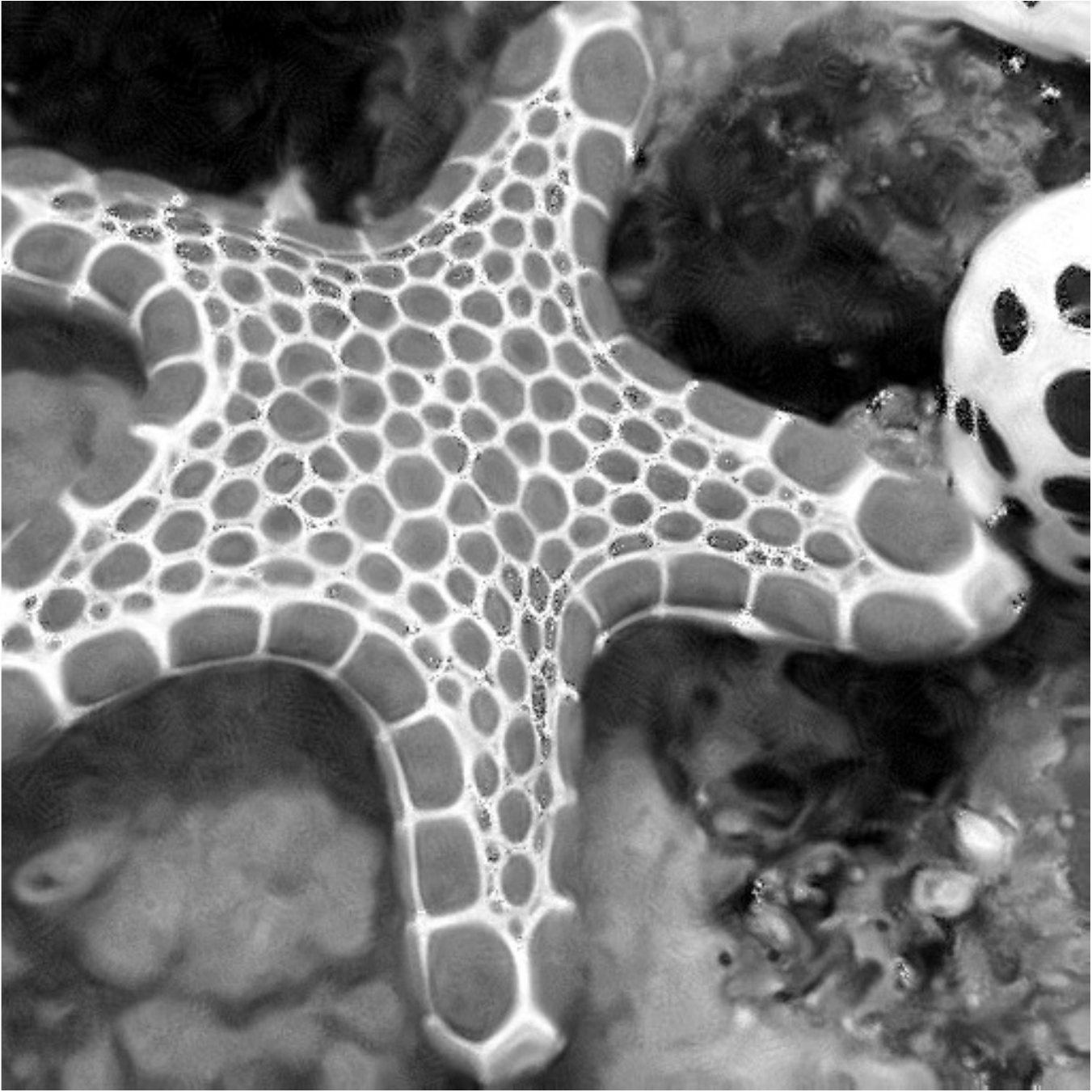}}

\subfigure[NLEM, difference, SOB = $0.047$]{
\includegraphics[height=0.28\textwidth]{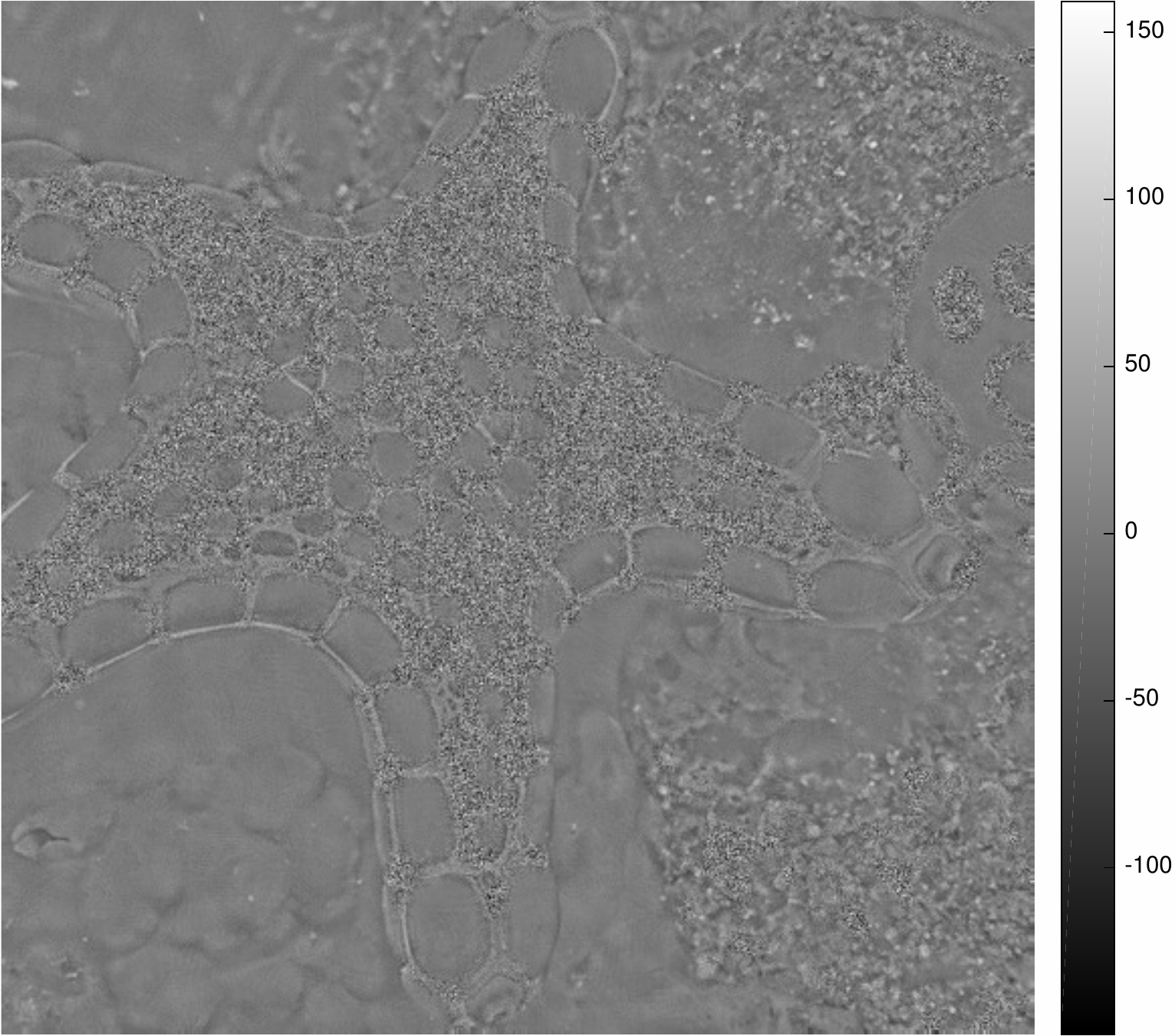}}
\subfigure[VNLEM, difference, SOB = $0.057$]{
\includegraphics[height=0.28\textwidth]{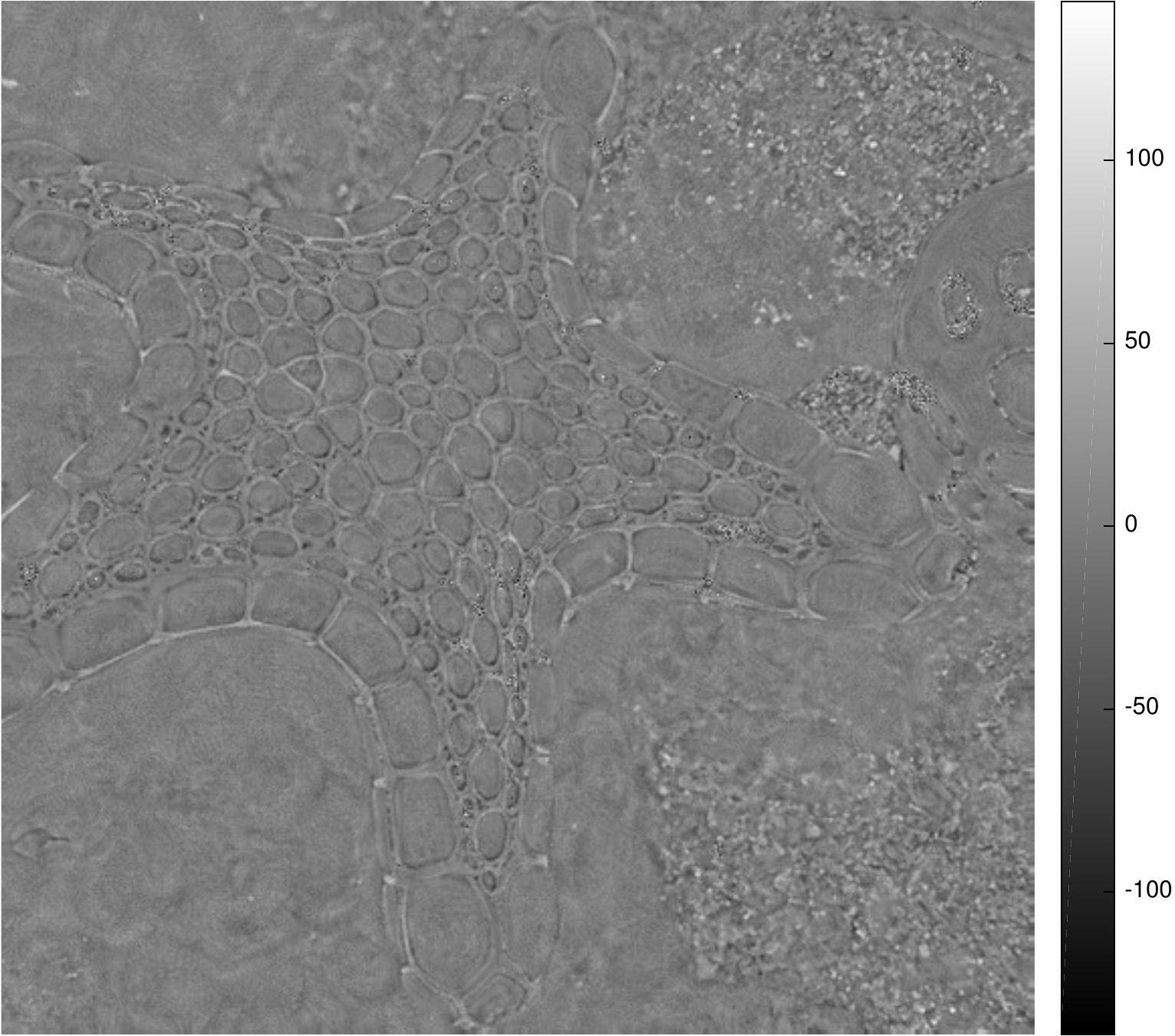}}
\subfigure[VNLEM-DD, difference, SOB = $0.057$]{
\includegraphics[height=0.28\textwidth]{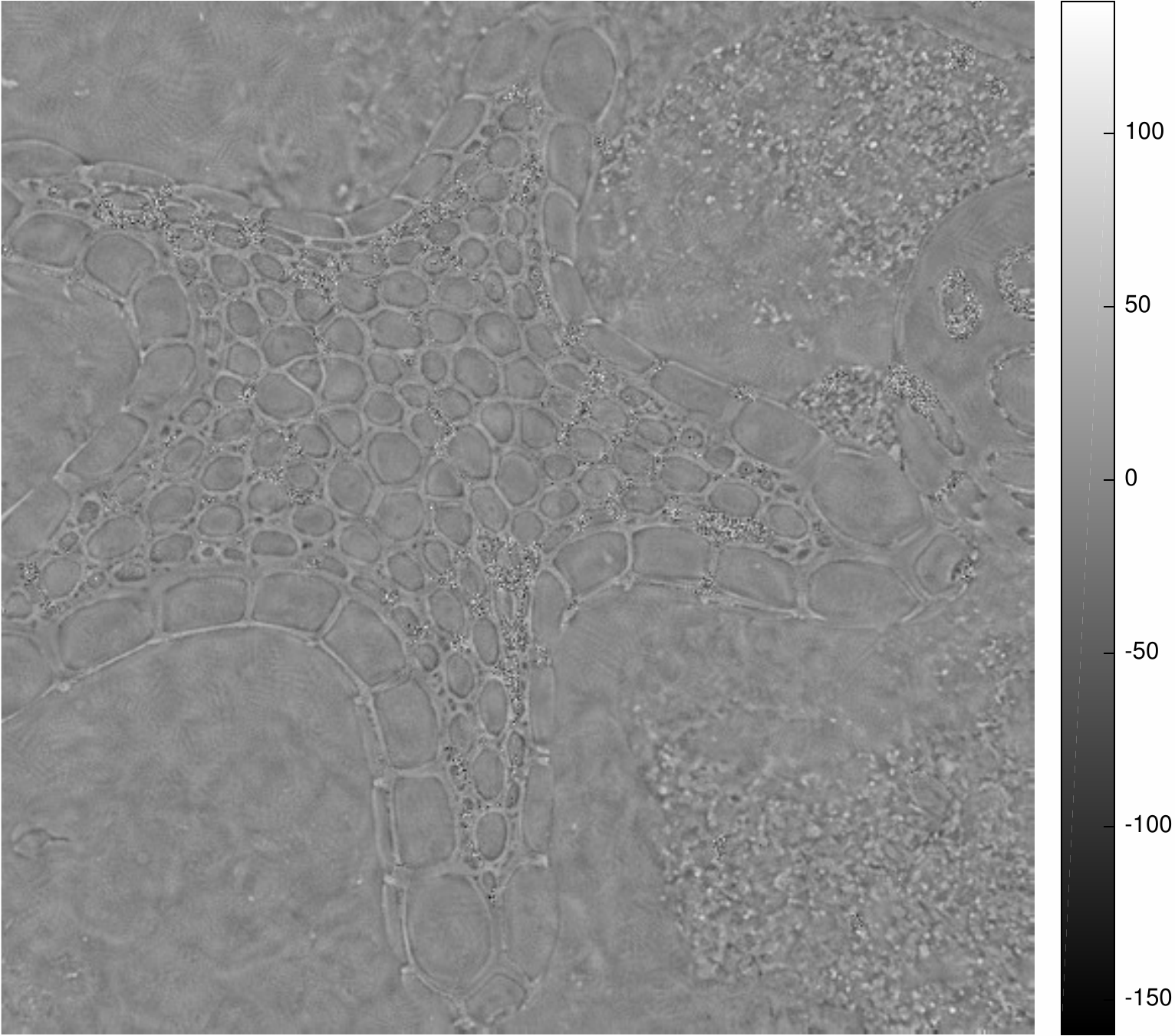}}

\caption{\label{starfish}Example 2: The starfish.}
\end{figure}

\subsection{Limitations of VNLEM and VNLEM-DD}
While statistically VNLEM and VNLEM-DD outperform NLEM, there are cases where NLEM outperforms. We now take a closer look into some of these examples. In Fig.~\ref{lady}, we see that the NLEM scheme achieves higher PSNR and SNR values in the recovered image. A second look at the recovery errors reveals that this superior performance comes at the cost of substantial loss of edge details in the results, and this is reflected in the higher SOB metric. Specifically, note that the teeth are better recovered in the VNLEM-DD. This result supports that the PSNR and SNR measures alone cannot throughly represent the performance of a denoising scheme, and IQA's from different perspectives are needed to better quantify the performance. 
It is also worth noting that the VNLEM-DD algorithm introduces ``texture-like artifices'' in the areas of the image that do not manifest any distinct feature, for example, the forehead of the portrait shown in Fig.~\ref{lady}. This comes from the following facts. Note that the clean patches associated with this region are concentrated at one point in the high dimensional space $\mathbb{R}^{q^2}$ and the added noise creates a geometric pattern (see, for example, the discussion in \cite{ElKaroui:2010}) that is irrelevant to the underlying image itself. Thus, the DD provides a deviated neighbors for the median filter. We thus have to be careful when applying VNLEM-DD on images with this type of ``flat region''.

\begin{figure}[ht]
\centering
\subfigure[Original image]{
$\hspace{-0.5cm}$
\includegraphics[width=0.30\textwidth]{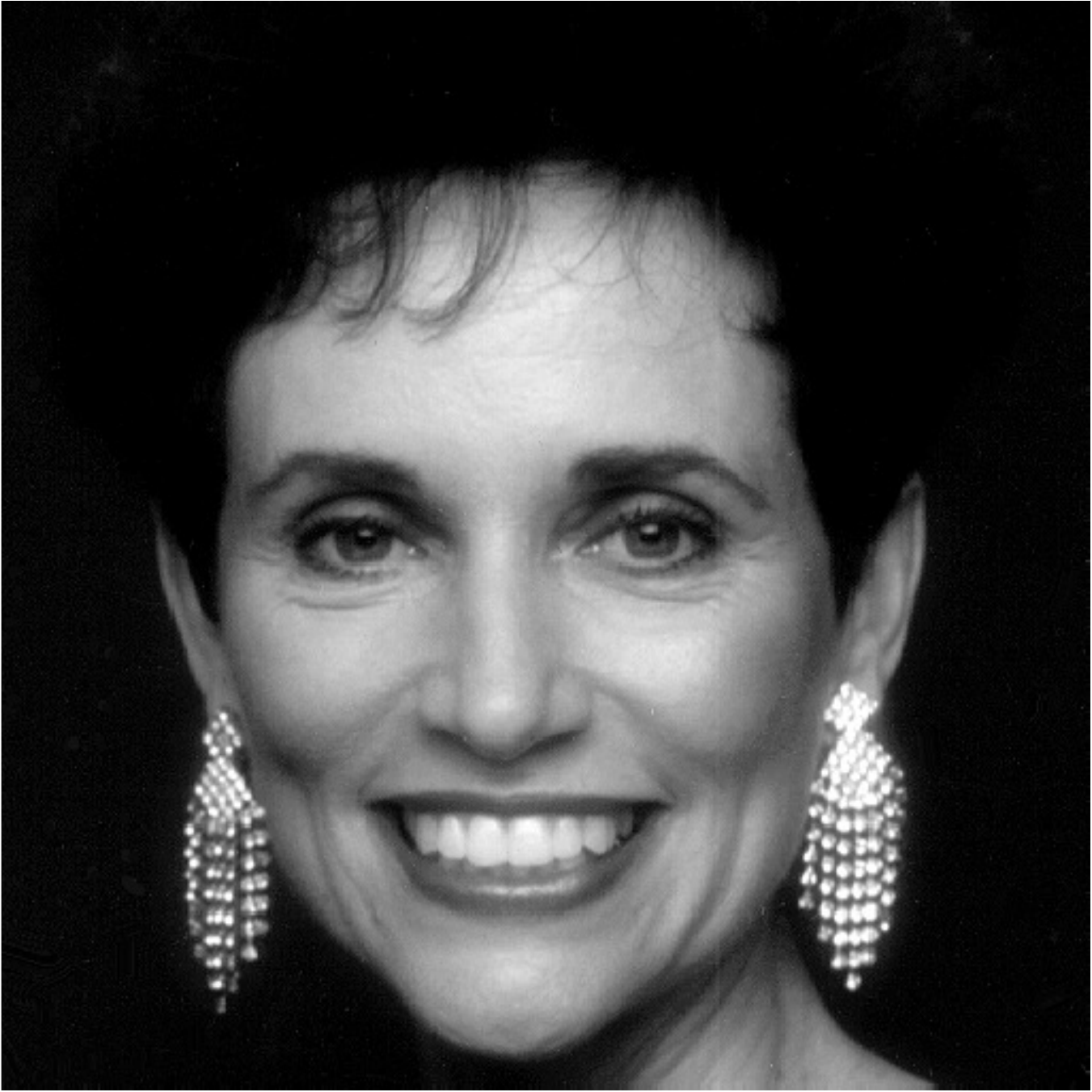}}
\subfigure[Noisy image, PSNR = $13.55$, SNR = $5.23$]{
\includegraphics[width=0.30\textwidth]{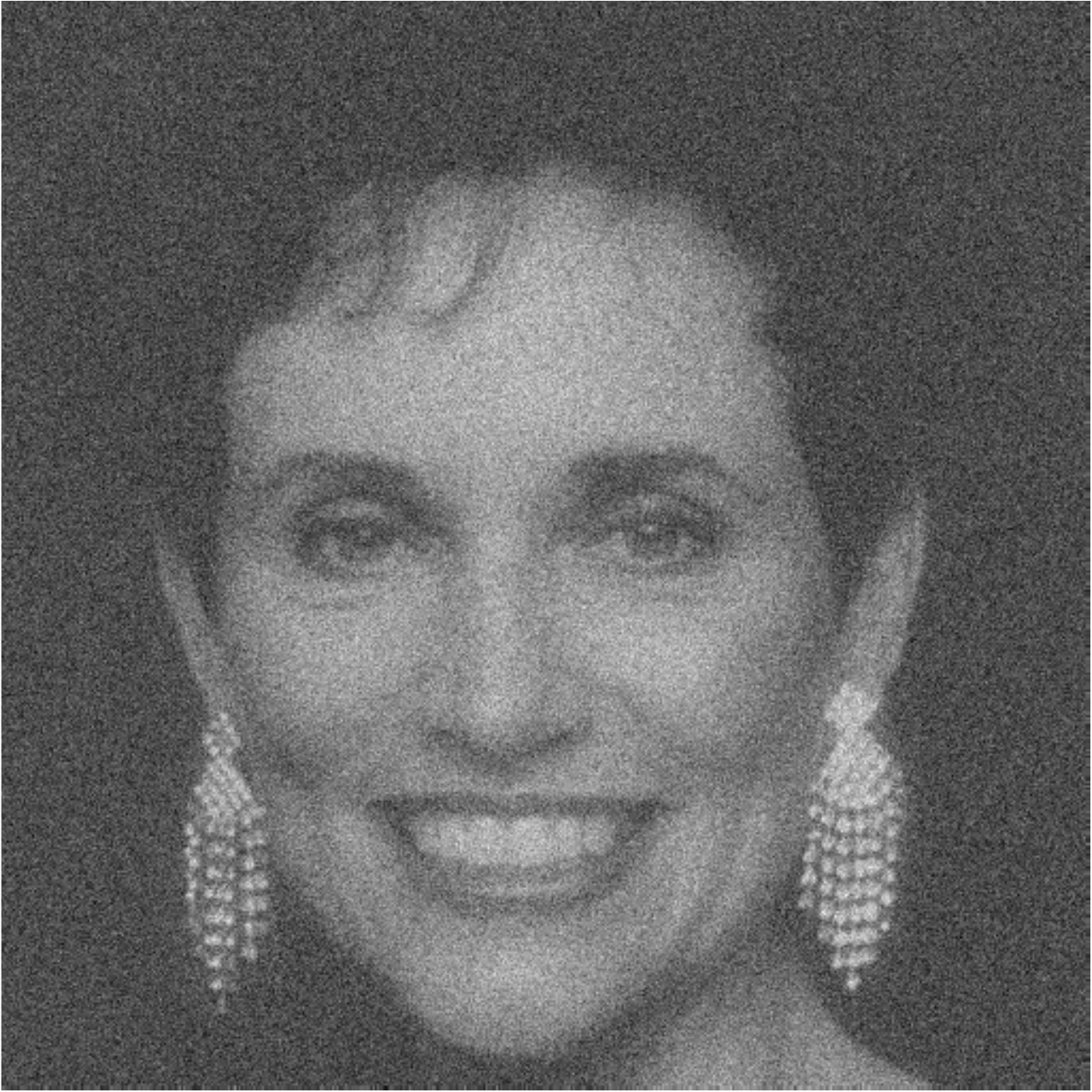}}

\subfigure[NLEM, PSNR = $25.68$, SNR = $17.36$, FSIM = $0.928$]{
\includegraphics[height=0.30\textwidth]{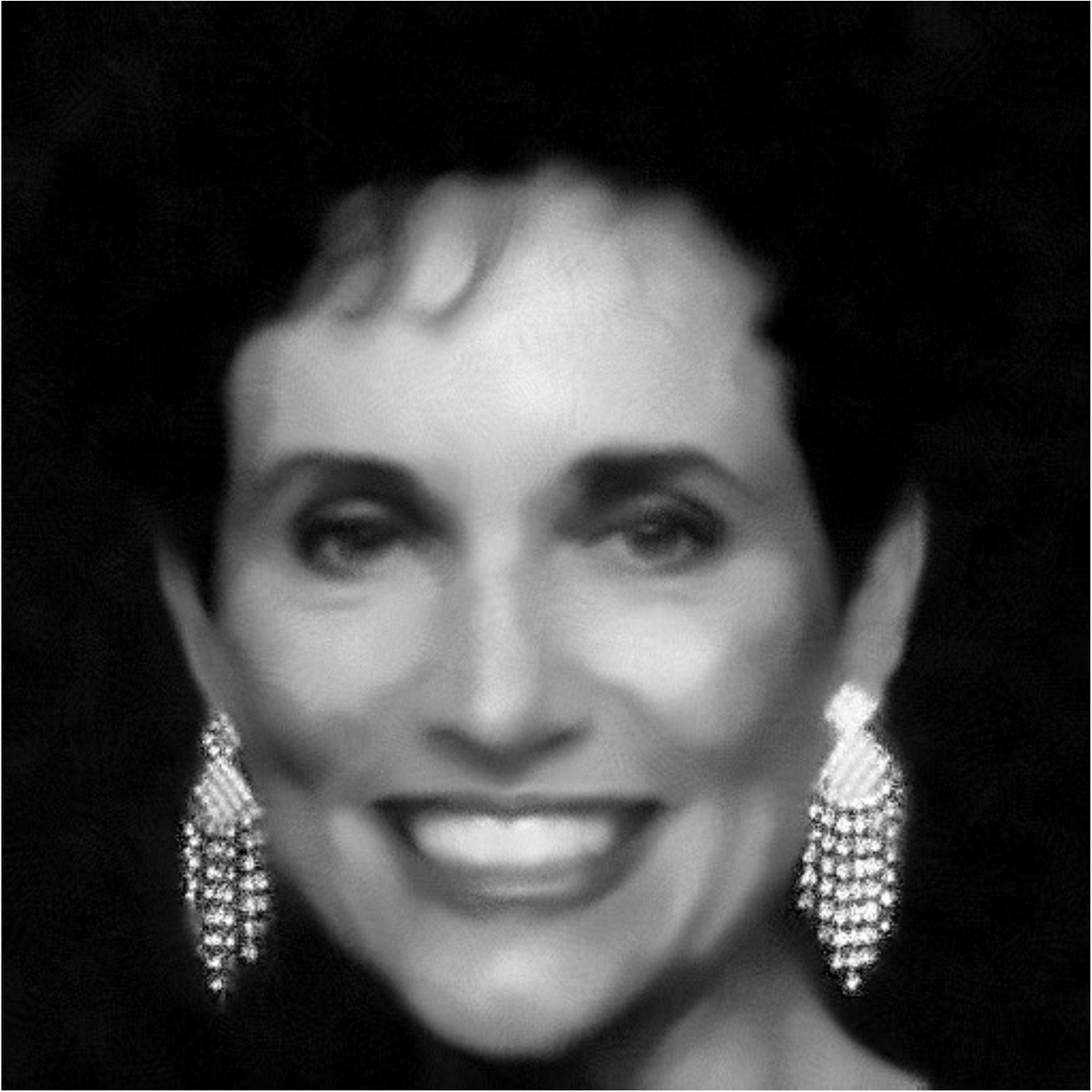}}
\subfigure[VNLEM, PSNR = $25.48$, SNR = $17.16$, FSIM = $0.897$]{
\includegraphics[height=0.30\textwidth]{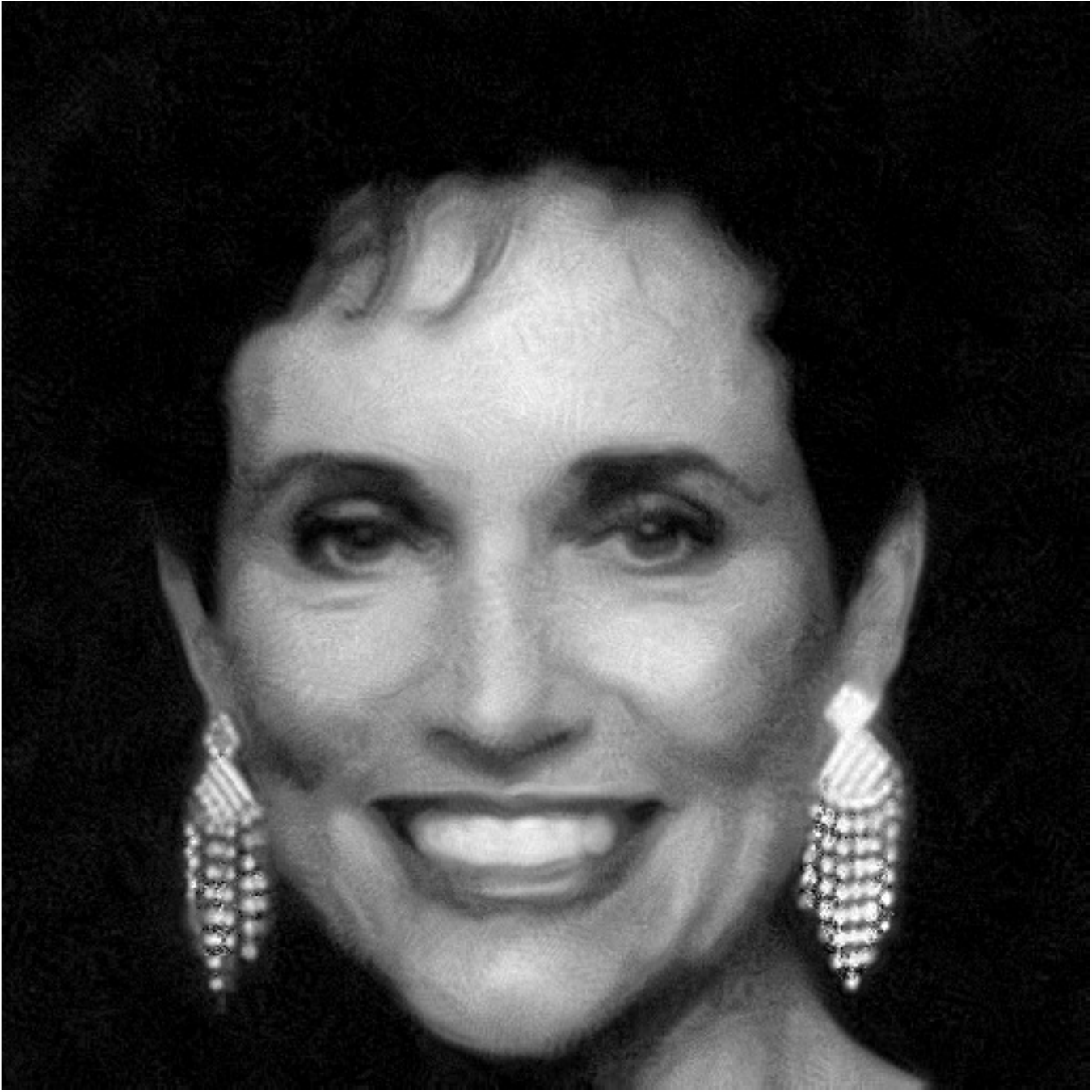}}
\subfigure[VNLEM-DD, PSNR = $25.54$, SNR = $17.21$, FSIM = $0.898$]{
\includegraphics[height=0.30\textwidth]{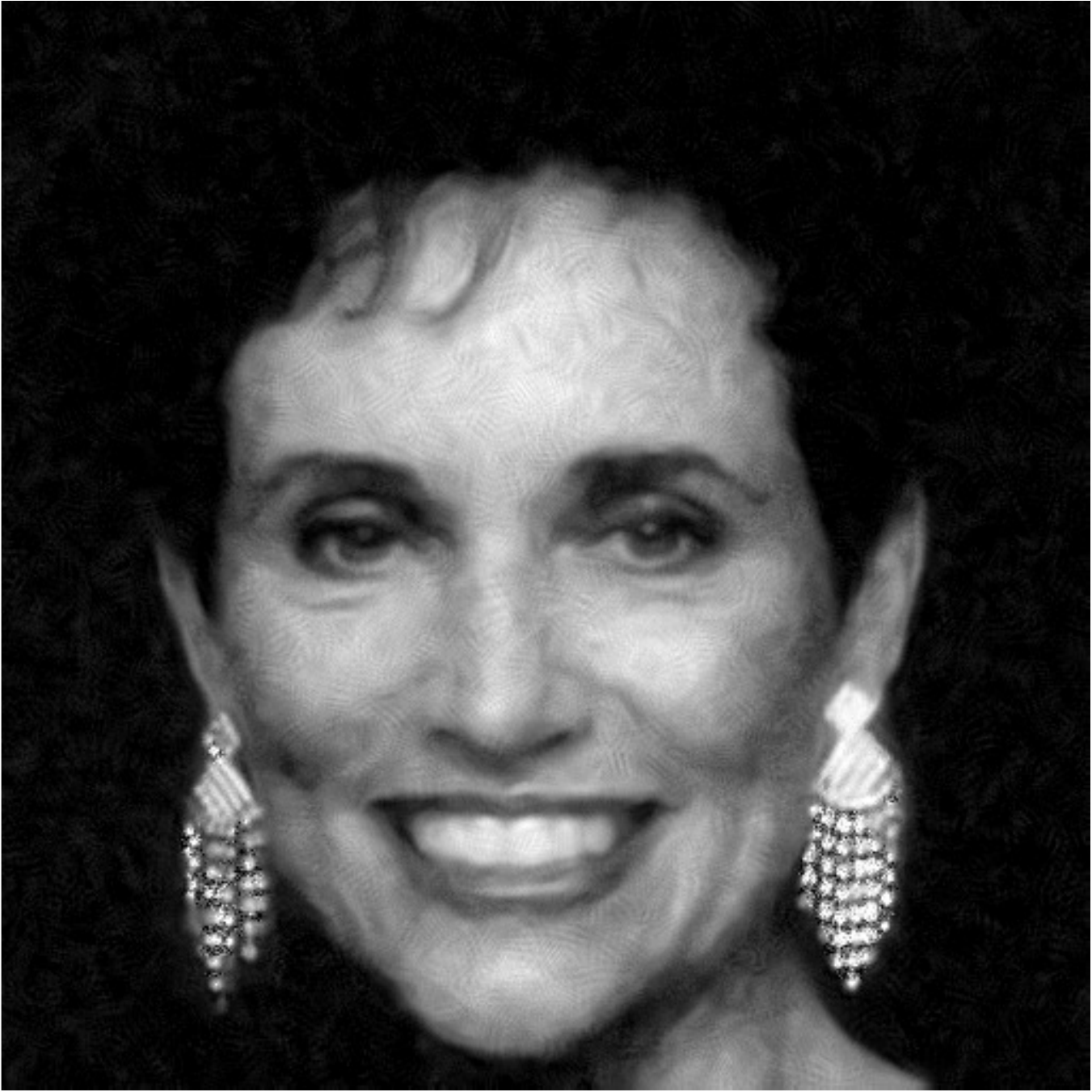}}

\subfigure[NLEM, difference, SOB = $0.043$]{
\includegraphics[height=0.28\textwidth]{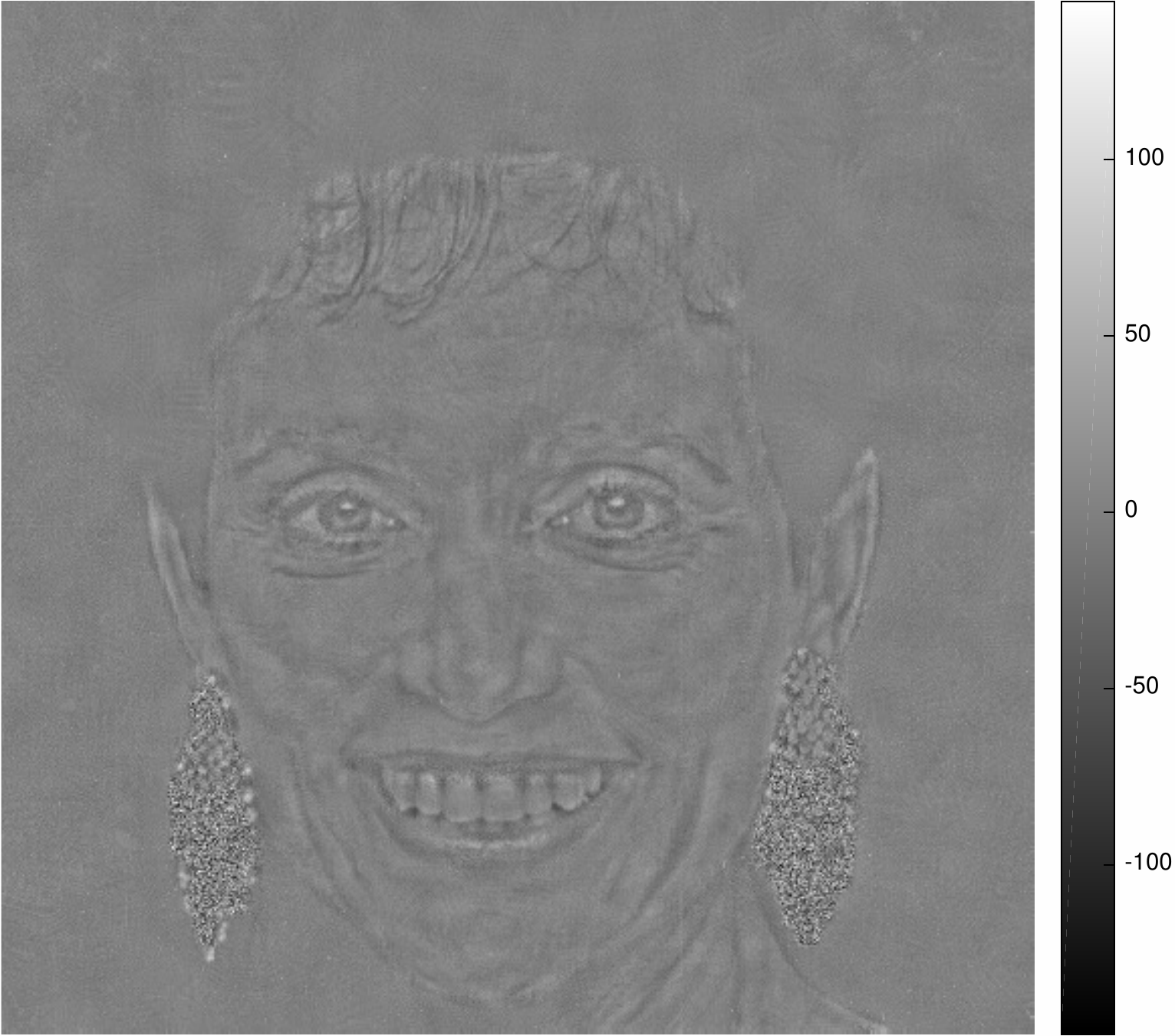}}
\subfigure[VNLEM, difference, SOB = $0.039$]{
\includegraphics[height=0.28\textwidth]{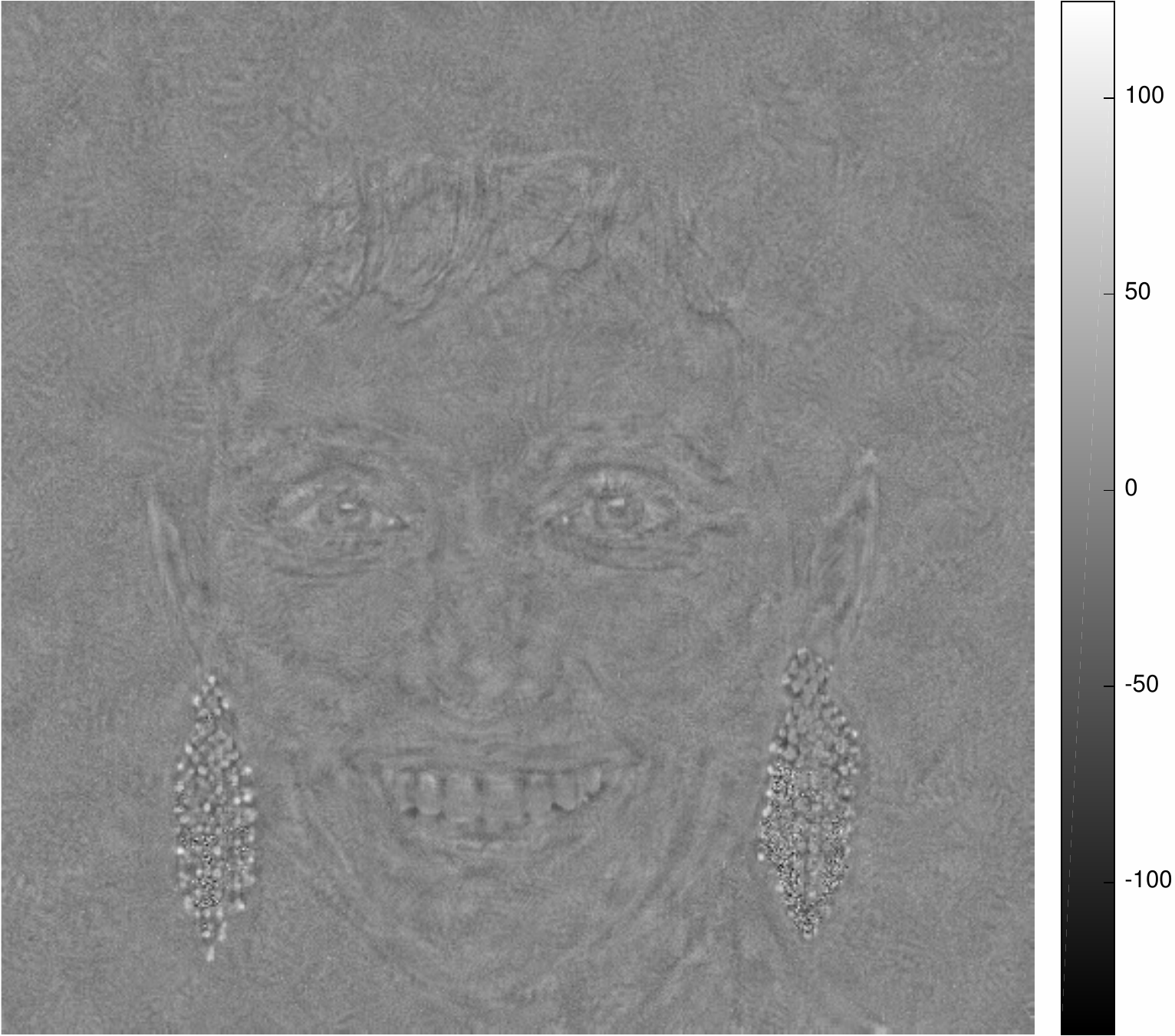}}
\subfigure[VNLEM-DD, difference, SOB = $0.038$]{
\includegraphics[height=0.28\textwidth]{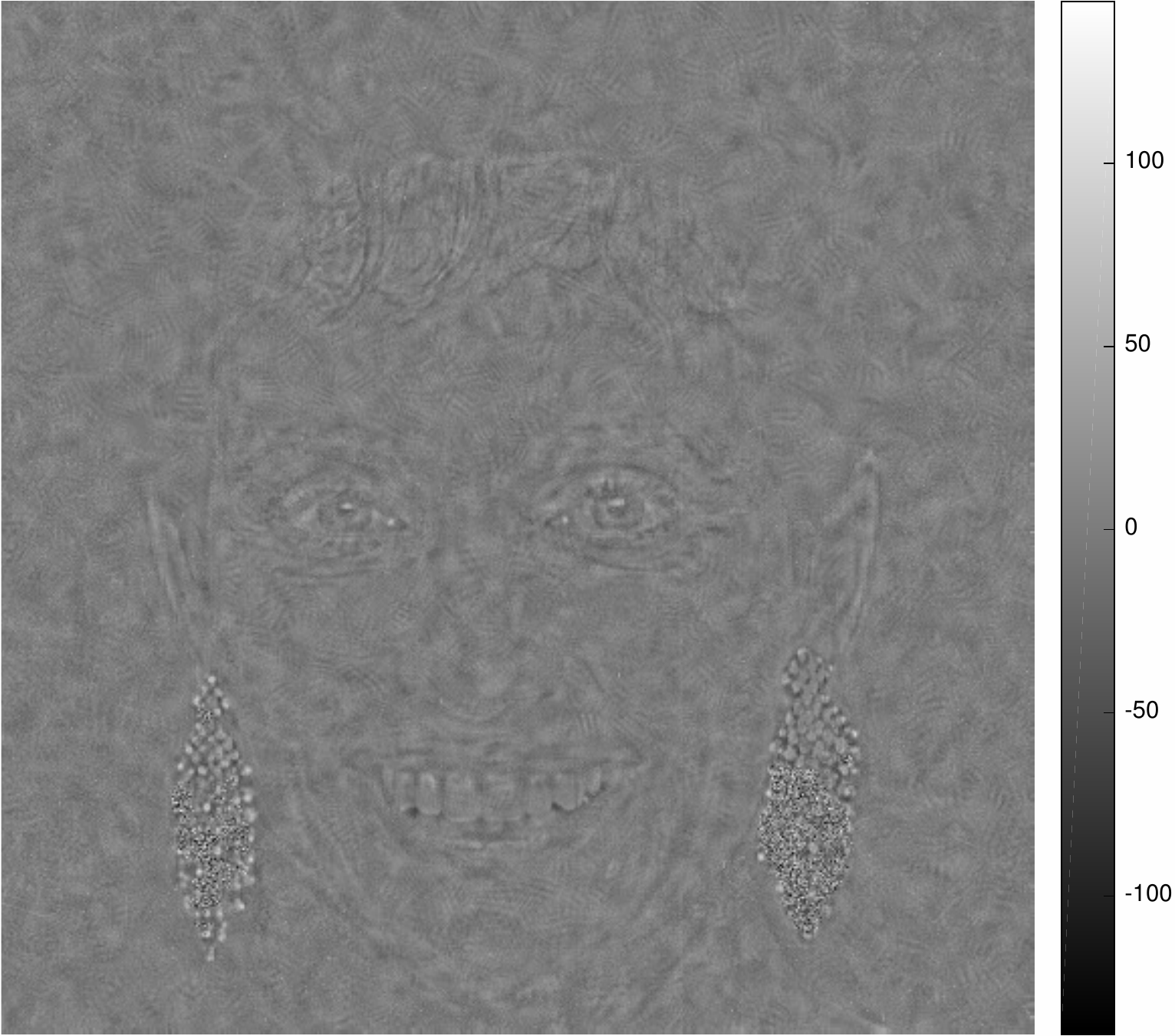}}

\caption{\label{lady}Example 3: the lady.}
\end{figure}

Another example worth looking into is presented in Fig.~\ref{clock}. This image contains substantial fine details that should be preserved during the recovery. Looking into results of the three different denoising schemes, one can see that in such an image, the NLEM scheme outperforms the VNLEM and VNLEM-DD in terms of SNR and PSNR as well as the level of details kept in the process, like the SOB and FSIM. This example shows that while the VNLEM and VNLEM-DD overall outperforms the NLEM statistically, there are examples where the NLEM performs better.

\begin{figure}[ht]
\centering
\subfigure[Original image]{
$\hspace{-0.5cm}$
\includegraphics[width=0.30\textwidth]{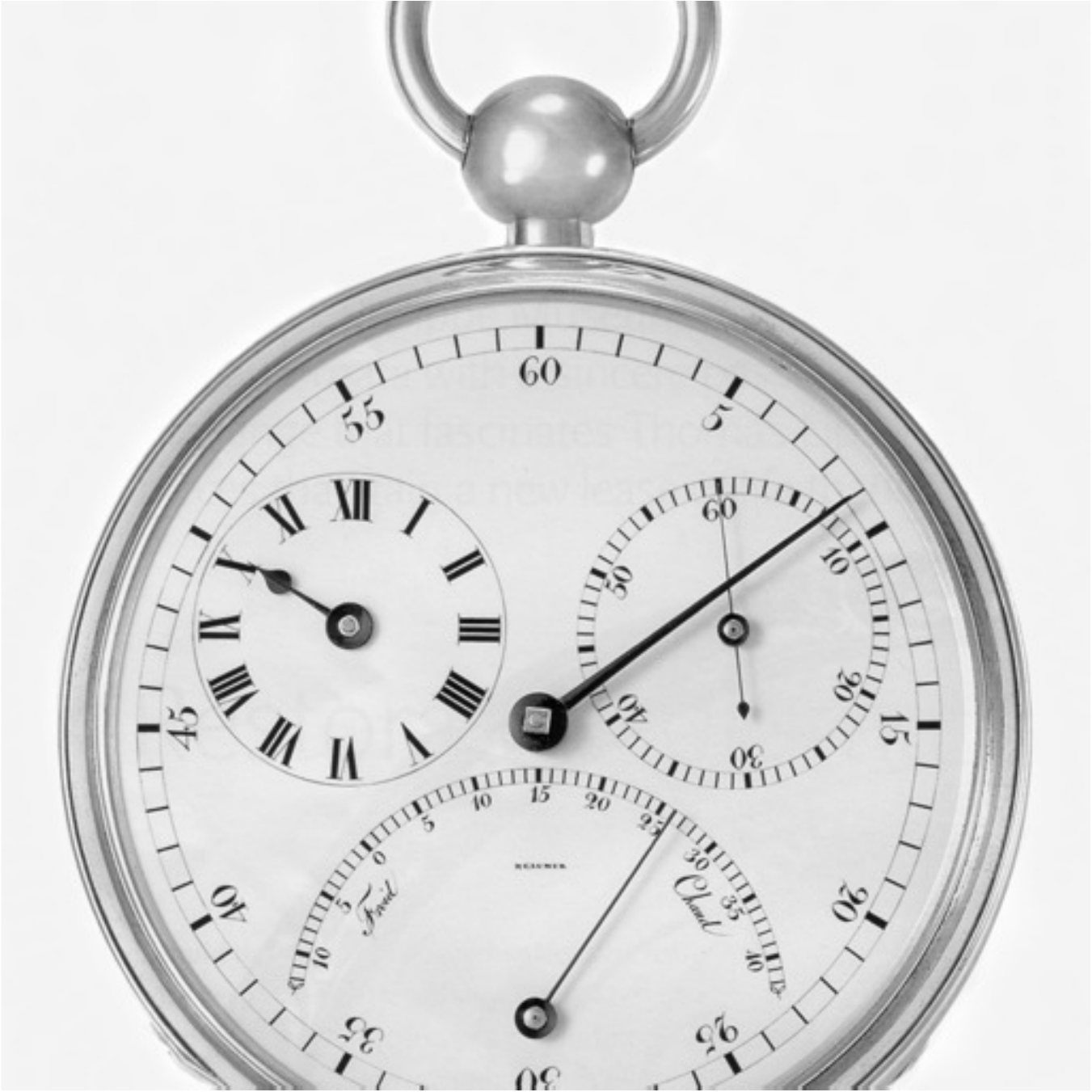}}
\subfigure[Noisy image, PSNR = $13.55$, SNR = $5.23$]{
\includegraphics[width=0.30\textwidth]{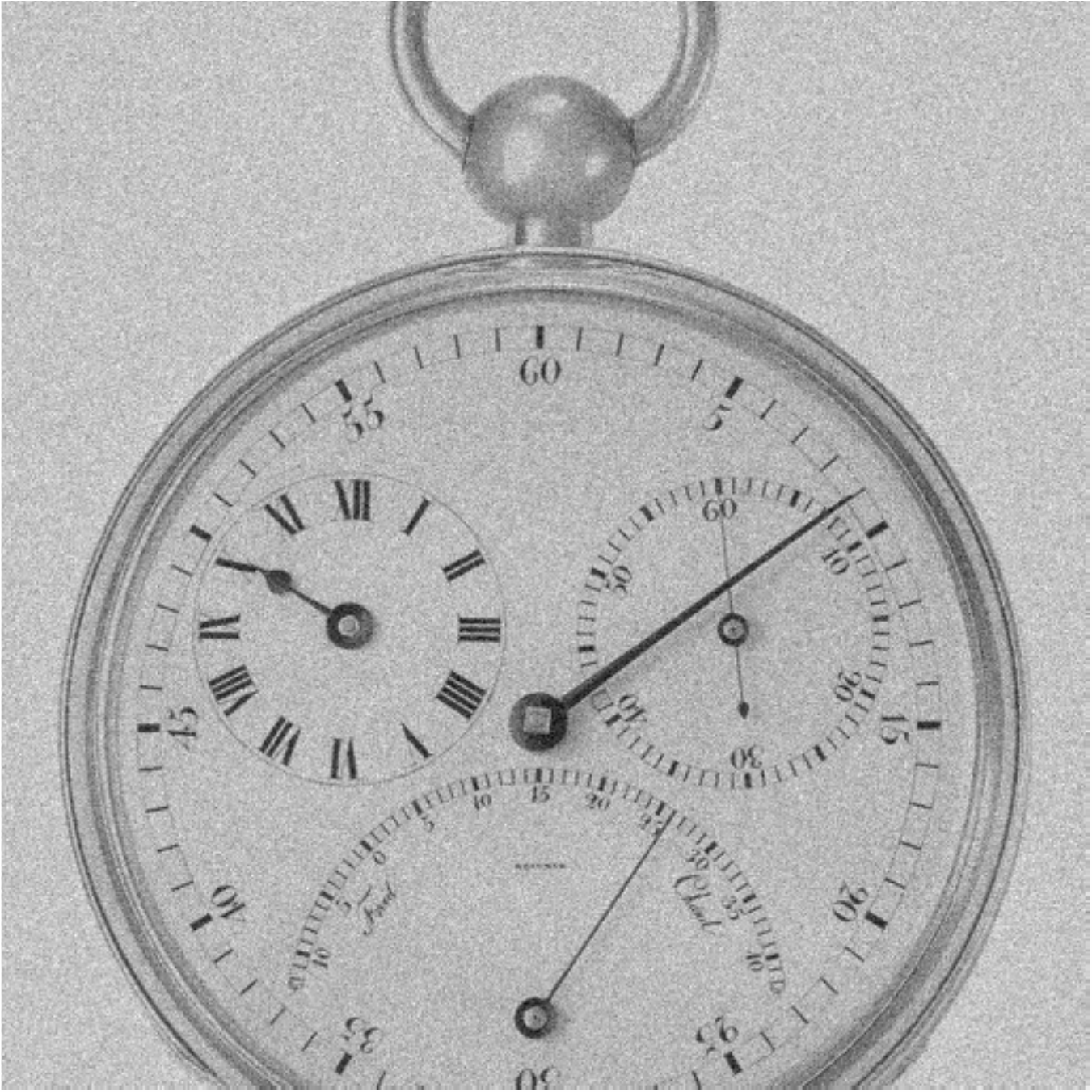}}

\subfigure[NLEM, PSNR = $25.68$, SNR = $17.36$, FSIM = $0.952$]{
\includegraphics[height=0.30\textwidth]{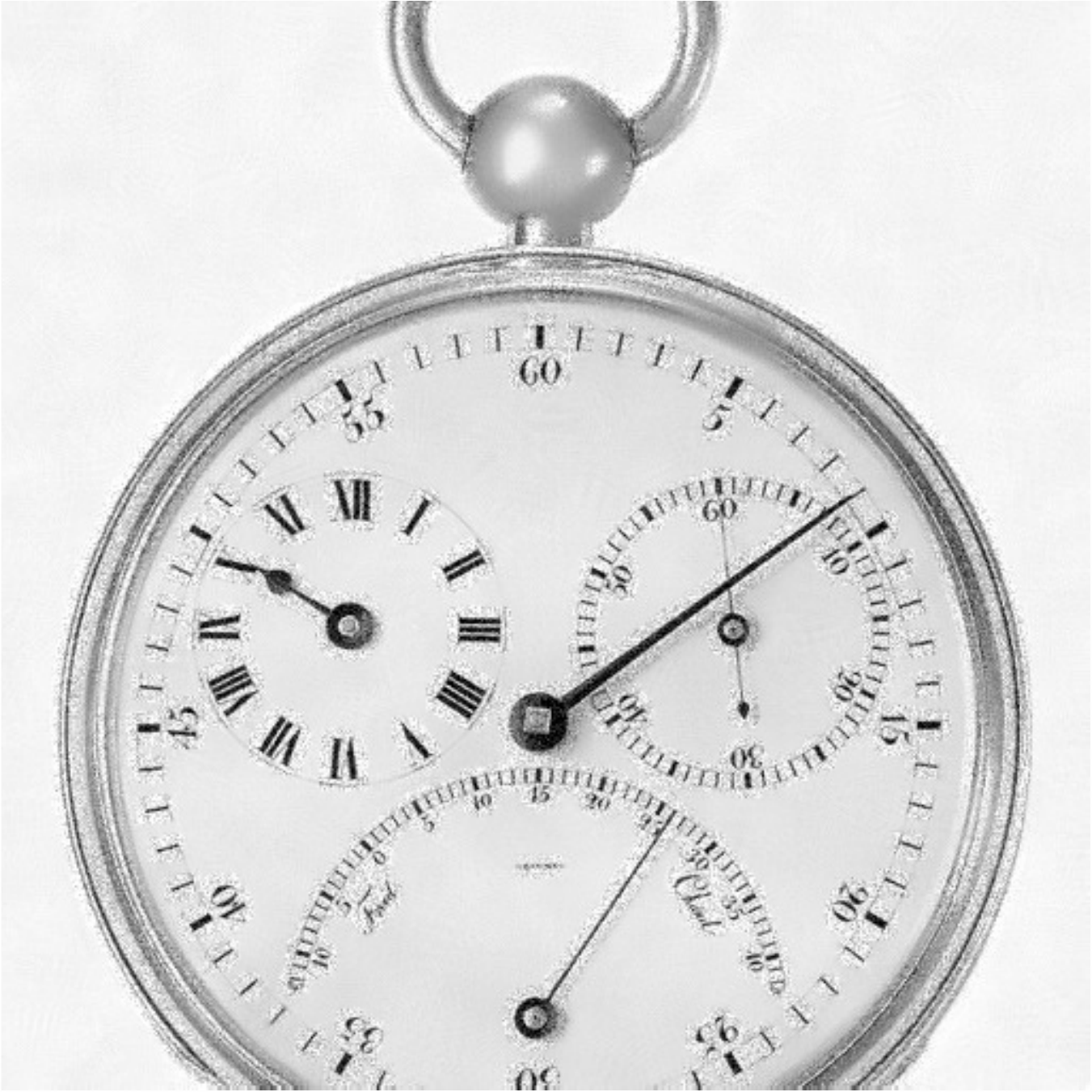}}
\subfigure[VNLEM, PSNR = $25.48$, SNR = $17.16$, FSIM = $0.944$]{
\includegraphics[height=0.30\textwidth]{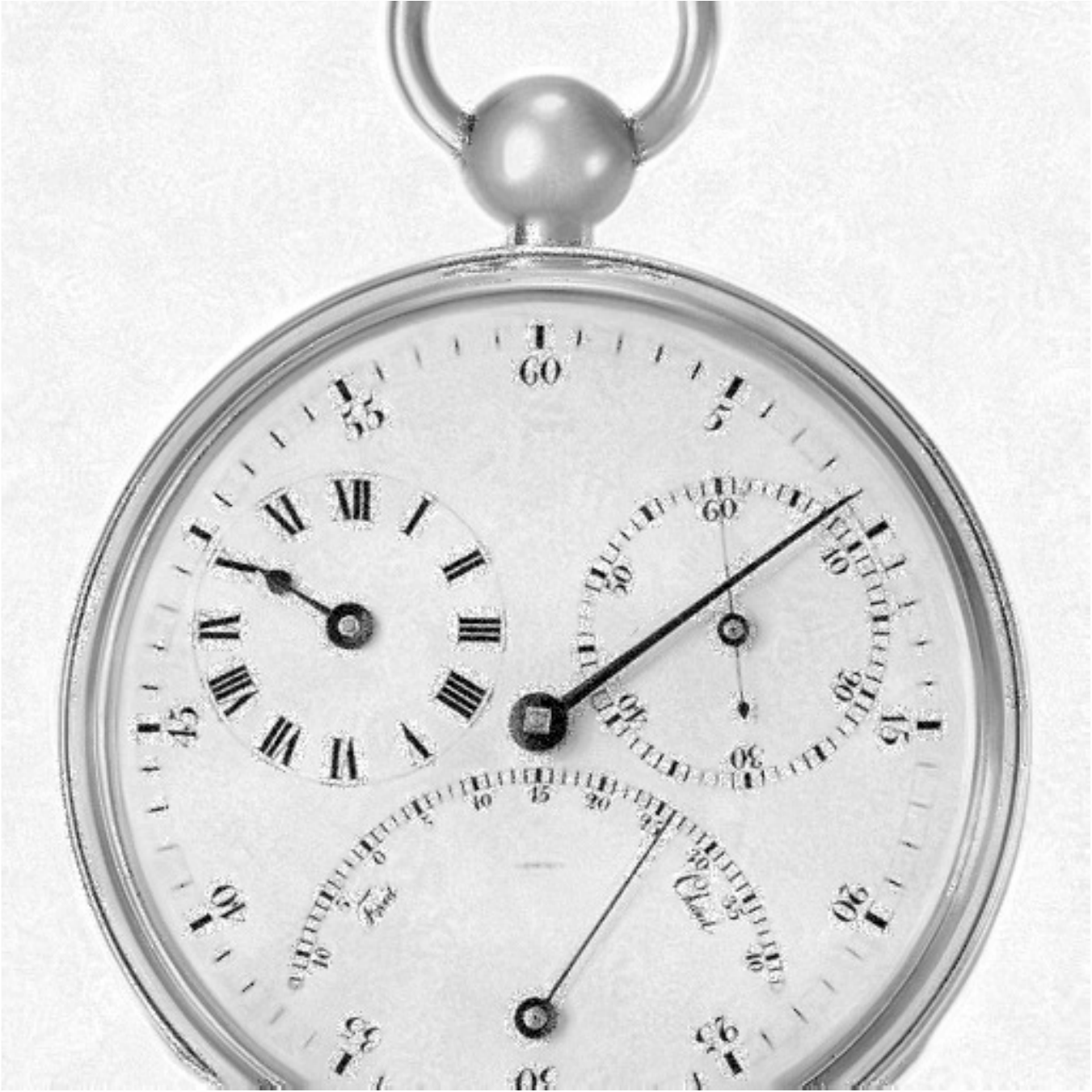}}
\subfigure[VNLEM-DD, PSNR = $25.54$, SNR = $17.21$, FSIM = $0.943$]{
\includegraphics[height=0.30\textwidth]{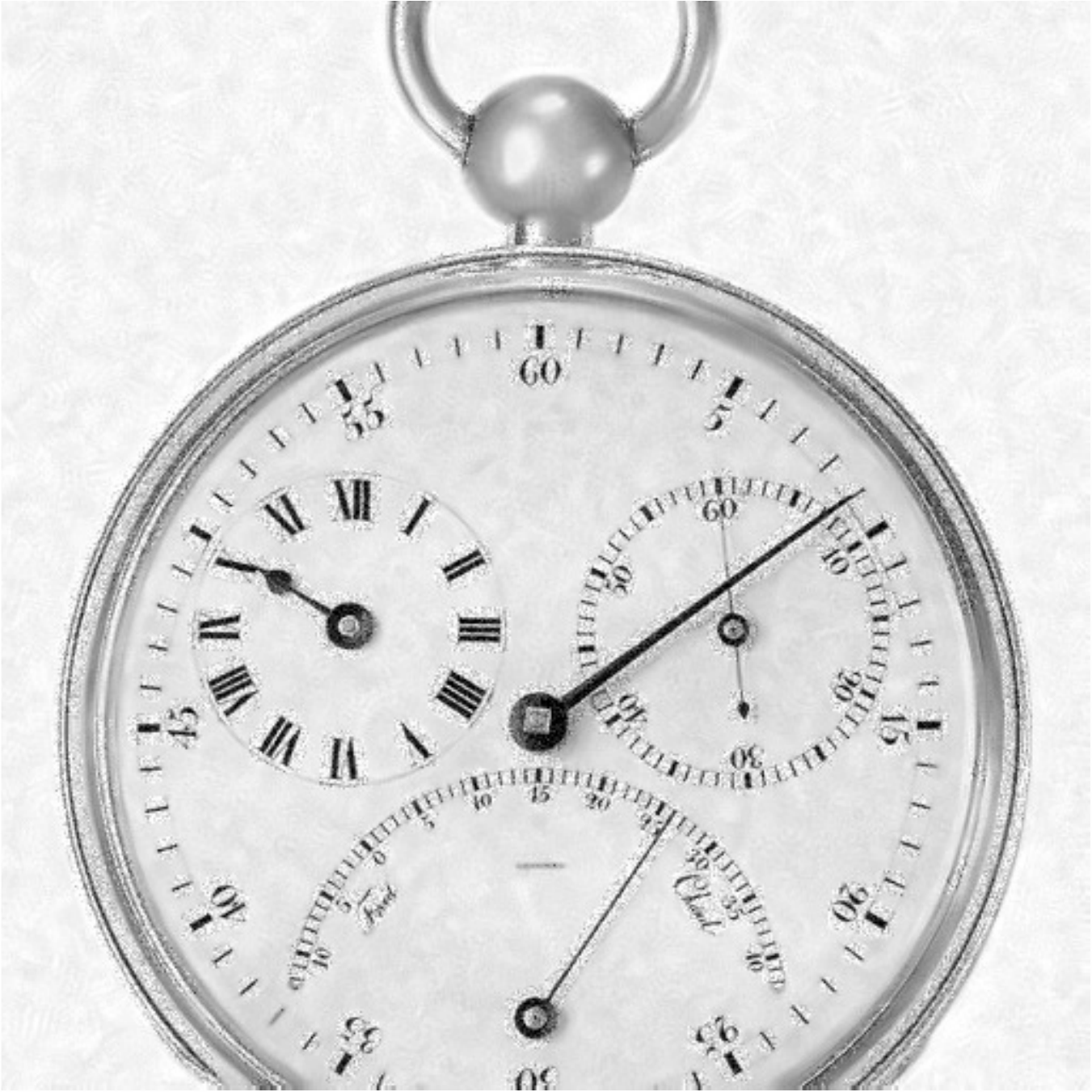}}

\subfigure[NLEM, difference, SOB = $0.033$]{
\includegraphics[height=0.28\textwidth]{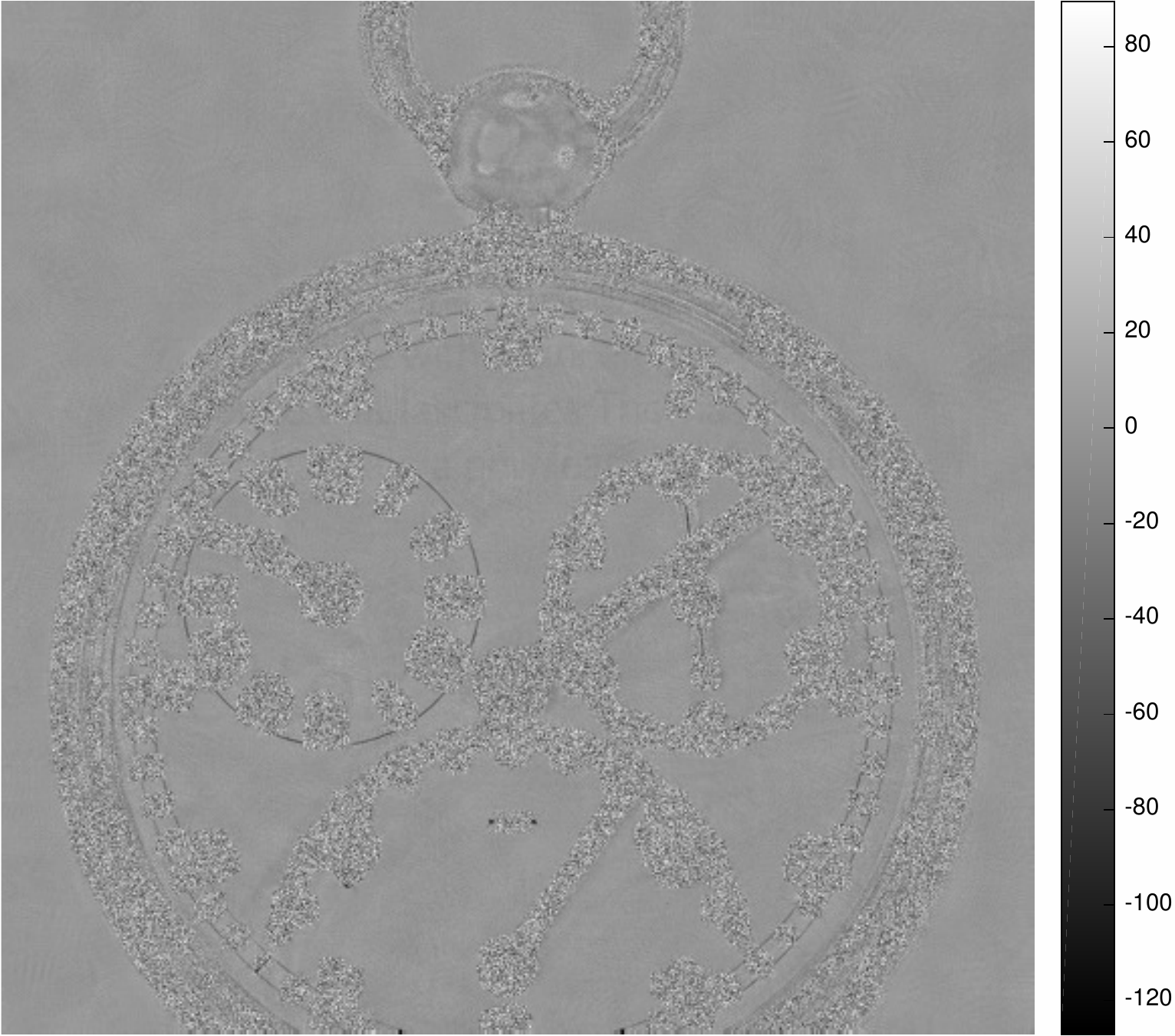}}
\subfigure[VNLEM, difference, SOB = $0.042$]{
\includegraphics[height=0.28\textwidth]{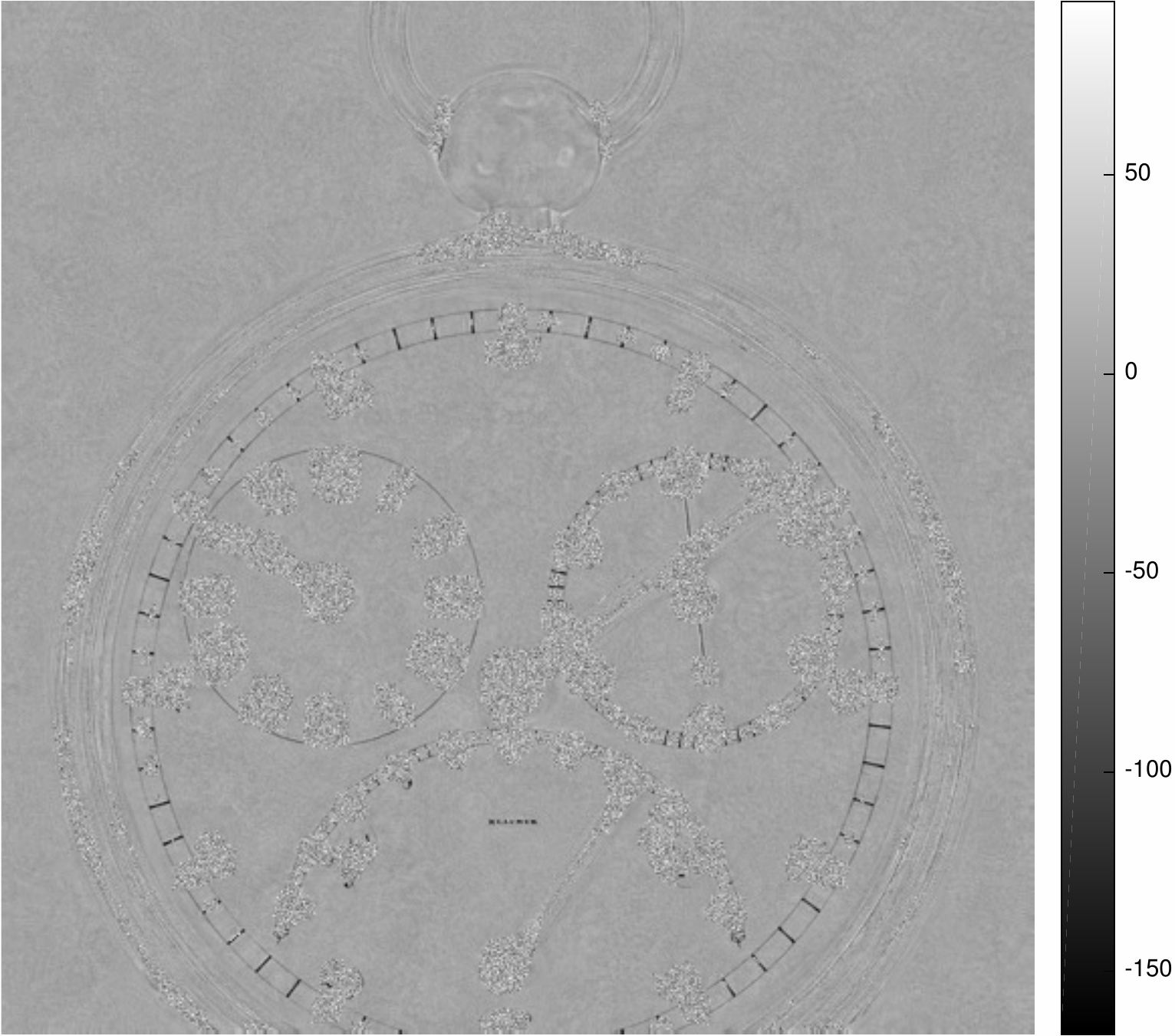}}
\subfigure[VNLEM-DD, difference, SOB = $0.044$]{
\includegraphics[height=0.28\textwidth]{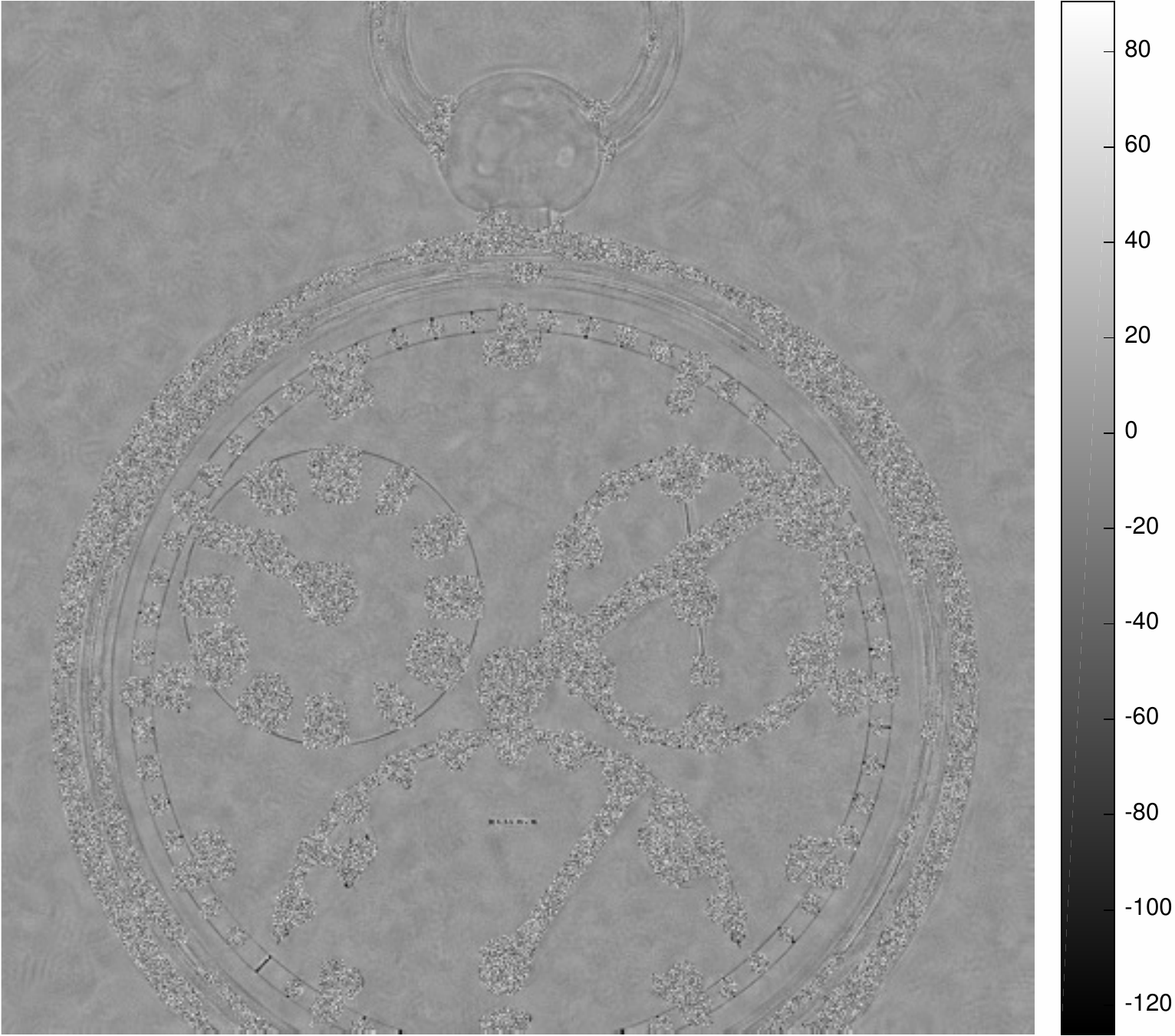}}

\caption{\label{clock}Example 4: the clock.}
\end{figure}

\subsection{The image resolution issue}
One interesting parameter that influences the image denoising performance is the ``image resolution''. Note that the ``image resolution'' is not a well-defined term, and in this example it means the number of pixels in the image -- the more pixels there are in an image, the higher the image resolution is. Equivalently, an image with a higher resolution means a denser sampling of the image function. In Fig.~\ref{resolution}, we take the starfish image shown in Fig.~\ref{starfish} and show how the image resolution affects the final result. In this figure, we present the outputs of the VNLEM-DD algorithm (left column) and the NLEM algorithm (right columns) for $N=200,512,1024$. It can be clearly seen that in all three cases, the VNLEM-DD algorithm produces a more clean image compared to the NLEM scheme, and the performance of each algorithm increases as the resolution increases.

\begin{figure}[ht]
\centering

\subfigure[NLEM, $N = 200$, PSNR = $14.90$, SNR = $8.05$, FSIM = $0.812$]{
\includegraphics[width=0.28\textwidth]{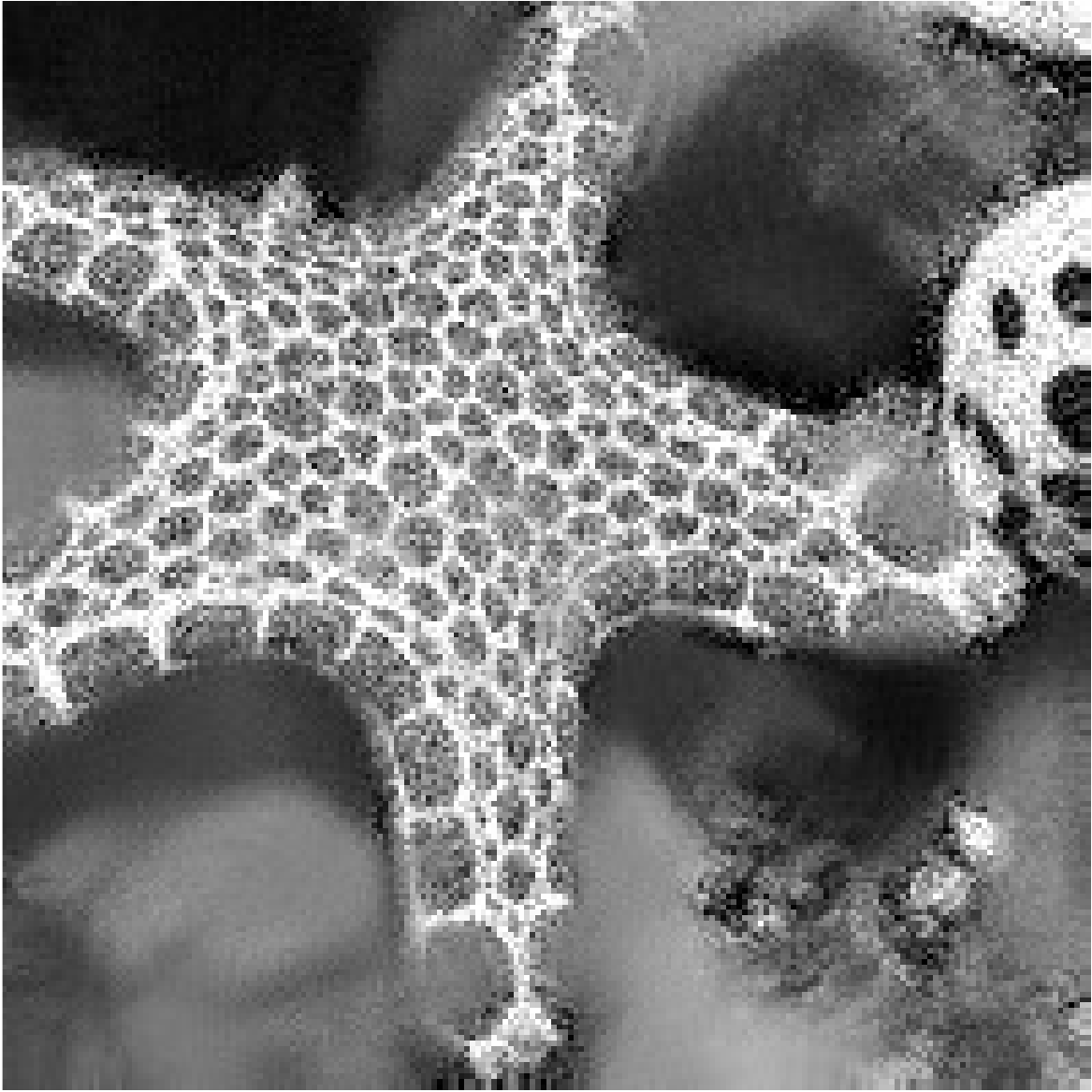}}
\subfigure[VNLEM, $N = 200$, PSNR = $16.79$, SNR = $9.93$, FSIM = $0.829$]{
\includegraphics[width=0.28\textwidth]{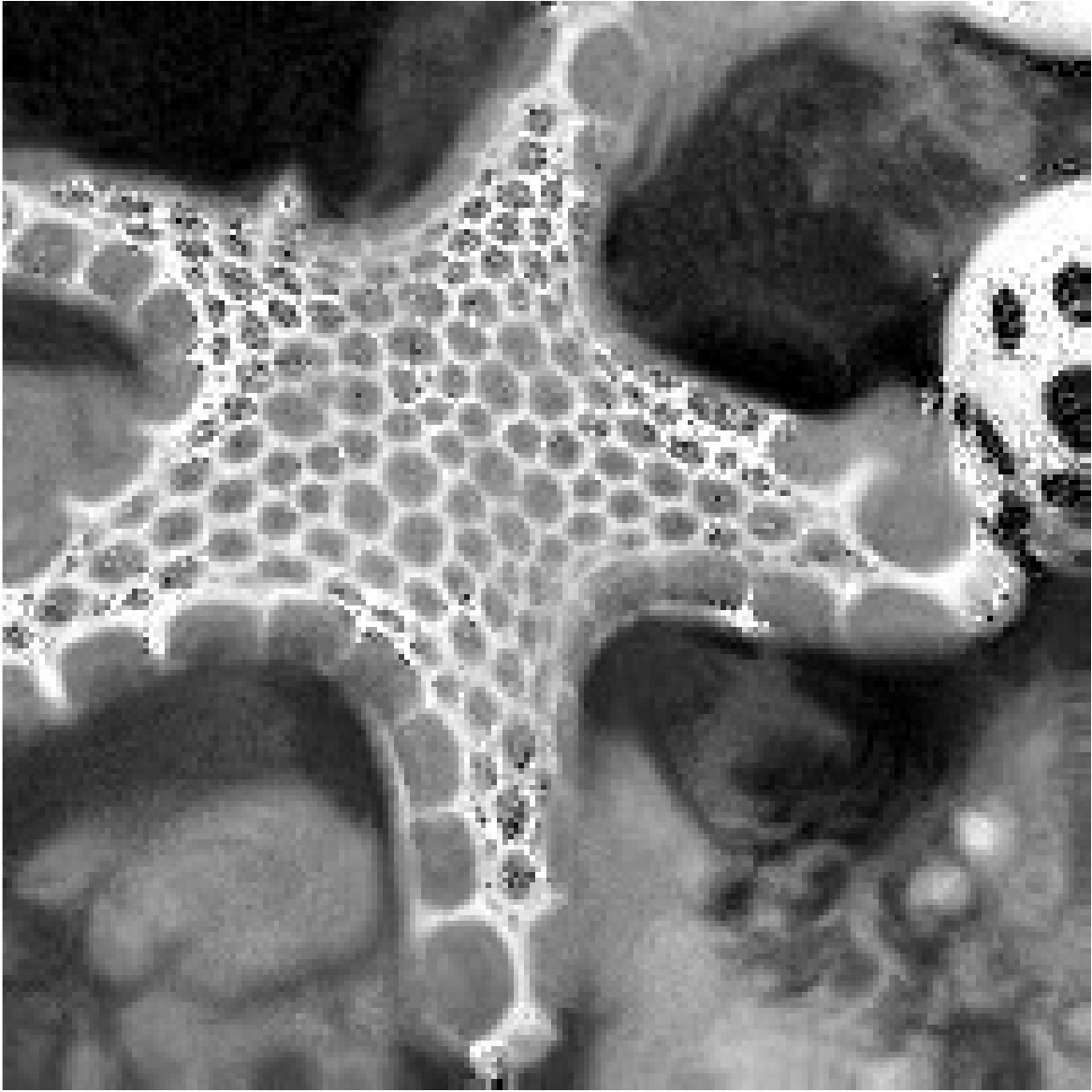}}
\subfigure[VNLEM-DD, $N = 200$, PSNR = $16.19$, SNR = $9.33$, FSIM = $0.819$]{\includegraphics[width=0.28\textwidth]{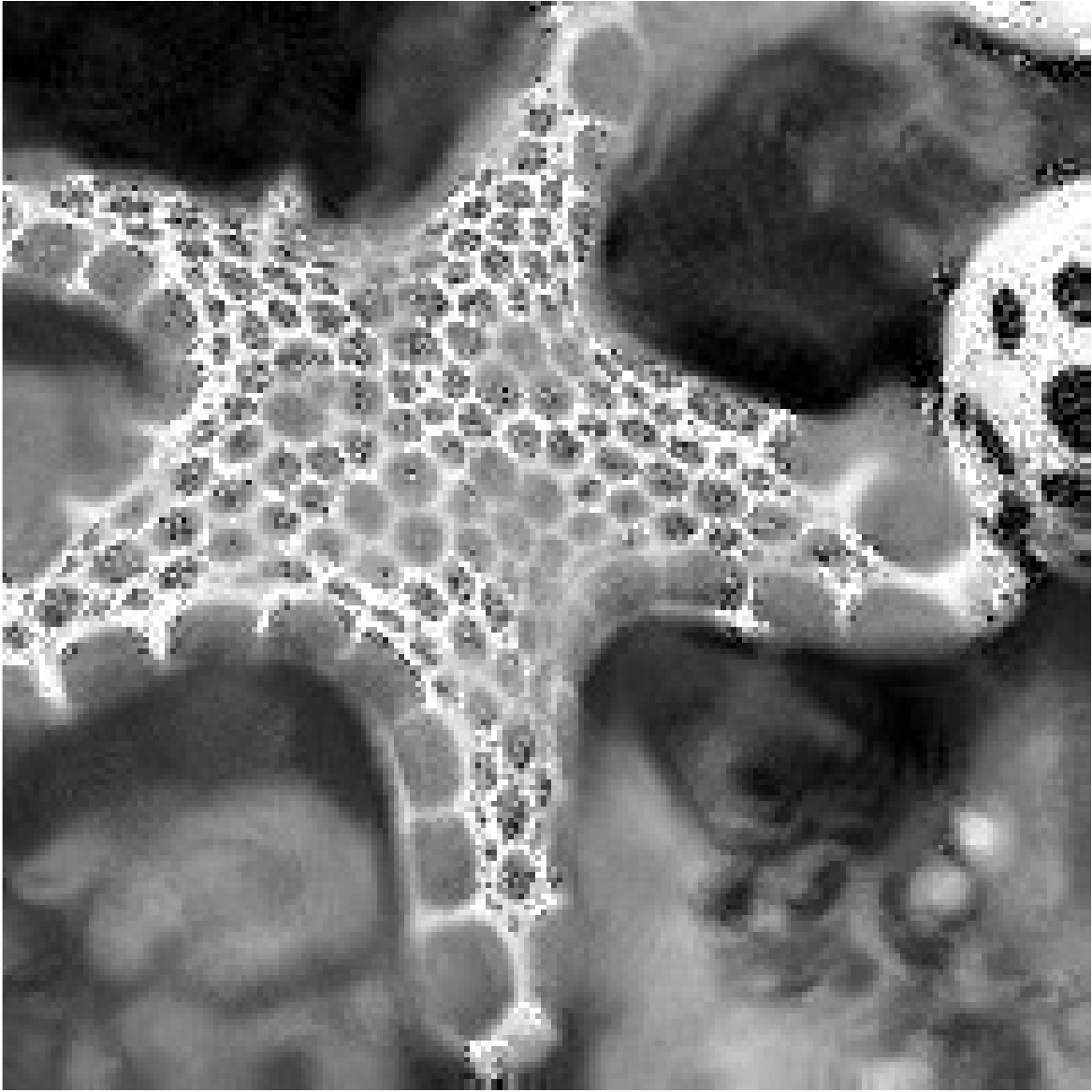}}

\subfigure[NLEM, $N=512$, PSNR = $16.79$, SNR = $10.43$, FSIM = $0.902$]{
\includegraphics[width=0.28\textwidth]{Img14_512_NLEM-eps-converted-to.pdf}}
\subfigure[VNLEM, $N=512$, PSNR = $19.68$, SNR = $13.32$, FSIM = $0.914$]{
\includegraphics[width=0.28\textwidth]{Img14_512_VNLEM-eps-converted-to.pdf}}
\subfigure[VNLEM-DD, $N=512$, PSNR = $19.49$, SNR = $13.13$, FSIM = $0.918$]{\includegraphics[width=0.28\textwidth]{Img14_512_DD-eps-converted-to.pdf}}

\subfigure[NLEM, $N=1024$, PSNR = $21.11$, SNR = $14.78$, FSIM = $0.953$]{
\includegraphics[width=0.28\textwidth]{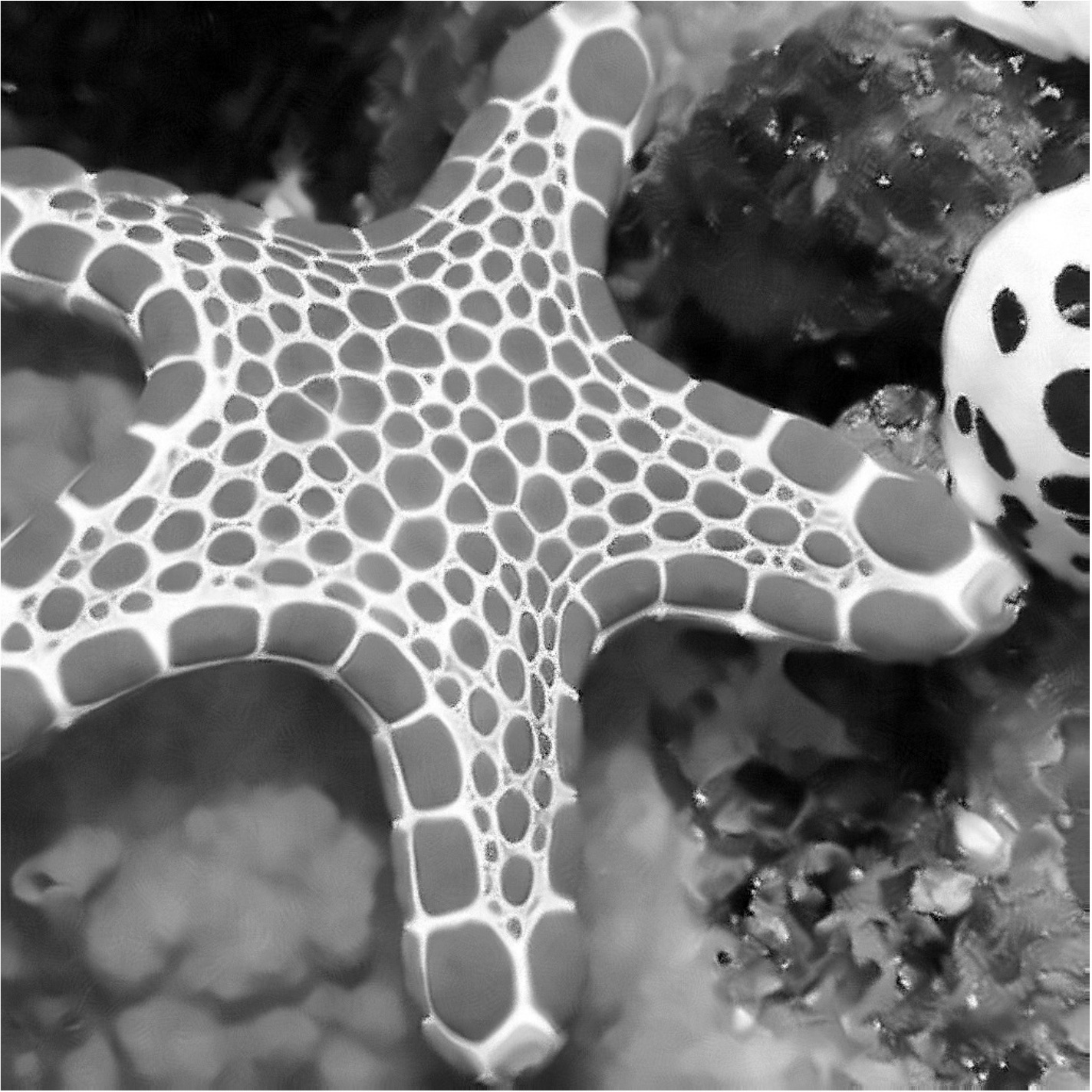}}
\subfigure[VNLEM, $N=1024$, PSNR = $22.50$, SNR = $16.16$, FSIM = $0.959$]{\includegraphics[width=0.28\textwidth]{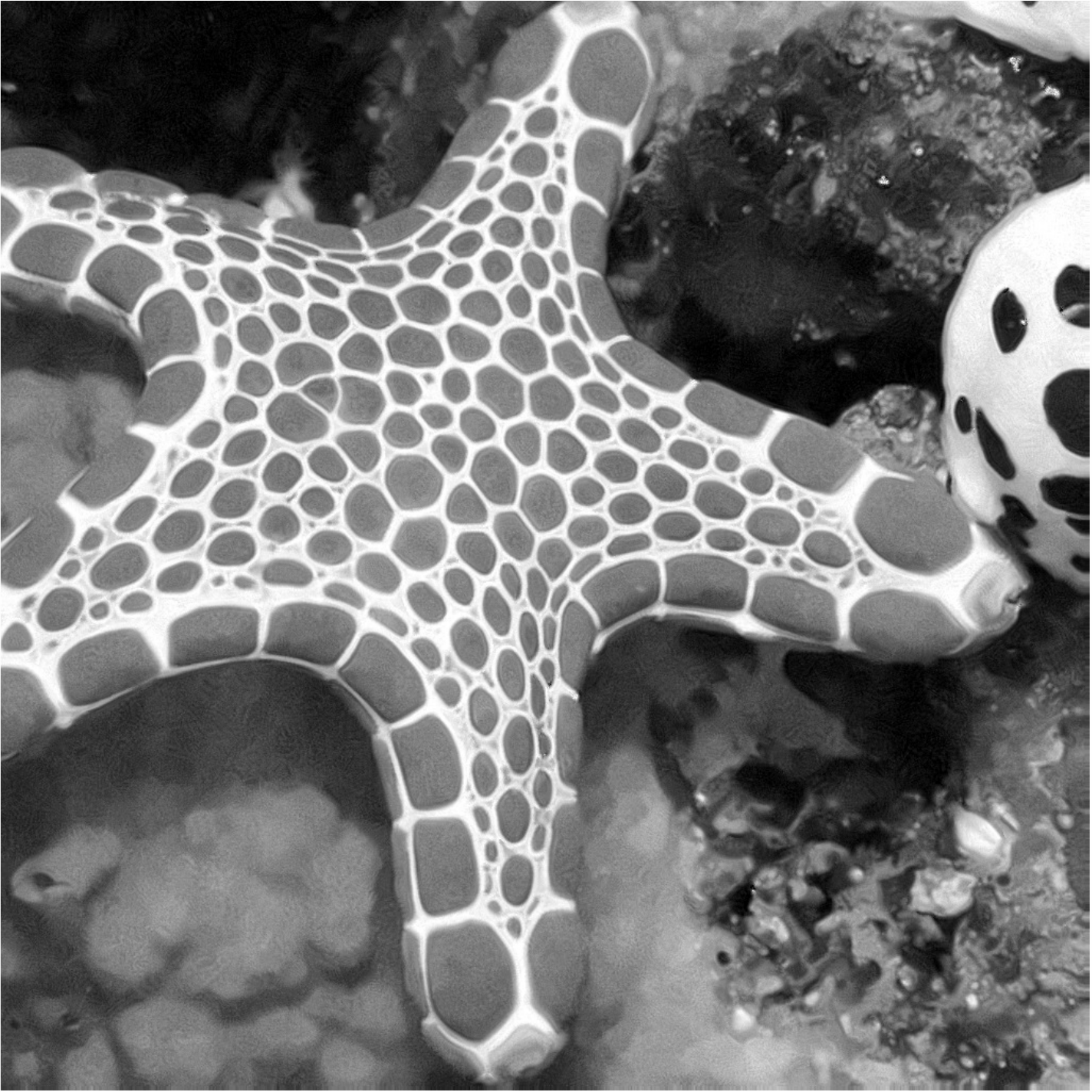}}
\subfigure[VNLEM-DD, $N=1024$, PSNR = $22.94$, SNR = $16.60$, FSIM = $0.961$]{\includegraphics[width=0.28\textwidth]{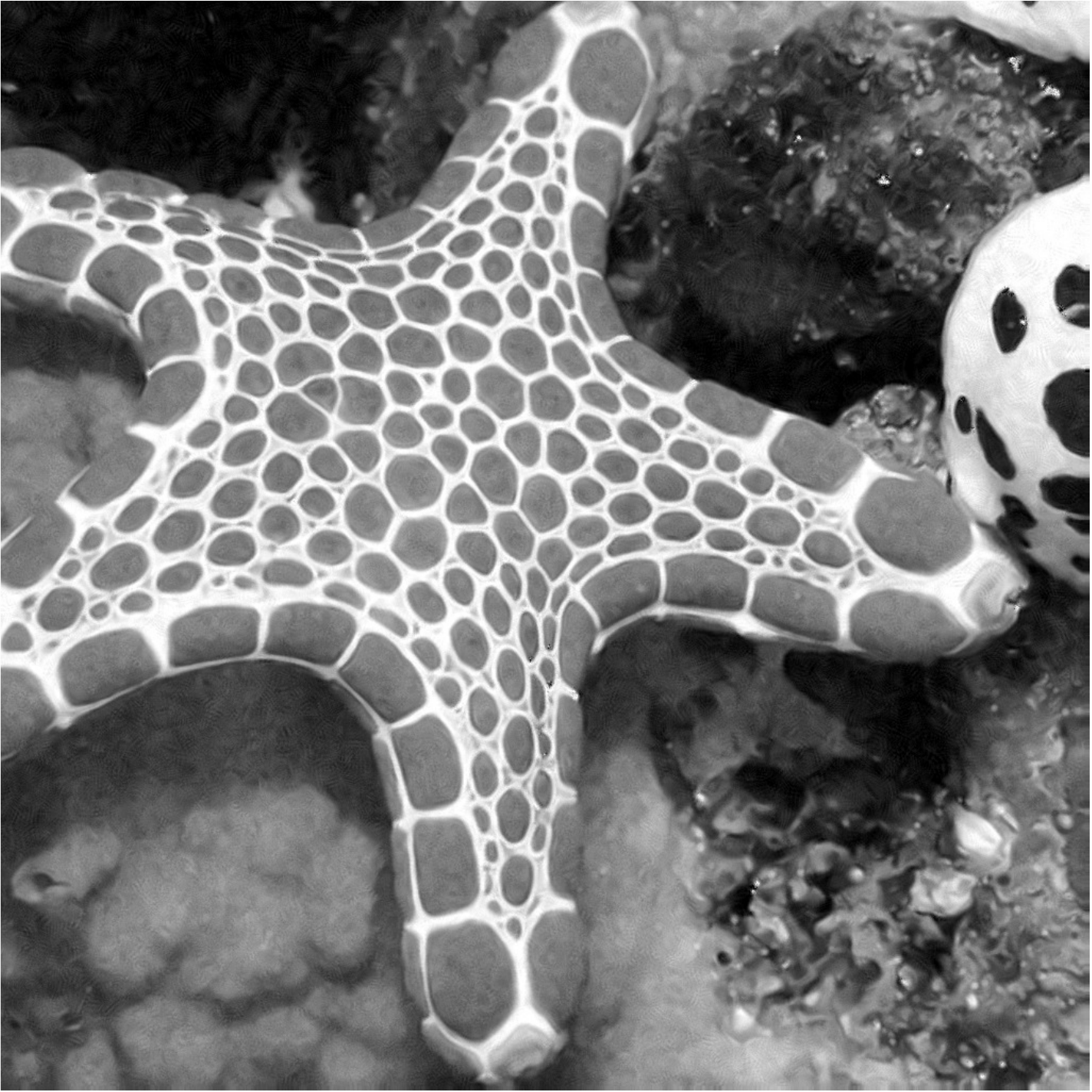}}

\caption{\label{resolution}The starfish in Figure \ref{starfish} with the same noise level but different resolutions.}
\end{figure}

\subsection{Application to the cytometry problem}

In this last example, we apply the developed VNLEM to the third-harmonic-generation (THG) microscopy image. The goal of the THG microscopy-based imaging cytometry is to automatically differentiate and count different types of blood cells with less blood \textit{ex vivo}, or even \textit{in vivo} \cite{Wu_Wang_Hsieh_Huang_Lin:2016}. One of the many strengths of THG is reflecting the granularity of leukocytes, which allows us to apply image processing techniques for the automatic classification. However, the raw data is noisy most of time, and a denoising technique is needed. We now apply the NLEM, VNLEM, and VNLEM-DD to the THG sectioning image of the whole blood smear at 1 hour post blood sampling. The data is provided by Professor Tzu-Ming Liu, Faculty of Health Sciences, University of Macau. The result is shown in Figure \ref{cytometry}. Note that since we do not have the ``ground truth'' for a comparison, we only show the FSM for the quality evaluation purpose. Note that while the result is encouraging, a systematical study of the problem, and a systematic comparison of the proposed algorithm with existing algorithms is needed. The result will be reported in a future work.

\begin{figure}[ht]
\centering

\subfigure[Original image]{
\includegraphics[width=0.3\textwidth]{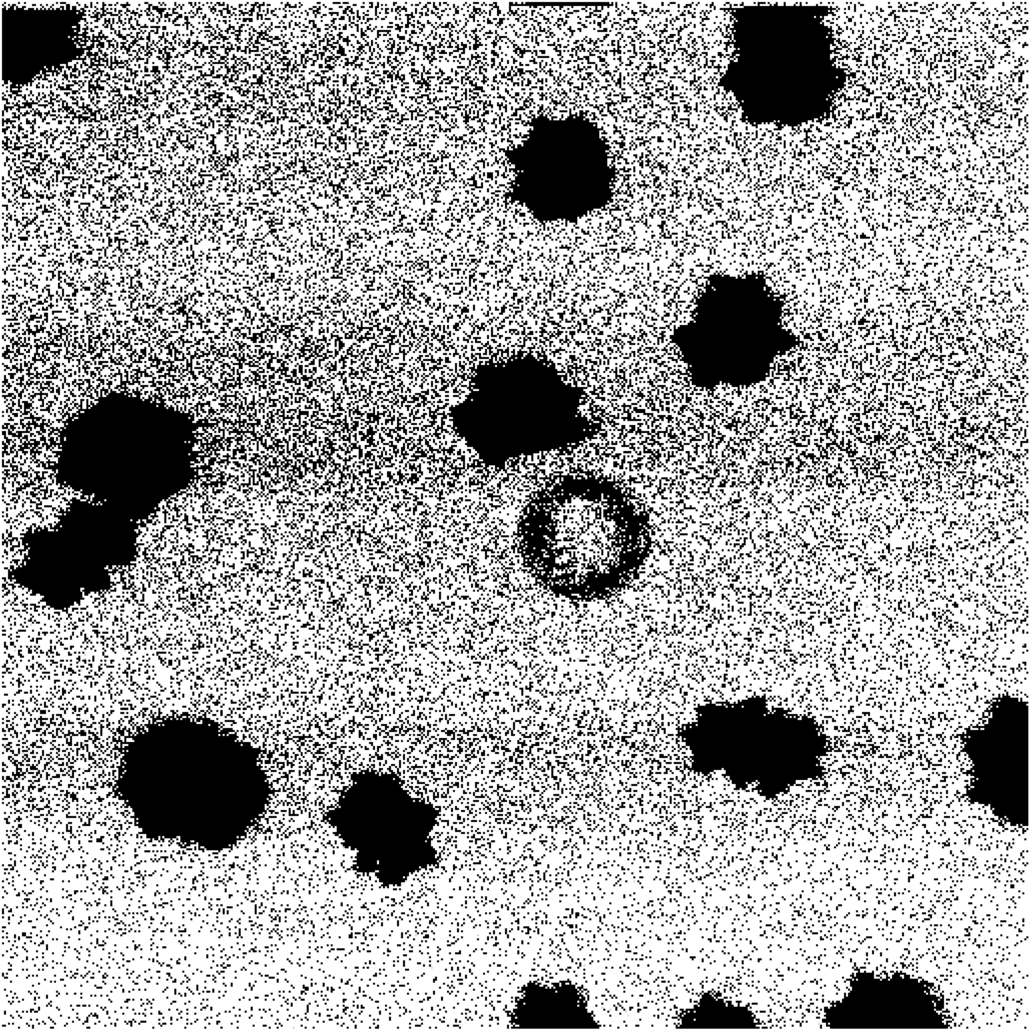}}

\subfigure[NLEM, FSIM = $0.947$]{
\includegraphics[width=0.3\textwidth]{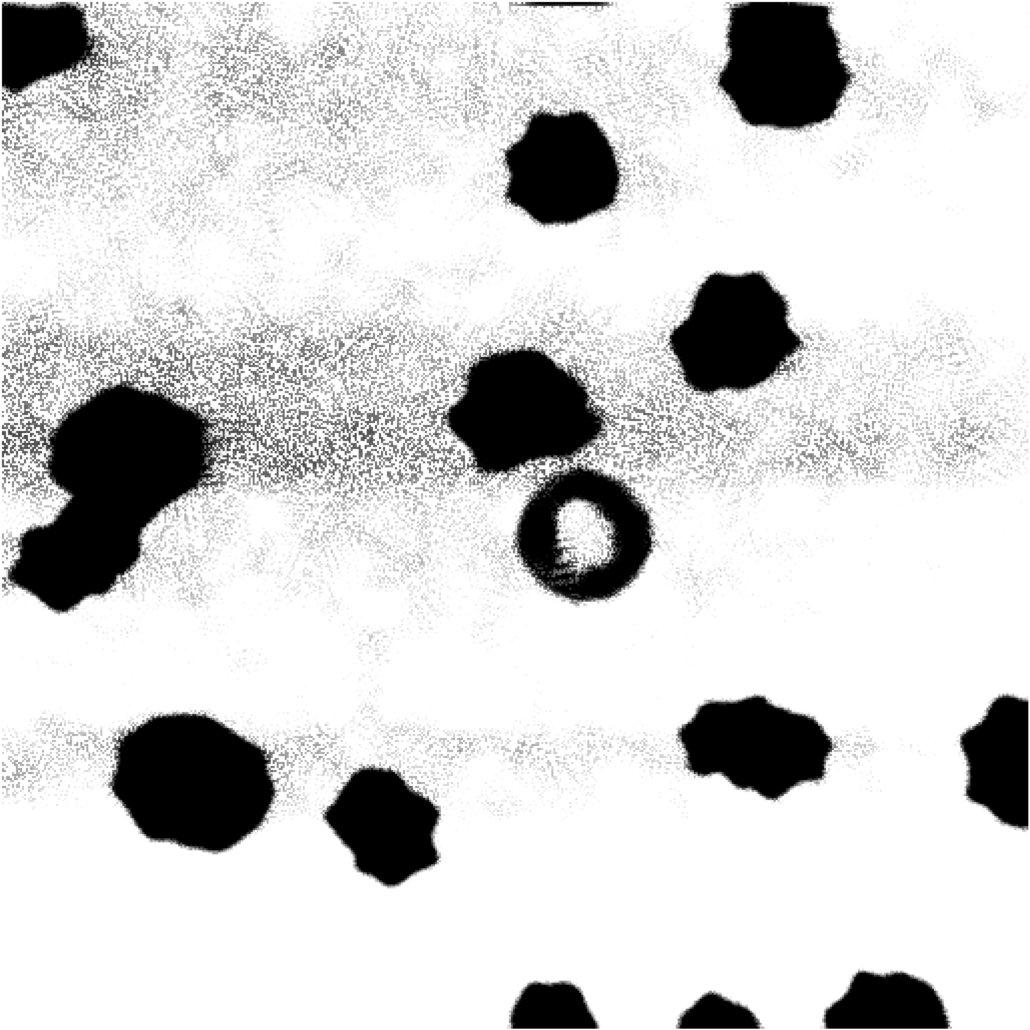}}
\subfigure[VNLEM, FSIM = $0.967$]{
\includegraphics[width=0.3\textwidth]{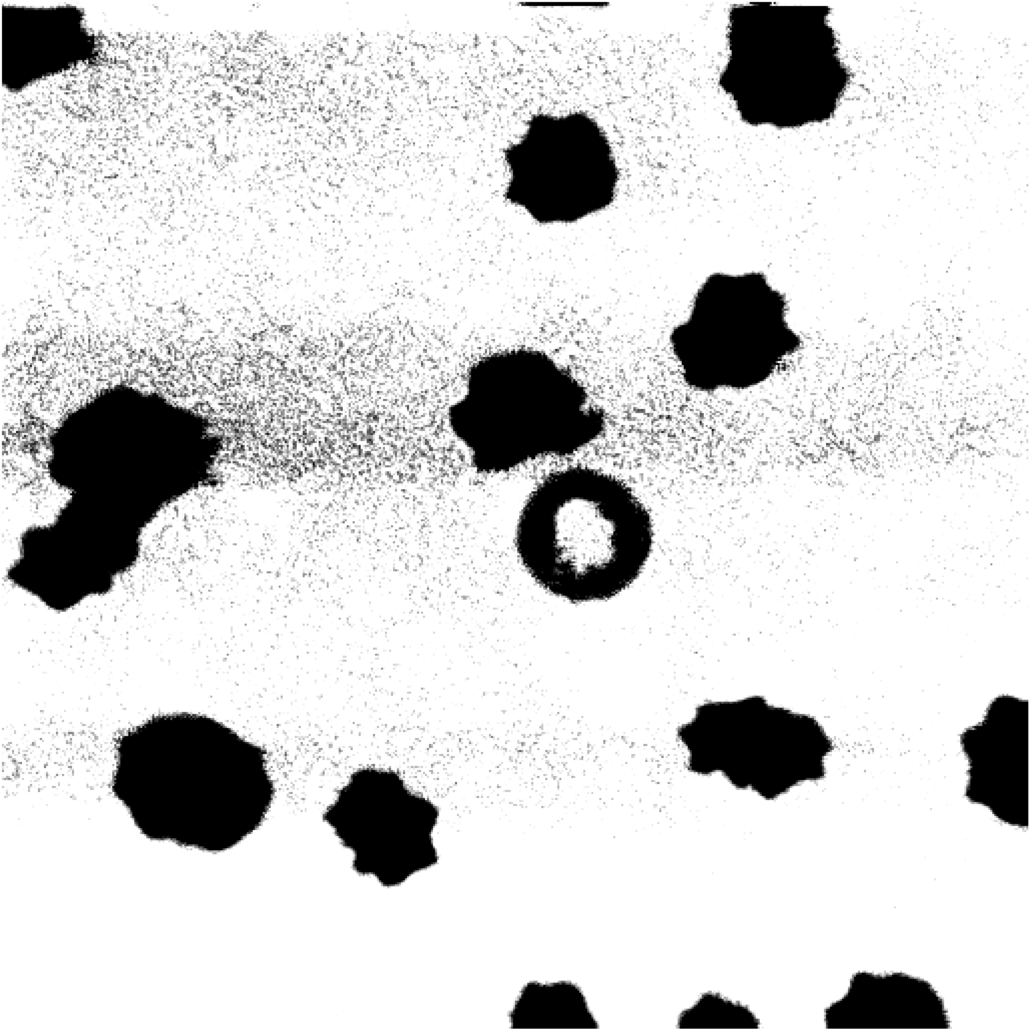}}
\subfigure[VNLEM-DD, FSIM = $0.97$]{
\includegraphics[width=0.3\textwidth]{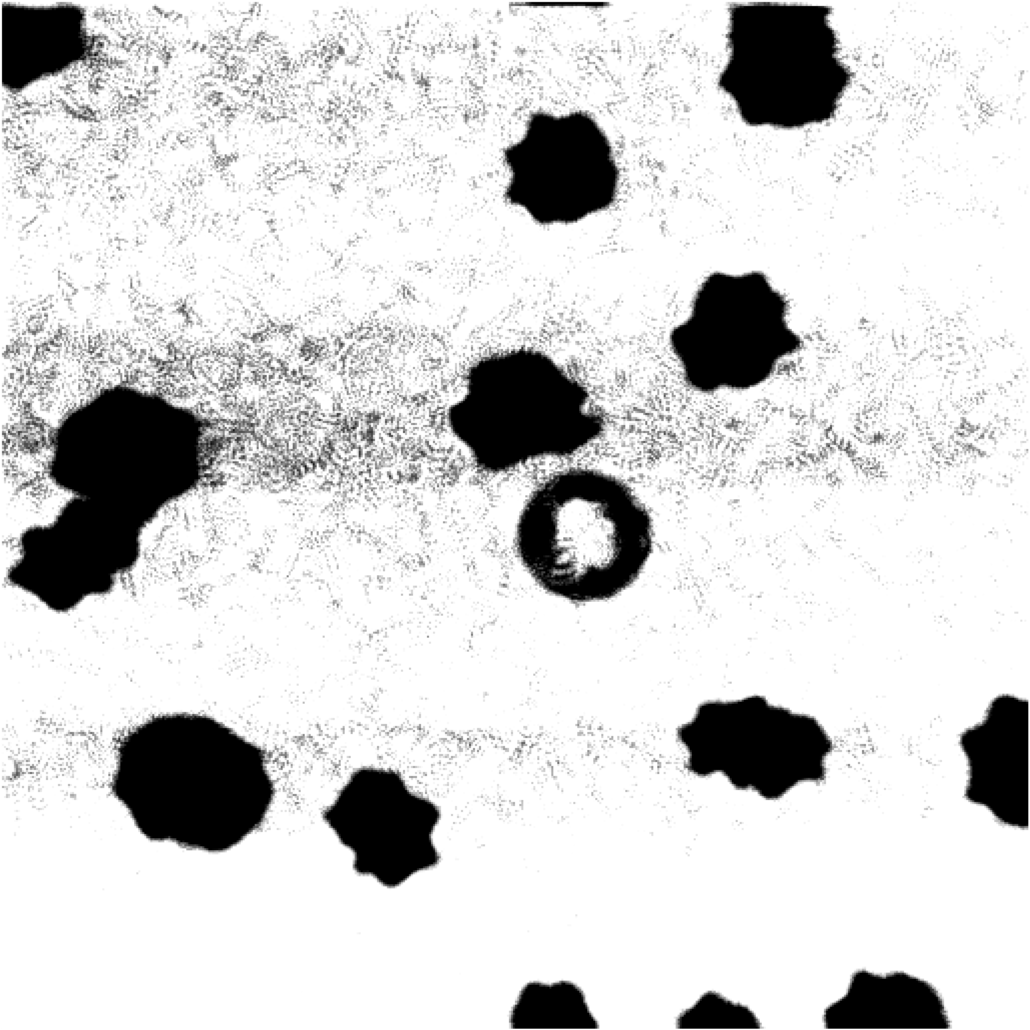}}

\caption{\label{cytometry}The cytometry image. Since the ``ground truth'' is not available for a comparison, we only show the FSIM for the quality evaluation purpose.}
\end{figure}

\section{Conclusion and Discussion}

In this work, we propose a fiber bundle structure to model the patch space, and take the fiber structure to generalize the commonly used NLEM algorithm to the VNLEM/VNLEM-DD algorithm. One main benefit of introducing the fiber structure is the dimension reduction. To speed up the VNLEM algorithm and stabilize the numerical rotation on a small patch, different numerical techniques are applied, including the search window and the SIFT features. The numerical simulation provides positive evidence of the potential of the proposed algorithm. In addition to providing the theoretical justification of how the VNLEM and the NLEM work, particularly why we could accurately find nearest neighbors from the noisy patches, we study the stability of the widely applied SIFT algorithm. Both theoretical results support how the proposed VNLEM algorithm works. The potential of the proposed model, algorithms, and the associated theory are statistically supported by a large image database composed of 1,361 images. 
Below, we discuss the limitations of the current work and several future works. 

First, the computational complexity needs to be further improved. Note that the main difference between the VNLEM and the NLEM algorithms is the chosen metric. In the NLEM, since the $L^2$ distance is chosen to evaluate the similarity of two patches, there are several fast algorithms available to evaluate the nearest neighbors. However, in the VNLEM, there does not exist a fast algorithm to determine the nearest neighbors with respect to the RID, to the best of our knowledge. Although we have delegated the problem of evaluating the RID distance to that of evaluating the SIFT distance, the numerical performance still has a significant room for improvement. 

Second, although the manifold model has been widely accepted in the field, and our algorithm is also based on the manifold structure, 
it is certainly arguable if in general a patch space could be well approximated by a manifold.
On one hand, we need to consider a more general model than the fiber bundle; on the other hand, for different problems we may want to better understand its associated manifold structure, if there is any. In other words, we might need different models, and hence different metrics, for different kinds of images. For example, while the RID helps reduce the dimension of the patch space of a ``structured'' image, its deterministic nature might render it unsuitable for analyzing a ``texture'' image, since the texture features are stochastic in nature. In short, it might be beneficial to take the metrics designed for the texture analysis into account. 
On the other hand, we could consider to segment the given noisy image into different categories, and run the VNLEM on each category. This segmentation step is related to the ``multi-manifold model'' considered in the literature \cite{Yang_Sun_Zhang:2011,Wang_Slavakis_Lerman:2014}, and could be understood as a generalization of the search window method used in this paper.

Third, we should consider different structures in the denoising procedure. In addition to taking the rotation group to fibrate the patch space, it is an intuitive generalization to further consider other groups, like the dilation group or even the general linear group. Also, while the current work focuses on grayscale images, the proposed algorithm has the potential to be generalized to colored images. In colored images, more structures, like the color space, will be taken into account. Furthermore, in practice we would expect to have more than one image from the practical problem. Under the assumption that the noise behavior is similar, it is of great interest to see if we could further improve the denoise performance by denoising multiple available images simultaneously.

Fourth, note that the proposed algorithm could be understood as aiming to reduce the error introduced to the clean image. However, it has been widely argued in the IQA society that simply reducing the error might not lead to the optimal result in all scenarios. It might be more important to take the human perception into account, if the images are meant to be watched by a human being. While the proposed VNLEM provides a satisfactory result by the FSIM evaluation, note that the ``features'' considered in the FSIM are not used in the algorithm. It is reasonable to expect that by taking these features into account, we could further improve the result.

Fifth, in this paper we focus only on comparing our algorithm with the NLEM to study the corresponding diffusion property and the geometric structure of the underlying patch space model. For the image denoising purpose, there are several other image denoising algorithms available in the field, and we will do a systematic comparison in a upcoming report. For example, while not specifically indicated, the widely used algorithm block-matching and 3-D filtering (BM3D) \cite{Dabov2007} and its generalizations, for example \cite{Katkovnik2010}, are also based on the patch space model. We could view the sparsity structure used in BM3D as a different way to design a ``metric'' to compare different patches. 

Last but not least, although we compared the algorithm on a big image database and reported the statistical significance, note that statistical significance does not imply practical significance. Particularly, the included images are not exhaustive. A more systematic comparison is thus needed. In practice, the overall performance might depend on the problems encountered, and the specific applications, like the cytometry problem, will be discussed in a upcoming research report.

\section*{Acknowledgement}

Hau-tieng Wu's research is partially supported by Sloan Research Fellow FR-2015-65363 and partially by Connaught New Researcher grant 498992. He would like to thank the valuable discussions with Professor Ingrid Daubechies and Professor Amit Singer. Chen-Yun Lin would like to thank Professor Chiahui Huang for her helpful discussions. The authors would like to thank Professor Tzu-Ming Liu for sharing the cytometry image.

\bibliography{reference}
\bibliographystyle{plain}

\appendix

\section{Diffusion Map}
\label{sec:DM}

To make the paper self-contained, we summarize the DM algorithm here. 
DM were initially introduced in~\cite{Coifman20065} as a means to extract feature and reduce the dimensionality. This mapping embeds the points from the original data set, which might be high-dimensional, into a low-dimensional Euclidean space so that the geometric properties of the original dataset are less distorted. The coordinates of the embedded points are derived from the eigenvectors and eigenvalues of the \textit{transition matrix} of the graph Laplacian associated with the data set. Below we summarize the embedding procedure. For a detailed algorithm description and a summary of the existing theorems describing the asymptotical behavior of DM, we refer the interested reader to, for example, the online supplementary of  \cite{ElKaroui_Wu:2015b}.

For a give point cloud $\mathcal{X}=\{x_i\}_{i=1}^n\subset \mathbb{R}^n$, we construct an affinity graph $(V, E,w)$, where $V := \{x_1, x_2, \ldots, x_n\}$, $E$ is the set of edges that is determined by the user, and $w:E\to \mathbb{R}^+$ is the affinity function defined by the user. Usually $w$ is defined as $w_{ij}=K(\|x_i-x_j\|)$ when $(i,j)\in E$, where $K$ is a chosen kernel, and $w_{ij}=0$ when $(i,j)\notin E$.
With the affinity graph, 
we have an equivalent expression of the affinity function as the $n \times n$ {\it affinity matrix}, defined as
\begin{align}
W_{i j} = \left \{
\begin{array}{ll}
w(i,j) & \mbox{ if } (x_i, x_j) \in E
\\
0 & \text{otherwise}
\end{array}
\right..
\end{align}
We then consider the {\it transition matrix} 
\be
\label{transitionMatrix}
A = D^{-1}W
\ee
where
$D$ is the \textit{degree matrix} defined as a $n \times n$ diagonal matrix defined as 
\be
D_{ii} = \sum_{j=1}^n W_{ij}.
\ee 
Note that though $A$ may not be symmetric in general, when $D$ is not singular, $A$ is similar to $D^{-1/2} W D^{-1 / 2}$ which is symmetric and thus diagonalizable. More specifically, there exists a diagonal matrix $\Lambda\in \mathbb{R}^{n\times n}$ and an orthogonal matrix $Q\in O(n)$ such that $D^{-1/2} W D^{-1 / 2} = Q \Lambda Q^T$, where $\Lambda = \text{diag}\{\lambda_1, \lambda_2, \ldots, \lambda_n\}$ is the matrix of eigenvalues such that $\lambda_1  \geq \lambda_2  \geq \ldots  \lambda_n \geq 0$ as $W \geq 0$.
Therefore, we can write $A$ as 
\begin{align}
A =  U \Lambda V^T
\end{align}
where $U = D^{-1/2} Q$ and $V = D^{1 / 2} Q$ and their column vectors are called right and left eigenvectors of $A$, respectively. In this work, we assume that $D$ is not singular. 
With the above preparation, we could define the DM and diffusion distance (DD). Take a diffusion time $t>0$. The DM $\Phi_t: V \rightarrow \RR^{m}$ is defined as
\be
\Phi^{(m)}_t(i) = (\lambda_2^t \phi_2(i), \lambda_3^t \phi_3(i), \ldots, \lambda_{m+1}^t \phi_{m+1}(i)),
\ee
where $\phi_1,\phi_2, \cdots, \phi_n$ are the column vectors of $U$ and $m\in \mathbb{N}$ is determined by the user. 
We could view $\Phi^{(m)}_t(i)$ as a new feature representing $x_i$.
The DD between $x_i$ and $x_j$ in $\mathcal{X}$ with diffusion time $t>0$ is then defined as 
\be
D^{(m)}_t(i,j) := \| \Phi^{(m)}_t(i) - \Phi^{(m)}_t(j)\|.
\ee
The DD could be view as a new metric on the dataset. It has been shown in \cite{ElKaroui:2010a,ElKaroui_Wu:2015b} that the DD is robust to ``big'' noise, and hence suitable for us to suppress the influence of inevitable noise in our denoising problem.

\section{Why could we approximate the patch space by a manifold?}\label{sec:manifold_model}

While we follow the convention and assume that the patch space could at least be well approximated by a manifold, this assumption certainly deserves more discussion. While this is not the focus of this paper, we mention that the same patch space idea could be applied to study the one dimensional signal; particularly the time series. For example, the same nonlocal median filter idea has been applied to decompose the fetal electrocardiogram signal from the single-lead maternal abdominal electrocardiogram signal \cite{Su_Wu:2016}. 

In this section, we provide a review of another viewpoint of ``getting a manifold'' inside the one dimensional time series. Precisely, we discuss a set of theorems provided in \cite{Takens:1981} and an associated embedding algorithm in the time series framework, which is exactly the patch space of the one dimensional image. The algorithm is well known as the lag map, and has been extensively applied in several fields, for example, the heart rate variability analysis in the bio-medical field. 

From now on, denote $M$ to be a $d$-dim compact manifold without boundary. For the
sake of self-containedness, we recall the following definitions.

\begin{defn}[Discrete time dynamics]
By a discrete time dynamics, we mean a diffeomorhism $\varphi:\, M\rightarrow M$
with the time evolution $i\mapsto\varphi^{i}(x_{0})$, $i\in\mathbb{N}$, where
$x_{0}$ is the starting status.
\end{defn}

\begin{defn}[Continuous time dynamics]
By a continuous time dynamics, we mean a smooth vector field $X\in\Gamma(M)$ with
the time evolution $t\mapsto\gamma_{t}(x_{0})$, where $\gamma_{t}$ is the integral
curve with respect to $X$ via $x_{0}$.
\end{defn}

To simplify the discussion, in both cases, we denote $\Phi_{t}(x_{0})$ to be the
time evolution with time $t\in\mathbb{N}$ or $\mathbb{R}$ with the starting point
$x_0$.

\begin{defn}[Observed time series] Let $\Phi_{t}(x_{0})$ be a dynamics on $M$. The
observation is modeled as a function $f:\, M\rightarrow\mathbb{R}$ and the observed
time series is $f(\Phi_{t}(x_{0}))$.
\end{defn}

The question we have interest in with respect to the patch space formation is that if we have an observed time series $f(\Phi_{t}(x_{0}))$, whether we can recover $M$. Moreover,
can we even recover the dynamics $\Phi_{t}$?
The positive answer and the precise statements are provided in the following two theorems. The proof of these theorems can be found in \cite{Takens:1981}, and the noise analysis could be found in \cite{Stark_Broomhead_Davies_Huke:1997}. Below, by generic, we mean an open dense subset of all possible $(\varphi,f)$. We mention that the theorems hold for non-compact manifolds if $f$ is proper.

\begin{thm}[discrete time dynamics] For a pair $(\varphi,f)$, $\varphi:\,
M^{d}\rightarrow M^{d}$ is the $C^{2}$-diffeomorphism and $f\in C^2$, it is generic
that the map $\Psi:\, M\rightarrow\mathbb{R}^{2d+1}$ given by
$$
\Psi:\, x\mapsto(f(x),\, f(\varphi(x)),\, f(\varphi^{2}(x))\ldots
f(\varphi^{2d}(x)))^{T}\in\mathbb{R}^{2d+1}
$$
is an embedding. 
\end{thm}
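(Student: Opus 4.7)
The plan is to establish the theorem by a transversality/genericity argument in the spirit of the Whitney embedding theorem. The goal is to show that the set of pairs $(\varphi, f)$ for which $\Psi$ is an embedding forms an open dense subset of $\mathrm{Diff}^2(M) \times C^2(M)$ in the $C^2$-topology. Since $M$ is compact and $2d+1 > 2d$, openness will follow from the usual stability of immersions and injective continuous maps on compact spaces under small perturbations, so the heart of the argument is density of the pairs for which $\Psi$ is simultaneously a one-to-one map and an immersion.

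First, I would establish a preliminary genericity statement on $\varphi$: for a generic $\varphi$, the periodic points of $\varphi$ with period $\leq 2d+1$ form a finite set of hyperbolic orbits. This is a classical Kupka--Smale-type fact. Fix such a $\varphi$ and work with perturbations of $f$. Away from these low-period periodic points, I would prove two transversality statements. For immersivity, note that $d\Psi_x$ is a $(2d+1) \times d$ matrix whose $k$-th row is $df_{\varphi^k(x)} \circ d\varphi^k_x$; since the points $x, \varphi(x), \ldots, \varphi^{2d}(x)$ are distinct, the rows depend on independent $1$-jets of $f$ at distinct base points, so the locus of $f$'s for which $d\Psi_x$ has rank less than $d$ at some $x$ has positive codimension and is avoided after a small perturbation. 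For injectivity, consider the evaluation map on $(M \times M) \setminus \Delta$ and apply multi-jet transversality: the ``bad'' set $\{(x,y) : \Psi(x) = \Psi(y)\}$ is cut out by $2d+1$ equations on a $2d$-dimensional space, so generically it is empty.

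The main obstacle is the handling of periodic points of period $p \leq 2d+1$. At such an $x$, the orbit segment $x, \varphi(x), \ldots, \varphi^{2d}(x)$ visits only $p$ distinct points, so the rows of $d\Psi_x$ are not independent and the coordinate values $f(\varphi^k(x))$ are constrained rather than free parameters. Here one must analyze the problem orbit-by-orbit: along each hyperbolic periodic orbit one imposes, by perturbing $f$ in a small neighborhood disjoint from the other periodic orbits, that the restricted map $\Psi|_{\text{orbit}}$ is an immersion and that its image does not coincide with that of any other periodic orbit or with any $\Psi(y)$ for $y$ aperiodic. Because there are only finitely many such orbits and they are hyperbolic (hence isolated), these perturbations can be carried out without disturbing the transversality secured on the rest of $M$. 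The $2d+1$ coordinates are essential precisely at this step: they supply enough independent parameters to separate fixed points and short periodic orbits from each other and from the rest of $M$.

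Finally, density of the embedding property follows by combining the generic choice of $\varphi$ and the successive perturbations of $f$ above, each of which preserves the properties already achieved. The full argument invokes Thom's (multi-)jet transversality theorem together with a careful accounting of codimensions, and the delicate bookkeeping near periodic points is what forces the dimension $2d+1$, marking the main departure from the classical Whitney embedding argument.
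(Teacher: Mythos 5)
The paper itself contains no proof of this statement: it is quoted from Takens' 1981 paper, and the proof is explicitly deferred to that reference. Your proposal must therefore be measured against the standard argument, and in outline it follows the same strategy -- first put $\varphi$ in general position, then perturb $f$, using (multi-)jet transversality for the immersion and injectivity conditions away from low-period periodic points, with a separate orbit-by-orbit analysis at those points. The architecture is right, and you correctly locate the periodic points as the place where the dimension $2d+1$ and the genericity in $\varphi$ are really needed.

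There is, however, a genuine gap at exactly that delicate step. You posit that the only genericity needed from $\varphi$ is that its periodic points of low period form finitely many \emph{hyperbolic} orbits, and you then claim that along each such orbit a perturbation of $f$ alone makes $\Psi$ an immersion. This fails. At a fixed point $x$ the rows of $d\Psi_x$ are $df_x\circ(d\varphi_x)^k$, $k=0,\dots,2d$, so they all lie in the cyclic subspace generated by the single covector $df_x$ under the adjoint of $d\varphi_x$; no choice of $df_x$ yields rank $d$ unless $d\varphi_x$ is non-derogatory (minimal polynomial equal to characteristic polynomial). A hyperbolic fixed point with $d\varphi_x=2I$ and $d\ge 2$ defeats your $f$-perturbation entirely. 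This is why Takens imposes the further generic condition -- achievable only by perturbing $\varphi$, not $f$ -- that the eigenvalues of $d\varphi$ at every periodic point of period at most $2d$ are distinct; with that in hand a generic $df_x$ is a cyclic covector and the immersion condition at periodic points can be arranged. A secondary point you should also address: in the injectivity step, pairs $(x,y)$ with $y=\varphi^j(x)$ for $1\le j\le 2d$ are not covered by the naive count of $2d+1$ independent equations, because $\Psi(x)$ and $\Psi(y)$ then share values of $f$ at common orbit points; this is precisely where the multijet transversality argument must be carried out with care rather than invoked wholesale.
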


\begin{thm}[Continuous time dynamics]
When $X\in C^{2}(\Gamma M)$ and $f\in C^{2}(M)$, it is generic that $\Psi:\,
M\rightarrow\mathbb{R}^{2d+1}$ given by
$$
\Psi:\, x\mapsto(f(x),\, f(\gamma_{1}(x)),\, f(\gamma_{2}(x))\ldots
f(\gamma_{2d}(x)))^{T}\in\mathbb{R}^{2d+1}
$$
is an embedding, where $\gamma_{t}(x)$ is the flow of $X$ of time $t$ via $x$.
\end{thm}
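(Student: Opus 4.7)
The plan is to follow Takens' 1981 strategy adapted to the continuous-time setting. A tempting reduction is to set $\varphi := \gamma_1$ and invoke the preceding discrete-time theorem on the pair $(\varphi, f)$, but this fails: time-$1$ maps of $C^2$ vector fields form a nowhere-dense subclass of $\mathrm{Diff}^2(M)$, so a statement that is generic in $\mathrm{Diff}^2(M)$ does not transfer to this thin slice. A direct transversality argument on the pair $(X, f) \in C^2(\Gamma M) \times C^2(M)$ is therefore required.

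First I would decompose the claim ``$\Psi$ is an embedding'' into (i) $\Psi$ is an immersion, and (ii) $\Psi$ is injective; compactness of $M$ then upgrades (i) and (ii) to homeomorphism onto the image. For (i), the differential at $x$ reads
\[
d\Psi_x = \bigl(\, d(f\circ\gamma_0)_x,\; d(f\circ\gamma_1)_x,\; \ldots,\; d(f\circ\gamma_{2d})_x \,\bigr)^T,
\]
so injectivity of $d\Psi_x$ on $T_xM$ is equivalent to the $2d+1$ covectors $\{d(f\circ\gamma_k)_x\}_{k=0}^{2d}$ spanning the $d$-dimensional cotangent space $T_x^{\ast}M$. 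Provided the sample points $\gamma_0(x), \gamma_1(x), \ldots, \gamma_{2d}(x)$ are pairwise distinct, the values $df_{\gamma_k(x)}$ are independent parameters as $f$ varies in $C^2(M)$, so a Thom jet-transversality argument applied to the parameterized family $\{\Psi_{X,f}\}$ rules out immersion failure on a codimension-$\geq 1$ subset of parameter space.

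For (ii), I would consider the coincidence map $(M\times M)\setminus\Delta_M \to \mathbb{R}^{2d+1}$ given by $(x,y)\mapsto \Psi(x)-\Psi(y)$, and demand that it avoid $0$. Since the target has dimension $2d+1$ while the domain has dimension $2d$, transversality of the $f$-perturbed family to $\{0\}$ forces an empty preimage for generic $f$. Openness in the $C^2$ topology is standard, following from compactness of $M$ and continuous dependence of the $2$-jet of $\Psi$ on $(X,f)$; combining openness with density yields the asserted genericity.

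The main obstacle, and the reason this is not a one-line transversality exercise, is the behavior on closed orbits of low period. If $x$ lies on a periodic orbit of period $T$ with $kT \leq 2d$ for some positive integer $k$, then the sample points $\gamma_0(x), \ldots, \gamma_{2d}(x)$ cease to be pairwise distinct on $M$, so $df$ cannot be prescribed independently at all of them; both the immersion and the injectivity arguments above break down along such orbits. Following Takens, the fix is a two-stage perturbation: first perturb $X$ inside $C^2(\Gamma M)$ so that all closed orbits of period $\leq 2d$ are isolated, hyperbolic, and finite in number (a Kupka--Smale--type genericity step for flows); then perturb $f$ in a neighborhood of this finite list of orbits, reducing the immersion and injectivity requirements along each orbit to a finite-dimensional linear-algebra condition that is solvable by a small jet modification. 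Patching these orbit-by-orbit corrections with the generic transversality argument away from periodic points completes the density step and hence the theorem.
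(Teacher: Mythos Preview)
The paper does not actually prove this theorem: immediately before stating it, the authors write ``The proof of these theorems can be found in \cite{Takens:1981}.'' So there is no in-paper argument to compare against. Your proposal is a faithful sketch of Takens' original strategy (jet transversality for immersion and injectivity, with the Kupka--Smale perturbation handling low-period closed orbits), which is exactly what the paper defers to; in that sense your approach and the paper's cited source coincide.
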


These theorems tell us that we could embed the manifold into a $(2d+1)$ dimensional Euclidean space if we have access to all dynamical behaviors from all points on the manifold. 
However, in practice the above model and theorem cannot be applied directly. Indeed, for a given dynamical system, most of time we may only have one or few experiments that are sampled at discrete times; that is, we only have access to one or few $x\in M$. We thus ask the following question. Suppose we have the time series  
$$
\left\{ f(\Phi_{\ell \alpha}(x))\right\} _{\ell=0}^{N},
$$ 
where $x\in M$ is fixed and inaccessible to us, $\alpha>0$ is the sampling period, and $N\gg1$ is the number of samples, what can we do? We first give the following definition.

\begin{defn} 
The {\it positive limit set (PLS) of $x$} of a vector field $X\in C^{2}\left(\Gamma
M\right)$ is defined as
$$
L_{c}^{+}(x):=\left\{ x'\in M|\,\exists\, t_{i}\rightarrow\infty,\;
t_{i}\in\mathbb{R}\text{ such that }\gamma_{t_{i}}(x)\rightarrow x'\right\}
$$
and the PLS of $x$ of a diffeomorphism $\varphi:\, M\rightarrow M$ is defined as
$$
L_{d}^{+}(x):=\left\{ x'\in M|\,\exists\, n_{i}\in\mathbb{N}\rightarrow\infty,\text{
such that }\varphi^{n_{i}}\left(x\right)\rightarrow x'\right\}.
$$
\end{defn}

It turns out that in this case, we should know whether under generic assumptions the
topology and dynamics in the PLS of $x$ is determined by $\{ f(\Phi_{\ell
\alpha}(x))\} _{\ell=0}^{\infty}$. Precisely, we have the following theorem

\begin{thm}[Continuous dynamics with 1 trajectory]
Fix $x\in M$. When $X\in C^{2}(\Gamma M)$ with flow $\gamma_{t}$ passing $x$, then
there exists a residual subset $C_{X,x}\subset\mathbb{R}^{+}$ such that for all
$\alpha\in C_{X,x}$ and diffeomorphism $\varphi:=\gamma_{\alpha}$, the PLS
$L_{c}^{+}(x)$ for flow $\gamma_{t}$ and $L_{d}^{+}(x)$ for $\varphi$ are the same;
that is, for all $\alpha\in C_{X,x}$ and for all $q\in L_{c}^{+}(p)$, there exists
$n_{i}\in\mathbb{N}\rightarrow\infty$ such that $\varphi^{n_{i}}(x)\rightarrow q$.
\end{thm}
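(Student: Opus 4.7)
The plan is to verify the two inclusions $L_c^+(x)\supseteq L_d^+(x)$ and $L_c^+(x)\subseteq L_d^+(x)$ separately. The first is unconditional: if $n_i\to\infty$ in $\mathbb{N}$ with $\varphi^{n_i}(x)=\gamma_{n_i\alpha}(x)\to q$, then the times $t_i:=n_i\alpha\to\infty$ witness $q\in L_c^+(x)$. So all the work goes into the second inclusion, which the plan is to handle by a Baire category argument in the parameter $\alpha\in\mathbb{R}^+$.

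Fix a countable basis $\{U_k\}_{k\in\mathbb{N}}$ of open sets of $M$, and let $\mathcal{K}:=\{k:U_k\cap L_c^+(x)\neq\emptyset\}$. For $k\in\mathcal{K}$, set $A_k:=\{t>0:\gamma_t(x)\in U_k\}$; because the continuous orbit accumulates in $U_k$ arbitrarily late, $A_k$ is an unbounded open subset of $(0,\infty)$. For $N\in\mathbb{N}$ define
\[
G_{k,N}:=\{\alpha>0:\exists n\geq N,\; n\alpha\in A_k\}=\bigcup_{n\geq N}\tfrac{1}{n}A_k,
\]
which is open as a union of open sets. The heart of the argument is a purely elementary density claim: given any interval $(c,d)\subset(0,\infty)$, choose $n_0:=\max\{N,\lceil c/(d-c)\rceil\}$; for $n\geq n_0$ the dilated intervals $(nc,nd)$ and $((n+1)c,(n+1)d)$ overlap, so their union covers $(n_0c,\infty)$. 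Since $A_k$ is unbounded, it meets this union, forcing $A_k\cap(nc,nd)\neq\emptyset$ for some $n\geq n_0$; equivalently $(c,d)\cap\tfrac{1}{n}A_k\neq\emptyset$, which shows $G_{k,N}$ is dense in $(0,\infty)$.

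Each $G_k:=\bigcap_N G_{k,N}$ is therefore a dense $G_\delta$, and by Baire's theorem the countable intersection $C_{X,x}:=\bigcap_{k\in\mathcal{K}} G_k$ is residual in $(0,\infty)$. To close the argument, take $\alpha\in C_{X,x}$ and $q\in L_c^+(x)$; for any open neighborhood $V\ni q$, pick a basis element $U_k\subseteq V$ containing $q$, which automatically lies in $\mathcal{K}$. Then $\alpha\in G_k$ supplies infinitely many $n$ with $\varphi^n(x)\in U_k\subseteq V$, and a standard diagonalization over a shrinking neighborhood basis at $q$ extracts $n_i\to\infty$ with $\varphi^{n_i}(x)\to q$, placing $q$ in $L_d^+(x)$.

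I expect the main obstacle to be finding the right parametrization that isolates the analytical content. A naive attempt would attach $\alpha$ to each individual target point $q$ or to each individual period of a periodic orbit inside $L_c^+(x)$ — which looks hopeless because $L_c^+(x)$ may be uncountable and may contain entire recurrent sets. The trick is to replace points of $L_c^+(x)$ by a countable basis of open sets hitting $L_c^+(x)$, so that the residual set $C_{X,x}$ needs to satisfy only countably many conditions; once that reformulation is made, the rest reduces to the purely geometric observation that the family $\{(nc,nd)\}_{n\geq n_0}$ eventually tiles a half-line and therefore must intersect every unbounded set.
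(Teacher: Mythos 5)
Your proof is correct, but there is nothing in the paper to compare it against: the authors state this theorem in Appendix~\ref{sec:manifold_model} as part of a review of Takens-type reconstruction results and explicitly defer all proofs to \cite{Takens:1981}, so you have supplied a self-contained argument where the paper offers only a citation. Your structure is the right one. The inclusion $L_d^+(x)\subseteq L_c^+(x)$ is indeed unconditional (discrete sampling times are a subset of continuous times), and the entire genericity content sits in the converse. Your key organizational move --- replacing the (possibly uncountable) set of targets $q\in L_c^+(x)$ by a countable basis of open sets meeting $L_c^+(x)$, so that $C_{X,x}$ need only satisfy countably many open dense conditions --- is exactly what makes Baire applicable, and it is the step a naive attempt would miss. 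The analytic core is also sound: $A_k=\{t>0:\gamma_t(x)\in U_k\}$ is open by continuity of the flow (complete because $M$ is compact) and unbounded because $U_k$ meets the positive limit set, and your dilation lemma shows $\bigcup_{n\ge n_0}(nc,nd)$ covers a half-line so that the unbounded set $A_k$ must meet some $(nc,nd)$ with $n\ge N$, making each $G_{k,N}$ open and dense. The only blemish is cosmetic: with $n_0=\lceil c/(d-c)\rceil$ the consecutive dilates merely abut (rather than overlap) when $c/(d-c)$ is an integer, so the union can miss the discrete set of abutment points $\{nd\}$; take $n_0$ one larger, or note that the \emph{open} unbounded set $A_k$ cannot be contained in a discrete set. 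The final diagonalization over a shrinking neighborhood basis of $q$ is standard and correctly deployed. In short: complete and correct, and arguably a useful addition, since the paper itself proves nothing here.
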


This theorem leads to the following corollary, which is what we need to analyze the time series. 

\begin{cor}
Take $x\in M$, generic $X\in C^{2}(\Gamma M)$ and $f\in C^{2}(M)$, and
$a\in\mathbb{R}^{+}$ satisfying generic conditions depending on $X$ and $x$. Denote
the set
\[
\mathcal{P}:=\left\{f(\gamma_{k\alpha}(x)),\,f(\gamma_{k\alpha}(x)),\ldots,f(\gamma_{(k+2d)\alpha}(x))
\right\}_{k=0}^\infty.
\] 
Then there exists a smooth embedding of $M$ into $\RR^{2d+1}$ mapping PLS
$L_{c}^{+}$ bijectively to the set $\mathcal{P}$.
\end{cor}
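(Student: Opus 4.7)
The plan is to chain together the three results immediately preceding the corollary: the discrete-time Takens embedding theorem, the continuous-time embedding theorem, and the theorem identifying the two PLS's under a generic time step. None of the ingredients need to be reproved; the job is to show they can be combined on a single orbit.

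First I would pick the sampling period. By the ``Continuous dynamics with 1 trajectory'' theorem, for generic $X$ and any $x\in M$ there is a residual set $C_{X,x}\subset\mathbb{R}^+$ such that, for every $\alpha\in C_{X,x}$, the diffeomorphism $\varphi:=\gamma_\alpha$ satisfies $L_d^+(x)=L_c^+(x)$. Fix any such $\alpha$; this is the genericity condition on $a\in\mathbb{R}^+$ in the statement. With this choice, the problem reduces from flowing in continuous time to iterating the single diffeomorphism $\varphi$ starting from $x$, and the orbit $\{\varphi^k(x)\}_{k\ge 0}=\{\gamma_{k\alpha}(x)\}_{k\ge 0}$ has the same positive limit set as the continuous trajectory.

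Next I would produce the embedding. By the discrete-time Takens theorem applied to the pair $(\varphi,f)$, for generic $(\varphi,f)$ the delay map
\[
\Psi:x'\mapsto\bigl(f(x'),\,f(\varphi(x')),\,f(\varphi^2(x')),\,\ldots,\,f(\varphi^{2d}(x'))\bigr)^T\in\mathbb{R}^{2d+1}
\]
is a smooth embedding of the compact $d$-manifold $M$. Evaluating $\Psi$ along the orbit of $x$ gives $\Psi(\varphi^k(x))=\bigl(f(\gamma_{k\alpha}(x)),\,f(\gamma_{(k+1)\alpha}(x)),\,\ldots,\,f(\gamma_{(k+2d)\alpha}(x))\bigr)$, which is precisely the $k$-th element of $\mathcal{P}$. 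Since $\Psi$ is injective, it maps the orbit $\{\varphi^k(x)\}_{k\ge 0}$ bijectively onto $\mathcal{P}$. Because $\Psi$ is continuous on the compact manifold $M$, it maps accumulation points to accumulation points, so $\Psi(L_d^+(x))$ coincides with the set of accumulation points of $\mathcal{P}$ in $\mathbb{R}^{2d+1}$, and injectivity of $\Psi$ upgrades this to a bijection. Combining with $L_c^+(x)=L_d^+(x)$ from the first step yields the claimed bijective correspondence between $L_c^+(x)$ and (the closure of) $\mathcal{P}$.

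The step I expect to be the main obstacle is the simultaneous handling of the three genericity conditions. The discrete-time theorem asks for generic $(\varphi,f)$, but here $\varphi$ is constrained to lie in the one-parameter family $\{\gamma_\alpha\}_{\alpha>0}$ determined by $X$, so genericity in the abstract pair $(\varphi,f)$ does not directly give genericity in $\alpha$. One has to argue, following Takens, that for generic $X$ and $f$, the set of $\alpha$ for which (i) $\alpha\in C_{X,x}$ and (ii) the delay map built from $\gamma_\alpha$ is an embedding is still residual in $\mathbb{R}^+$; this requires a transversality/parameter-dependent genericity argument ruling out the countably many bad $\alpha$'s at which periodic orbits of $\gamma_\alpha$ of period $\le 2d+1$ appear or at which the delay map fails to be an immersion. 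Once this residual intersection is shown to be nonempty, the rest of the proof is the assembly described above.
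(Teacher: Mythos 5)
Your proposal is correct and is essentially the argument the paper intends: the corollary is stated as an immediate consequence of the three preceding theorems (the paper gives no explicit proof), and your assembly — choose $\alpha$ in the residual set $C_{X,x}$ so that $L_d^+(x)=L_c^+(x)$ for $\varphi=\gamma_\alpha$, then apply the delay-map embedding theorem and push the orbit and its limit set forward through the homeomorphism $\Psi$ — is exactly that derivation, including the sensible reading of the (typo-ridden) definition of $\mathcal{P}$ and of ``bijectively to $\mathcal{P}$'' as onto its accumulation set. The genericity-in-$\alpha$ subtlety you flag is real but is already covered by the paper's continuous-time theorem (which asserts the delay map built from the time-$1$ flow samples is generically an embedding; a general $\alpha$ is absorbed by rescaling $X$), so it does not leave a gap.
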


While the above model and theorems work well for the one dimensional ``image'' or time series, to the best of our knowledge, there is no parallel theorem for the higher dimensional statement.
In the image processing setup we have interest and the patch space, we could parallel the above setup by viewing an image as an observation of a random field; that is, the temporal one-dimensional axis in the above theorems is replaced by the spatial two dimensional ``time''. Precisely, given a random field defined on $M$, an image could be viewed as an observation of the random field on $M$. 
Now, the patch space could be viewed as a ``two dimensional lag map'' defined on the observation, and we would expect that for a suitably chosen patch size, the patch space could be well approximated by a manifold diffeomorphic to $M$. However, it is not clear at this moment how to justify this statement. We thus conjecture that if an image is generated by this an observation process, then the patch space could be well modeled by a manifold.

\end{document}